\newtheorem{axiom}{Axiom}
\newtheorem{theorem}{Theorem}[section]
\newtheorem{proposition}[theorem]{Proposition}
\newtheorem{definition}[theorem]{Definition}
\newtheorem{lemma}[theorem]{Lemma}
\newtheorem{example}[theorem]{Example}
\newtheorem{corollary}[theorem]{Corollary}
\newtheorem{remark}[theorem]{Remark}
\def\Bbox{
{\unskip\nobreak\hfil\penalty50
\hskip1em\hbox{}\nobreak\hfil{\lower .5pt \hbox{$\Box$}}
\parfillskip=0pt \finalhyphendemerits=0 \par}
}
\def\eop{
\ifmmode {\hbox{\Bbox}} \else \Bbox \fi
}
\newif\ifSuppressEndOfProof\SuppressEndOfProoffalse
\newcommand{\mysim}{\sim\!}
\newcommand{\lord}{\kappa}
\newcommand{\pr}{\mathit{pr}}
\newcommand{\order}{order}
\begin{document}


\title{A Fixed Point Theorem for Non-Monotonic Functions\thanks{This work is
being supported by the Greek General
Secretariat for Research and Technology, the National Development
Agency of Hungary, and the European Commission (European Regional
Development Fund) under a Greek-Hungarian intergovernmental programme of Scientific and Technological collaboration. Project
title: ``Extensions and Applications of Fixed Point Theory for Non-Monotonic Formalisms''.
The first author was also supported by grant no. ANN 110883 from the National Foundation of Hungary
for Scientific Research.}}

\author{Zolt\'{a}n \'{E}sik\\
Department of Computer Science\\
University of Szeged\\
Szeged, Hungary\\
e-mail: \textsf{\vspace{+0.3cm}ze@inf.u-szeged.hu}\\
\and
Panos Rondogiannis\\
Department of Informatics \& Telecommunications\\
University of Athens\\
Athens, Greece\\
e-mail: \textsf{\vspace{+0.3cm}prondo@di.uoa.gr}\\
}

\maketitle
\thispagestyle{empty}
\begin{abstract}
We present a fixed point theorem for a class of (potentially) non-monotonic functions
over specially structured complete lattices. The theorem has as a special case the
Knaster-Tarski fixed point theorem when restricted to the case of monotonic functions
and Kleene's theorem when the functions are additionally continuous. From the practical
side, the theorem has direct applications in the semantics of negation in logic programming.
In particular, it leads to a more direct and elegant proof of the least fixed point
result of~\cite{RW05}. Moreover, the theorem appears to have potential for possible
applications outside the logic programming domain.

\vspace{0.2cm}

\noindent
{\bf Keywords:} Fixed Point Theory, Non-Monotonicity, Semantics of Logic Programming.

\end{abstract}

\date{}

\section{Introduction}
The problem of {\em negation-as-failure}~\cite{AB94,Fit02} in logic programming has received
considerable attention for more than three decades. This research area has proven to be
a quite fruitful one, offering results that range from the very practical to the
very theoretical. On the practical side, negation-as-failure is nowadays used in various
areas of Computer Science (such as in databases, artificial intelligence, and so on).
On the more theoretical side, the study of negation-as-failure has triggered the deeper study
of the nature and repercussions of non-monotonicity in Computer Science. In particular, the study
of the {\em meaning} of logic programs with negation has made evident the necessity of a
fixed point theory for non-monotonic functions.

The fixed point semantics of classical logic programs (ie., programs without negation in the bodies of rules)
was developed by van Emden and Kowalski~\cite{vEK76} and is based on classical fixed point theory
(in particular on the least fixed point theorem of Kleene). However, if negation is introduced
in logic programs, the traditional tools of fixed point theory are no longer applicable due to the
non-monotonicity of the resulting formalism. A crucial step in the study of logic programs with negation was the
introduction of the {\em well-founded semantics}~\cite{vGRS91} which employs a three-valued logic
in order to capture the meaning of these programs. It has been demonstrated that every such program possesses a
minimal three-valued
model which can be constructed as the least fixed point (with respect to the so-called Fitting ordering~\cite{Fit02})
of an appropriate operator associated with the program. The well-founded approach triggered an increased interest in
the study of non-monotonic functions. Such a study aims at developing an abstract fixed point theory
of non-monotonicity which will have diverse applications in various disciplines and research areas.
Results in this direction have been reported in~\cite{DMT00,DMT04,VGD06}. A detailed account of these results
and their relationship with the work developed in this paper, will be given in Section~\ref{related-work-section}.
As a general statement we can say that the existing results indicate that non-monotonic fixed point theory
is a deep area of research that certainly deserves further investigation.

The purpose of the present paper is to develop a novel fixed point theory for an interesting class of
non-monotonic functions. Our motivation comes again from the area of logic programming with negation.
However, our starting point is not one of the traditional constructions of the well-founded semantics
(such as for example~\cite{Prz89,vGel93}). Instead, we start from the {\em infinite-valued semantics}~\cite{RW05}
which is a relatively recent construction that was developed in order to give logical justification to the
well-founded approach. In the infinite-valued semantics the meaning of logic programs with negation is
expressed through the use of an infinite-valued logic. Actually, as it is demonstrated in~\cite{RW05},
every logic program with negation has a unique {\em minimum infinite-valued} model which is
the {\em least fixed point} of an operator with respect to an ordering relation. This minimum
model result extends the well-known minimum model theorem that holds for definite logic programs.
Moreover, the infinite-valued construction is compatible with the well-founded semantics since,
as shown in~\cite{RW05}, if we restrict the minimum infinite-valued model to a three-valued logic,
we get the well-founded model. It is therefore natural to wonder whether the infinite-valued semantics
can form the basis for a novel fixed point theory of non-monotonicity.

In order to develop such a fixed point theory, we keep the essence of the set-theoretic
constructions of~\cite{RW05} but abstract away from all the logic programming related issues.
In particular, instead of studying the set of interpretations of logic programs we consider abstract sets
that possess specific properties. Moreover, instead of focusing on functions from interpretations to
interpretations, we consider functions from abstract sets to abstract sets. More specifically,
our starting point is a complete lattice $(L,\leq)$ equipped with a family of preorderings
indexed by ordinals that give rise to an ordering relation $\sqsubseteq$. We demonstrate that
if the preorderings over $L$ obey certain simple and natural axioms, then the structure $(L,\sqsubseteq)$ is
also a complete lattice. We then prove that a large class of functions $f:L\rightarrow L$ which may not be
monotonic with respect to $\sqsubseteq$, possess a least fixed point with respect to $\sqsubseteq$. Moreover,
we demonstrate that our theorem generalizes both the Knaster-Tarski and the Kleene fixed point theorems (when
$f$ is monotonic or continuous respectively).

The contributions of the present work can be summarized as follows:
\begin{itemize}
\item We develop a fixed point theorem for a class of (potentially) non-monotonic functions
      over specially structured complete lattices. The structure of our lattices stems from
      a simple set of axioms that the corresponding ordering relations have to obey. The proposed
      fixed point theorem appears to be quite general, since, apart from being applicable to a large
      class of non-monotonic functions, it also generalizes well-known fixed point theorems for
      monotonic functions.

\item We demonstrate the versatility of the proposed theorem by deriving a much shorter and
      cleaner proof of the main theorem of~\cite{RW05}. Actually, we demonstrate a much stronger
      result which may be applicable to richer extensions of logic programming.

\item We argue that the proposed theorem may be applicable to other areas apart
      from logic programming. In particular, we demonstrate that the axioms on which the
      fixed point theorem is based, have a variety of other models apart from the set of
      interpretations of logic programs. This fact additionally advocates the generality
      of the proposed approach.
\end{itemize}
The rest of the paper is organized as follows: Section~\ref{infinite-valued-section} introduces the infinite-valued
approach which motivated the present work. This introduction to the infinite-valued approach facilitates
the understanding of the more abstract material of the subsequent sections; moreover, as we are going
to see, the infinite-valued approach will eventually benefit from the abstract setting that will be
developed in the paper. Section~\ref{axioms-section} introduces the partially ordered sets that will be the
objects of our study. Every such set is equipped with an ordering relation whose construction obeys
four simple axioms. The main properties of these sets are investigated. In Section~\ref{lattice-section}
it is demonstrated that every partially ordered set whose ordering relation satisfies the axioms
of Section~\ref{axioms-section}, is a complete lattice. Section~\ref{models-section} presents certain
complete lattices that satisfy the proposed axioms. Section~\ref{fixed-point-section} develops
the novel fixed point theorem for functions defined over the specially structured complete lattices
introduced in the preceding sections. Section~\ref{a-continuous-functions} demonstrates a large class
of functions over which the new fixed point theorem is applicable. As it turns out, the immediate consequence
operator for logic programs with negation falls into this class. In this way we obtain the main result
of~\cite{RW05} as a special case of a much more general theorem. Section~\ref{related-work-section} provides
a comparison with related work and Section~\ref{conclusions-section} concludes the paper with pointers to future work.

In the following, we assume familiarity with the basic notions regarding logic programming (such
as for example~\cite{Llo87}) and of partially ordered sets and particularly lattices
(such as for example~\cite{DP02}).

\section{An Overview of the Infinite-Valued Approach}\label{infinite-valued-section}
In this section we provide the basic notions and definitions behind the infinite-valued approach;
our presentation mostly follows~\cite{RW05}. Some standard technical terminology regarding logic programming
(such as ``atoms'', ``literals'', ``head/body of a rule'', ``ground instance'', ``Herbrand Base'', and so on),
will be used without further explanations (see~\cite{Llo87} for a basic introduction).
\begin{definition}
A (first-order) {\em normal program rule} is a rule with an atom as head and a
conjunction of literals as body. A (first-order) {\em normal logic program} is a
finite set of normal program rules.
\end{definition}

We follow a common practice which dictates that instead of studying finite first-order
logic programs it is more convenient to study their, possibly infinite,
{\em ground instantiations}~\cite{Fit02}:
\begin{definition}
If $P$ is a normal logic program, its associated {\em ground instantiation} $P^*$ is constructed
as follows: first, put in $P^*$ all ground instances of members of $P$; second, if
a rule $A \leftarrow$ with empty body occurs in $P^*$, replace it with $A \leftarrow \mbox{\tt true}$;
finally, if the ground atom $A$ is not the head of any member of $P^*$, add
$A \leftarrow \mbox{\tt false}$.
\end{definition}

Notice that, by construction, $P^*$ is a propositional program that has a possibly infinite
but countable number of rules (since the Herbrand Base of a normal logic program is countable).
To simplify our presentation, in the rest of the paper we will not talk explicitly about ``the
ground instantiation $P^*$ of a program $P$''. Instead, we will assume that we study programs
that are propositional and have a countable number of rules. Since we will be dealing
with propositional programs, we will often talk about ``propositional atoms'' and ``propositional
literals'' that appear in the rules of our programs.

The basic idea behind the infinite-valued approach is that in order to obtain a
minimum model semantics for logic programs with negation, it is necessary to consider
a refined multiple-valued logic which will allow the meaning of negation-as-failure to
be expressed properly. Let $\Omega$ denote the first uncountable ordinal.
Then, the logic of~\cite{RW05} contains one $F_{\alpha}$ and one $T_{\alpha}$ for each
countable ordinal $\alpha$ (ie., for all $\alpha<\Omega$), and also an intermediate truth value
denoted by 0. The ordering of the truth values is as follows:
$$F_0 < F_1 < \cdots < F_{\omega} < \cdots < F_{\alpha} < \cdots < 0 <
\cdots < T_\alpha < \cdots <T_\omega < \cdots <T_1 <T_0$$
Intuitively, $F_0$ and $T_0$ are the classical {\em False} and {\em True} values and
0 is the {\em undefined} value. The intuition behind the new values is that they express
different levels of truthfulness and falsity. Alternatively, the ordinal indices of truth
values can be shown to correspond to the level at which truth or falsity of atoms
is derived during the well-founded construction (see~\cite{RW05}[Proof outline of Theorem 7.6]).
In the following we denote by $V$ the set consisting of the above truth values.
\begin{definition}
The {\em order} of a truth value is defined as: $order(T_{\alpha}) = \alpha$,
$order(F_{\alpha}) = \alpha$ and $order(0) = +\infty$.
\end{definition}

Interpretations of programs are defined as follows:
\begin{definition}
An (infinite-valued) interpretation $I$ of a program $P$ is a function from the set of
propositional atoms of $P$ to $V$.
\end{definition}

Interpretations can be extended to apply to literals,
to conjunctions of literals and to the two constants {\tt true} and {\tt false}.
Of special interest is the way that negation is treated (and which intuitively
expresses the fact that the more times negation is iterated, the more it approaches
to the intermediate value 0).
\begin{definition}\label{interpretation}
Let $I$ be an interpretation of a program $P$. Then, $I$ can be extended
as follows:
\begin{itemize}
\item For every negative atom $\mysim p$ appearing in $P$:
      \[
             I(\mysim p) = \left\{
                             \begin{array}{ll}
                             T_{\alpha + 1} & \mbox{if $I(p) = F_\alpha$}\\
                             F_{\alpha + 1} & \mbox{if $I(p) = T_\alpha$}\\
                             0              & \mbox{if $I(p) = 0$}
                             \end{array}
                      \right.
      \]

      \vspace{-0.5cm}
\item For every conjunction of literals $l_1,\ldots,l_n$ appearing as
      the body of a rule in $P$:
      \[
       I(l_1,\ldots,l_n) = min\{I(l_1),\ldots,I(l_n)\}
      \]
\end{itemize}
Moreover, $I(\mbox{\tt true}) = T_0$ and $I(\mbox{\tt false}) = F_0$.
\end{definition}

The notion of satisfiability of a rule can be defined as follows:
\begin{definition}
Let $P$ be a program and $I$ an interpretation of $P$. Then, $I$ {\em satisfies}
a rule $p \leftarrow l_1,\ldots,l_n$ of $P$ if $I(p) \geq I(l_1,\ldots,l_n)$. Moreover,
$I$ is a {\em model} of $P$ if $I$ satisfies all rules of $P$.
\end{definition}

As it is customary in the theory of logic programming, it would be desirable if
we could prove that every program has a unique minimum model with respect to an
ordering relation. The first idea that comes to mind is to use the ``pointwise''
ordering of interpretations, namely $I\leq J$ iff for all propositional symbols $p$
it holds $I(p) \leq J(p)$. Actually, this is the ordering used in classical logic
programming in order to establish the minimum model property (see for example~\cite{Llo87}),
where the set of truth values is $\{\mathit{False},\mathit{True}\}$ with
$\mathit{False} < \mathit{True}$. As it can easily be checked, the
pointwise ordering of interpretations does not lead to a minimum model
result in the case of logic programs with negation. However, the pointwise
relation $\leq$ (more precisely, its abstract analogue) will assist us
in developing an abstract fixed point theorem for non-monotonic functions.

In order to obtain the minimum model property for logic programs with negation,
a more refined ordering on interpretations was defined in~\cite{RW05}.
This ordering was actually motivated by the fact that the construction of the
well-founded model~\cite{Prz89} proceeds in stages and therefore the atoms
that belong to the lower stages must (intuitively) be given a ``higher priority''
than those belonging to upper stages. We will need the following definition:
\begin{definition}
Let $P$ be a program, $I$ an interpretation of $P$ and $v\in V$.
Let $B_P$ be the set of all propositional symbols that appear in program $P$. Then
$I \parallel v = \{p \in B_P \mid I(p) = v \}$.
\end{definition}

We can now proceed to the definition of the ordering relations that are needed in order
to obtain the minimum model property:
\begin{definition}
Let $I$ and $J$ be interpretations of a given program $P$ and $\alpha$ be a countable
ordinal. We write $I=_{\alpha} J$, if for all $\beta \leq a$,
$I\parallel T_{\beta} = J\parallel T_{\beta}$
and $I\parallel F_{\beta} = J\parallel F_{\beta}$.
\end{definition}
\begin{definition}
Let $I$ and $J$ be interpretations of a given program $P$ and $\alpha$ be a countable ordinal. We write
$I\sqsubset_{\alpha} J$, if for all $\beta < a$, $I=_{\beta} J$ and
either  $I\parallel T_{\alpha} \subset J\parallel T_{\alpha}$ and $I\parallel F_{\alpha} \supseteq J\parallel F_{\alpha}$,
or $I\parallel T_{\alpha} \subseteq J\parallel T_{\alpha}$ and $I\parallel F_{\alpha} \supset J\parallel F_{\alpha}$.
We write $I \sqsubseteq_{\alpha} J$ if $I =_{\alpha} J$ or $I \sqsubset_{\alpha} J$.
\end{definition}
\begin{definition}
Let $I$ and $J$ be interpretations of a given program $P$. We write
$I\sqsubset J$, if there exists a countable ordinal $\alpha$ such that
$I \sqsubset_{\alpha} J$. We write $I\sqsubseteq J$ if either
$I = J$ or $I\sqsubset J$.
\end{definition}

It is easy to see that the relation $\sqsubseteq$ on the set of interpretations
of a given program, is a partial order (ie. it is reflexive, transitive and antisymmetric);
actually, as we will see in Section~\ref{lattice-section}, the set of interpretations of
a program equipped with the relation $\sqsubseteq$ forms a complete lattice.
On the other hand, for every countable ordinal $\alpha$, the relation $\sqsubseteq_{\alpha}$
is a preorder (ie. reflexive and transitive).

In order to study the fixed point semantics of the programs we consider,
we can easily define an immediate consequence operator $T_P$:
$$T_P(I)(p) = \bigvee\{I(l_1,\ldots,l_n) \mid (p \leftarrow l_1,\ldots,l_n)\in P\}$$
where $\bigvee$ is the obvious least upper bound operation on the set $V$.
Notice that, as discussed in~\cite{RW05}[Example 5.7], $T_P$ is not
in general monotonic with respect to $\sqsubseteq$. This fact makes the proof
of the following theorem nontrivial.

\begin{theorem}\cite{RW05}[Corollary 7.5, page 460]\label{least}
For every program $P$, $T_P$ has a least fixed point $M_P$ (with respect to $\sqsubseteq$).
\end{theorem}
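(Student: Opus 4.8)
The plan is to obtain Theorem~\ref{least} as a direct instance of the abstract fixed point theorem announced in the introduction and developed in Section~\ref{fixed-point-section}, rather than re-running the stage-by-stage transfinite construction of~\cite{RW05} by hand. Concretely, I would take $L$ to be the set of interpretations of $P$, let $\leq$ be the pointwise ordering and let $\sqsubseteq$, together with the ordinal-indexed family of preorderings $\sqsubseteq_{\alpha}$, be the refined ordering defined above, and then exhibit $T_P$ as a member of the class of (possibly non-monotonic) functions that the abstract theorem handles. Granting that framework, the existence of a least fixed point $M_P$ with respect to $\sqsubseteq$ is immediate; the work lies entirely in checking that the concrete data $(L,\leq,\sqsubseteq)$ and the operator $T_P$ satisfy the hypotheses.

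The first ingredient is that $(L,\sqsubseteq)$ is a complete lattice whose construction obeys the four axioms of Section~\ref{axioms-section}. This is supplied by the general completeness result of Section~\ref{lattice-section} together with Section~\ref{models-section}, which exhibits the interpretation lattice as a concrete model of the axioms, so here I would merely invoke those results and confirm that the $\sqsubseteq_{\alpha}$ on interpretations are exactly the family the axioms require: that $I =_{\alpha} J$ is the associated level-$\alpha$ equivalence and that the $\sqsubseteq_{\alpha}$ refine one another as $\alpha$ grows. This part is routine once the abstract machinery is in place.

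The substantive step, and the place I expect the real difficulty, is verifying that $T_P$ lies in the admissible function class of Section~\ref{a-continuous-functions}. Since $T_P$ is \emph{not} monotonic with respect to $\sqsubseteq$ (negation in the bodies is anti-monotonic, and the $\bigvee$ over rules with a common head mixes values of different orders), one cannot appeal to Knaster--Tarski directly. The key is instead to show that $T_P$ interacts correctly with the stratification: informally, that $T_P$ never disturbs the lower strata on which it depends, so that the $T_{\alpha}$ and $F_{\alpha}$ layers can be pinned down one order at a time. I would make this precise by establishing the relevant ``$\alpha$-continuity''/non-expansiveness property --- roughly, that if $I$ and $J$ agree on all strata up to order $\alpha$ (that is, $I =_{\alpha} J$), then $T_P(I)$ and $T_P(J)$ also agree up to that order --- by induction on the structure of bodies, treating in turn the three clauses of the extension of an interpretation: the value on a negative literal (where the order shifts by one but agreement is preserved), the $\min$ over a conjunction, and the least upper bound over the set of rules sharing a head. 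Once this property is in hand, membership in the function class follows and the abstract theorem of Section~\ref{fixed-point-section} delivers $M_P$; moreover the promised strengthening comes essentially for free, since nothing in the argument uses logic-programming specifics beyond the behaviour of $T_P$ on the strata.
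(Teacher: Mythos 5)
Your overall architecture is exactly the paper's: instantiate the abstract fixed point theorem (Theorem~\ref{main-fixed-point-theorem}) on the interpretation lattice, which Sections~\ref{axioms-section}--\ref{models-section} exhibit as a model of Axioms 1--4 (with $(L,\sqsubseteq)$ a complete lattice by Theorem~\ref{complete-lattice-theorem}), and then verify that $T_P$ lies in the admissible class of operators. The gap is in the condition you propose to verify for $T_P$. You state it as: if $I =_{\alpha} J$ (agreement on all strata up to order $\alpha$) then $T_P(I) =_{\alpha} T_P(J)$. That is the \emph{stratifiability} condition of Vennekens et al.\ discussed in Section~\ref{related-work-section}, and it is strictly weaker than what Theorem~\ref{main-fixed-point-theorem} actually demands, namely \emph{$\alpha$-monotonicity} for every $\alpha$: $I \sqsubseteq_{\alpha} J$ implies $T_P(I) \sqsubseteq_{\alpha} T_P(J)$. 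Preservation of the equivalence $=_{\alpha}$ says nothing about how $T_P$ acts on the strict part of the preorder at level $\alpha$, and that strict part is what drives the whole construction of Lemma~\ref{lem-fix2}: one iterates $f$ to build a $\sqsubseteq_{\alpha}$-increasing transfinite chain whose $\bigsqcup_{\alpha}$ pins down the level-$\alpha$ component, and without $\sqsubseteq_{\alpha}$-preservation this chain cannot be formed. To see that stratifiability alone cannot suffice, take $\lord = 2$ with $\sqsubseteq_0$ equal to $\leq$: then $=_0$ is equality, so every function is stratifiable, yet an arbitrary function on a complete lattice need not have any fixed point.

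The verification itself also needs more than the induction you sketch. The paper proceeds compositionally: it proves that the lattice $V$ of truth values satisfies Axioms 6 and 7 (closure of $\sqsubseteq_{\alpha}$ under arbitrary suprema and its interaction with $\bigsqcup_{\alpha}$ of $\omega$-chains), that $\wedge$ and $\mysim$ are $\alpha$-continuous (hence $\alpha$-monotonic), and then assembles $T_P$ via Corollary~\ref{a-continuous-corollary}. Your ``induction on the structure of bodies'' has the right shape, but the delicate case is the $\bigvee$ over the (possibly infinitely many) rules sharing a head: there you need Axiom 6 for $V$, i.e.\ that $x_j \sqsubseteq_{\alpha} y_j$ for all $j$ implies $\bigvee_j x_j \sqsubseteq_{\alpha} \bigvee_j y_j$, which requires a genuine case analysis on orders and is strictly stronger than the preservation of $=_{\alpha}$ by suprema (that is just Axiom 4). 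With the target condition corrected from stratifiability to $\alpha$-monotonicity (or the stronger $\alpha$-continuity the paper establishes), your plan coincides with the paper's derivation of Theorem~\ref{least}.
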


Actually, it is easy to show (see~\cite{RW05}) that $M_P$ is a model of $P$ (in fact, the least model of $P$ with
respect to $\sqsubseteq$), and therefore it can be taken as the intended meaning of $P$. Moreover,
$M_P$ is directly connected to the well-founded model of $P$, as the following theorem
from~\cite{RW05} suggests:
\begin{theorem}\cite{RW05}[Theorem 7.6, page 460]
Let $N_P$ be the interpretation that results from $M_P$ by collapsing
all true values to $\mathit{True}$ and all false values to $\mathit{False}$.
Then, $N_P$ is the well-founded model of $P$.
\end{theorem}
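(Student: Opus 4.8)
The plan is to establish a tight correspondence between the ordinal index $\alpha$ of the value that $M_P$ assigns to an atom and the stage at which that atom is first committed to true or to false in the stage-by-stage construction of the well-founded model; the theorem then follows by inspecting what the collapse does. First I would fix a stage-based presentation of the well-founded model --- for instance the iterated least fixed point construction of Przymusinski~\cite{Prz89} or the alternating fixed point of van Gelder~\cite{vGel93} --- writing $W_\alpha$ for the three-valued partial interpretation obtained after $\alpha$ stages. In such a presentation the well-founded model is $W=\bigcup_{\alpha}W_\alpha$, an atom is true (resp. false) in $W$ iff it is true (resp. false) in some $W_\alpha$, and the atoms never made definite are exactly the undefined ones.

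The heart of the argument is the following correspondence, which I would prove by transfinite induction on the countable ordinal $\alpha$: for every atom $p$,
\[
M_P(p)=T_\alpha \iff p \text{ is committed to true at stage } \alpha \text{ but at no earlier stage,}
\]
and dually $M_P(p)=F_\alpha$ iff $p$ is first committed to false at stage $\alpha$. To drive the induction I would use the fixed point equation $T_P(M_P)=M_P$ together with the definition of $T_P$: the atom $p$ receives value $T_\alpha$ precisely when $\bigvee\{M_P(l_1,\dots,l_n)\mid (p\leftarrow l_1,\dots,l_n)\in P\}=T_\alpha$, i.e. some rule body evaluates to $T_\alpha$ under $M_P$ and none evaluates to a strictly greater value. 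Since a body value is the $\min$ of its literal values, and negation raises the index by one (sending $F_\beta$ to $T_{\beta+1}$ and $T_\beta$ to $F_{\beta+1}$), a body attains the value $T_\alpha$ exactly when all of its literals are already settled at indices that feed into level $\alpha$ with the correct polarity; by the induction hypothesis this is precisely the condition for $p$ to be newly derivable as true at well-founded stage $\alpha$. The dual analysis matches value $F_\alpha$ with membership in the unfounded set freshly added at stage $\alpha$, and limit stages are handled by taking unions, using that the values assigned by $M_P$ at all indices below a limit ordinal are already fixed at earlier stages.

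Granting this correspondence, the conclusion is immediate. The collapse sends every $T_\alpha$ to $\mathit{True}$, every $F_\alpha$ to $\mathit{False}$, and leaves $0$ as the undefined value. Hence $N_P(p)=\mathit{True}$ iff $M_P(p)=T_\alpha$ for some $\alpha$ iff $p$ is true in some $W_\alpha$ iff $p$ is true in $W$; the parallel chain gives $N_P(p)=\mathit{False}$ iff $p$ is false in $W$; and the atoms retaining value $0$ under $M_P$ are exactly those never made definite, i.e. the undefined atoms of $W$. Thus $N_P$ agrees with the well-founded model on every atom.

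The main obstacle is the correspondence step, and within it the index bookkeeping that aligns the single ordinal $\alpha$ of the infinite-valued logic with the interleaved true/false phases of the well-founded construction. Two points are delicate: (i) the ``$+1$'' shift introduced by negation must be reconciled with the fact that a single well-founded stage simultaneously produces new true atoms and new false (unfounded) atoms, so one must check that the index increment and the stage increment match up, especially at limit ordinals where there is no predecessor; and (ii) one must prove both containments --- that every atom that is definitely true or false in $W$ is assigned some countable $T_\alpha$ or $F_\alpha$ by $M_P$ (so nothing definite is overlooked), and that every atom assigned $0$ by $M_P$ is genuinely undefined in $W$ (so nothing is spuriously collapsed). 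Establishing (i) and (ii) carefully is where essentially all of the effort resides.
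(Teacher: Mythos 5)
First, a point of scope: the paper you were given does not prove this statement. It is quoted from \cite{RW05} (Theorem 7.6 there) as background in Section 2, and the only hint the present paper offers about its proof is the earlier remark that the ordinal indices of truth values can be shown to correspond to the level at which truth or falsity of atoms is derived during the well-founded construction, with a pointer to the proof outline in \cite{RW05}. So there is no in-paper argument to compare yours against; the comparison has to be with \cite{RW05} itself, and measured against that your strategy --- a transfinite, stage-by-stage correspondence between the index $\alpha$ of $M_P(p)$ and the well-founded stage at which $p$ is decided --- is exactly the intended one.

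That said, what you have written is a road map rather than a proof. The correspondence lemma, which you yourself say carries ``essentially all of the effort,'' is asserted and motivated but never established, and the two delicate points you flag (reconciling the $+1$ shift of negation with the interleaved true/false phases of a single well-founded stage, including limit stages, and proving both containments so that nothing definite is overlooked and nothing undefined is spuriously collapsed) are precisely where the theorem lives; naming them does not discharge them. There is also a concrete methodological gap: you propose to ``drive the induction'' from the fixed-point equation $T_P(M_P)=M_P$, but that equation alone does not record at which stage a value first stabilizes, so it cannot by itself support an induction on stages. You would need to work with an explicit transfinite construction of $M_P$ --- the approximants $M_\alpha$ of \cite{RW05}, or equivalently the compatible sequence $(x_\alpha)_{\alpha<\lord}$ built in the proof of Theorem~\ref{main-fixed-point-theorem} of this paper, whose slice $x_\alpha$ fixes exactly the atoms of order $\leq\alpha$ --- and align those approximants with the well-founded stages $W_\alpha$ by a simultaneous induction. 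Until that alignment is actually carried out, the proposal identifies the right plan but does not constitute a proof.
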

\begin{example}\label{example-program}
Consider the program:
\[
\begin{array}{lll}
 {\tt p} & \leftarrow & \mysim {\tt q}\\
 {\tt q} & \leftarrow & \mysim {\tt r}\\
 {\tt s} & \leftarrow & {\tt p}\\
 {\tt s} & \leftarrow & \mysim {\tt s} \\
 {\tt r} & \leftarrow & \mbox{\tt false}
\end{array}
\]
One can easily compute the infinite-valued model of the program (see~\cite{RW05}[pages 454-455] for
details), which is equal to:
$$M_P = \{({\tt p},F_2),({\tt q},T_1),({\tt r},F_0),({\tt s},0)\}$$
Moreover, the well-founded model of the program is
$$N_P = \{({\tt p},\textit{False}),({\tt q},\textit{True}),({\tt r},\textit{False}),({\tt s},0)\}$$
and has been obtained from $M_P$ in the obvious way.
\end{example}

The proof of Theorem~\ref{least} was performed in~\cite{RW05} using techniques that were
specifically tailored to the case of logic programs. In the next sections we abstract away from
the issues regarding logic programming, and we obtain a general and abstract fixed point theorem
from which the above result follows in a very direct way.

\section{Axioms and their Consequences}\label{axioms-section}
In this section we keep the essence of the set-theoretic constructions presented in the
previous section but we abstract away from the logic programming related issues.
Our starting point is not anymore ``the set of interpretations'' but instead a complete
lattice $(L,\leq)$ equipped with a family of preorderings indexed by ordinals that give rise to an
ordering relation $\sqsubseteq$. Recall that in Section~\ref{infinite-valued-section} the relation $\leq$ corresponded
to the pointwise ordering of interpretations. However, in our new abstract setting our only
initial assumption is that $(L,\leq)$ is a complete lattice, ie., we have no specific initial knowledge
regarding the nature of the ordering $\leq$. Similarly, the relations $\sqsubseteq_{\alpha}$, $\sqsubset_{\alpha}$ and
so on, are not anymore relations over interpretations but over the elements of the
abstract set $L$ that obey specific axioms. Notice also that in our new setting the ordinals
$\alpha$ need not be countable; it suffices to assume that every $\alpha$ is less than a fixed ordinal
$\lord>0$.

The rest of this section is structured as follows. In Subsection~\ref{axioms-subsection} we
present four axioms that the relations over $L$ have to obey (in order for the fixed point
theorem that will be developed in Section~\ref{fixed-point-section} to be applicable). Subsection~\ref{some-consequences-subsection}
presents some easy consequences of the axioms. Subsection~\ref{slices} demonstrates that if $L$ obeys the
four axioms then every element $x$ of $L$ can be represented as the least upper bound of a set of ``simpler'' elements
of $L$ called the {\em slices} of $x$. Finally, Subsection~\ref{compatible-sequences} demonstrates that
one can also view the elements of $L$ as the least upper bounds of a special kind of sequences termed {\em compatible
sequences}. Recapitulating, the material in this section gives a characterization of the structure
of the sets that will be the objects of our investigation.

\subsection{The Axioms}\label{axioms-subsection}
Suppose that $(L,\leq)$ is a complete lattice in which the least upper bound
operation is denoted by $\bigvee$ and the least element is denoted by $\perp$.
Let $\lord >0$ be a fixed ordinal. We assume that for each ordinal $\alpha<\lord$, there exists
a preordering $\sqsubseteq_\alpha$ on $L$. We denote with $ =_\alpha$ the equivalence relation determined by $\sqsubseteq_\alpha$.
We define $x \sqsubset_\alpha y$ iff $x \sqsubseteq_\alpha y$ but $x =_\alpha y$
does not hold. Finally, we define $\sqsubset  =   \bigcup_{\alpha < \lord}  \sqsubset_\alpha$
and let $x \sqsubseteq y$ iff $x \sqsubset y$ or $x = y$.

Intuitively, one can think of elements of $L$ as consisting of ``components'',
one for every $\alpha < \lord$. Roughly speaking, the relations $\leq$ and $\sqsubseteq$
correspond to two different ways of comparing elements of $L$: the relation $\leq$ corresponds
to a ``conventional'' (for example ``pointwise'') way of comparing the elements of $L$; the
relation $\sqsubseteq$ corresponds to a ``lexicographic'' comparison, in which we start comparing
the 0-th level components, then the first level components, and so on, until we reach a decision.
When we write $x=_\alpha y$ we mean that $x$ and $y$ are equal for all components up to level $\alpha$.
When we write $x\sqsubseteq_\alpha y$ we mean that $x$ is equal to $y$ for all $\beta < \alpha$ and
it is either equal or smaller than $y$ at level $\alpha$. Finally, $x\sqsubseteq y$ means that
either $x=y$ or there exists some $\alpha$ such that $x$ and $y$ are equal in all components
less than $\alpha$ and $x$ is genuinely smaller than $y$ in the $\alpha$-th component. The axioms
that will be given shortly express these intuitions (and certain additional properties
of our ordering relations).  Notice that the ``components'' of an element of $L$ do not appear
explicitly in the axioms, but as we will see in Subsection~\ref{slices}, their existence is
implied by the axioms. Notice also that when reading the axioms it is useful to think
of the relations on interpretations introduced in Section~\ref{infinite-valued-section}.
Actually, at the end of the present subsection we will see that these relations satisfy the
proposed axioms and therefore they provide a model of our axioms. As we will see in
Section~\ref{models-section} other models of the axioms also exist.

The first two axioms state restrictions regarding the relations $\sqsubseteq_\alpha$ and $=_\alpha$:
\begin{framed}
\begin{axiom}\label{axiom1}
For all ordinals $\alpha < \beta < \lord$, $\sqsubseteq_\beta$ is included in $=_\alpha$.
\end{axiom}
\end{framed}
\begin{remark}
\label{remark0}
From the above axiom it follows that for all $\alpha < \beta$ the relation $=_\beta$ is
included in the relation $=_\alpha$.
\end{remark}
\begin{framed}
\begin{axiom}\label{axiom2}
$\bigcap_{\alpha<\lord} =_\alpha$ is the identity relation on $L$.
\end{axiom}
\end{framed}
Given an ordinal $\alpha < \lord$ and $x \in L$, define
\begin{eqnarray*}
(x]_\alpha &=& \{ y \in L : \forall \beta < \alpha\quad   x =_\beta y\}.
\end{eqnarray*}
Note that $x \in (x]_\alpha$ holds. Notice also that for every $x\in L$, $(x]_0=L$.
In the following axiom and also in the rest of the paper, $X \sqsubseteq_\alpha y$ should
be interpreted in the standard way, namely as ``for every $x\in X, x \sqsubseteq_\alpha y$''.
\begin{framed}
\begin{axiom}\label{axiom3}
For each $x\in L$, for every ordinal $\alpha < \lord$, and for any $X \subseteq (x]_\alpha$
there is some $y \in (x]_\alpha$ such that:
\begin{itemize}
\item $X \sqsubseteq_\alpha y$, and
\item for all $z \in (x]_\alpha$, if $X \sqsubseteq_\alpha z$ then
      $y \sqsubseteq_\alpha z$ and $y \leq z$.
\end{itemize}
\end{axiom}
\end{framed}
The element $y$ specified by the above axiom is easily shown to be unique and we denote
it by $\bigsqcup_\alpha X$.

In the case where $X$ is a \emph{nonempty} set, we can use the notation
$\bigsqcup_\alpha X$ without specifying explicitly a particular element $x$
such that $X \subseteq (x]_\alpha$. This can be done because if
$X \neq \emptyset$ and for some $x,y\in L$ it holds $X \subseteq (x]_\alpha$ and
$X \subseteq (y]_\alpha$, then $(x]_\alpha = (y]_\alpha$, since if
$z \in L$, then we have $x =_\beta z =_\beta y$ for all $\beta < \alpha$.
Alternatively, we may freely use the notation $\bigsqcup_\alpha X$ for all
nonempty $X$ such that $x=_\beta y$ holds for all $x,y\in X$.

\begin{remark}
\label{rem-remark1}
When $X=\emptyset$ and we consider $X$ to be a subset of $(x]_\alpha$, then
$\bigsqcup_\alpha X$ is both the $\leq$-least and a $\sqsubseteq_\alpha$-least
element of $(x]_\alpha$. Thus, for every $x\in L$ and ordinal $\alpha < \lord$,
$(x]_\alpha$ has a $\leq$-least element, which is, at the same time,
a $\sqsubseteq_\alpha$-least element. Thus, when $X$ is empty and $x = \perp$,
we obtain that $\perp \sqsubseteq_\alpha y$ for all $y \in (\perp]_\alpha$.
Notice also that since $(\perp]_0=L$, it holds $\perp \sqsubseteq_0 y$ for all
$y \in L$.
\end{remark}
\begin{framed}
\begin{axiom}\label{axiom4}
For every non-empty $X\subseteq L$ and ordinal $\alpha < \lord$, if $y =_\alpha x$ for all $x \in X$,
then $y =_\alpha (\bigvee X)$.
\end{axiom}
\end{framed}

In the rest of the paper we will assume, without loss of generality, that $\lord$
is a limit ordinal. To see that this is not a restriction, assume that $\lord$ is
a successor ordinal, say $\gamma+1$, and that the corresponding relations satisfy
the above axioms for all ordinals less than $\lord$. By Axioms~\ref{axiom1} and~\ref{axiom2}
it follows that $=_\gamma$ should be the equality relation on $L$. But then, we could
add relations $\sqsubseteq_\beta$, each one of them being equal to the equality relation on $L$,
for all $\gamma < \beta <\overline{\lord}$, where $\overline{\lord}$ is the least limit ordinal
greater than $\lord$. It is easy to check that the above axioms now also hold for all ordinals
less than $\overline{\lord}$. Moreover, the relation $\sqsubset$ remains the same if we replace $\lord$
with $\overline{\lord}$ because $\bigcup_{\alpha < \lord}  \sqsubset_\alpha = \bigcup_{\alpha < \overline{\lord}}  \sqsubset_\alpha$. Finally, if a function $f:L\rightarrow L$ preserves all the relations
$\sqsubseteq_\alpha$ for all $\alpha < \lord$ then $f$ also preserves all the relations
$\sqsubseteq_\alpha$ for all $\alpha < \overline{\lord}$ (notice that this is a key assumption
of Theorem~\ref{main-fixed-point-theorem}). As a result, the two main theorems of the paper (Theorem~\ref{complete-lattice-theorem} and Theorem~\ref{main-fixed-point-theorem}) that involve
the relations $\sqsubseteq_\alpha$ and the relation $\sqsubseteq$, are not affected if we replace
the successor ordinal $\lord$ with the limit ordinal $\overline{\lord}$ as described above. Therefore,
in the rest of the paper instead of distinguishing cases regarding whether $\lord$ is successor or limit,
we can safely consider that $\lord$ is always limit and proceed with this assumption.

In the rest of the paper, we will often talk about ``{\em models of the Axioms 1-4}''
(or simply ``{\em models}''). More formally:
\begin{definition}
A {\em model of Axioms 1-4} or simply {\em model} consists of a complete lattice $(L,\leq)$,
an ordinal $\lord >0$ and a set of preorders $\sqsubseteq_\alpha$ for every $\alpha < \lord$,
such that Axioms 1-4 are satisfied.
\end{definition}

As we are going to see in Section~\ref{models-section}, there exist various models
of the axioms. However we already know one such model. More specifically, it is
not hard to check that the set of infinite-valued interpretations together with
the relations $\sqsubseteq_\alpha$, $\alpha < \Omega$, introduced in
Section~\ref{infinite-valued-section}, form a model of the Axioms 1-4. More specifically,
let $Z$ be a non-empty set (corresponding, for example, to the propositional variables of a program).
Consider the set $V^Z$ where $V$ is the set of truth values introduced in Section~\ref{infinite-valued-section}.
Given $I,J \in V^Z$, we write $I \leq J$ iff for all $z \in Z$, $I(z) \leq J(z)$. Notice that $(V^Z,\leq)$ is a
complete lattice where for any $X \subseteq V^Z$, $(\bigvee X)(z) = \bigvee\{I(z): I \in X\}$.
Consider now the relations $\sqsubseteq_\alpha$ and $=_\alpha$ as defined in
Section~\ref{infinite-valued-section}.

It is straightforward to check that Axioms 1 and 2 are satisfied. In order to show that
Axiom 3 is satisfied, let $\alpha < \kappa = \Omega$, let $I \in V^Z$ and $X\subseteq V^Z$
such that $X \subseteq (I]_\alpha$.  We define:
\[
\begin{array}{ll}
(\bigsqcup_{\alpha} X)(z) = \left\{\begin{array}{ll}
                                    I(z),       &\mbox{if $\textit{order}(I(z))<\alpha$}\\
                                    T_{\alpha}, &\mbox{if there exists $J\in X$ such that $J(z)=T_\alpha$}\\
                                    F_{\alpha}, &\mbox{if for all $J\in X$,  $J(z)=F_\alpha$}\\
                                    F_{\alpha+1}, &\mbox{otherwise}
                                    \end{array}
                                  \right.
\end{array}
\]
Notice that when $X$ is empty, then $(\bigsqcup_\alpha X)(z) = F_\alpha$.

It is easy to verify that $\bigsqcup_{\alpha} X$ satisfies the requirements of Axiom 3.

Finally, for Axiom 4, let $X \subseteq V^Z$ be a nonempty set and assume that $J=_\alpha I$, for all $I\in X$.
We show that $\bigvee X =_\alpha J$. Consider $z\in Z$ such that $\textit{order}(J(z))\leq \alpha$. Then,
$J(z)=I(z)$, for all $I\in X$. Therefore $(\bigvee X)(z) = \bigvee\{I(z):I\in X\} = J(z)$.
Conversely, consider $z \in Z$ such that $\textit{order}((\bigvee X)(z))\leq \alpha$. This implies that
$\textit{order}(\bigvee\{I(z):I\in X\}) \leq \alpha$, ie., $\textit{order}(I(z))\leq \alpha$,
for all $I \in X$. Therefore, $I(z)=J(z)$ for all $I\in X$ and $(\bigvee X)(z) = \bigvee\{I(z):I\in X\} = J(z)$.

This completes the proof that the set of interpretations forms a model of the axioms.
In the rest of this paper, this particular model will be referred as
``{\em the standard model}'' (since this has been the motivating factor
behind the present work).

\subsection{Some Consequences of the Axioms}\label{some-consequences-subsection}
Based on the above axioms, various properties can easily be established regarding the
aforementioned relations.
\begin{lemma}\label{some-consequences}
Let $\alpha,\beta < \lord$. The following properties hold:
\begin{enumerate}

\item[(a)] $\bigcap_{\alpha<\lord} \sqsubseteq_\alpha$ is the equality relation on $L$.

\item[(b)] the relation $\sqsubset_\alpha \circ \sqsubset_\beta$ is included in the relation
           $\sqsubset_{\min\{\alpha,\beta\}}$.

\item[(c)] if $\alpha < \beta$ then $\sqsubset_\alpha$ and $\sqsubset_\beta$ are disjoint.

%

\item[(d)] if $\alpha < \beta$ then $=_\alpha \circ \sqsubseteq_\beta \quad = \quad \sqsubseteq_\beta \circ =_\alpha
           \quad = \quad =_\alpha$.
\end{enumerate}
\end{lemma}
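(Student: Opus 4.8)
The plan is to derive all four parts from a single observation packaged in Axiom~\ref{axiom1}: for $\alpha<\beta$, every pair related by $\sqsubseteq_\beta$ is already identified by $=_\alpha$, i.e.\ a relation at a higher index collapses to equality when viewed at a lower index. Together with the preorder properties of each $\sqsubseteq_\alpha$, Axiom~\ref{axiom2} ($\bigcap_\alpha =_\alpha$ is equality), and Remark~\ref{remark0} ($=_\beta\subseteq =_\alpha$ for $\alpha<\beta$), this collapsing device trivializes most of the statements. I would prove the parts in the order (d), (c), (a), (b), since (d) records the collapsing fact in a reusable form and (b) is the only part needing real care.

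For (d), fix $\alpha<\beta$. The inclusions $=_\alpha\subseteq{=_\alpha}\circ\sqsubseteq_\beta$ and $=_\alpha\subseteq\sqsubseteq_\beta\circ{=_\alpha}$ follow from reflexivity of $\sqsubseteq_\beta$ by taking the intermediate witness equal to one endpoint. For the reverse inclusions: if $x=_\alpha y$ and $y\sqsubseteq_\beta z$, then Axiom~\ref{axiom1} gives $y=_\alpha z$, so $x=_\alpha z$ by transitivity of $=_\alpha$; the argument for $\sqsubseteq_\beta\circ{=_\alpha}$ is symmetric. Part (c) is then immediate: if $x\sqsubset_\alpha y$ and $x\sqsubset_\beta y$ with $\alpha<\beta$, then $x\sqsubseteq_\beta y$ forces $x=_\alpha y$ by Axiom~\ref{axiom1}, contradicting $x\sqsubset_\alpha y$; hence the two relations are disjoint.

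For (a), one inclusion is reflexivity of each $\sqsubseteq_\alpha$. For the other, suppose $x\sqsubseteq_\alpha y$ for every $\alpha<\lord$. For each such $\alpha$ I pass to the successor $\alpha+1$, which lies below $\lord$ precisely because $\lord$ is a limit ordinal; then $x\sqsubseteq_{\alpha+1}y$ together with Axiom~\ref{axiom1} yields $x=_\alpha y$. As $\alpha$ was arbitrary, $x\in\bigcap_{\alpha<\lord}{=_\alpha}$, so $x=y$ by Axiom~\ref{axiom2}. The one genuinely non-routine point here is the appeal to the limit-ordinal convention to guarantee $\alpha+1<\lord$, and I would flag it explicitly.

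The remaining and most delicate part is (b). Writing $m=\min\{\alpha,\beta\}$ and assuming $x\sqsubset_\alpha y$ and $y\sqsubset_\beta z$, I first establish $x\sqsubseteq_m z$: each of $x\sqsubseteq_\alpha y$ and $y\sqsubseteq_\beta z$ collapses, via Axiom~\ref{axiom1} when its index strictly exceeds $m$ and is kept outright when the index equals $m$, to $x\sqsubseteq_m y$ and $y\sqsubseteq_m z$, so transitivity gives $x\sqsubseteq_m z$. For strictness I argue by contradiction: if also $z\sqsubseteq_m x$, then the cycle $x\sqsubseteq_m y\sqsubseteq_m z\sqsubseteq_m x$ forces $x=_m y$ and $y=_m z$. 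But $m$ equals at least one of $\alpha,\beta$, and the corresponding strict hypothesis ($x\sqsubset_m y$ or $y\sqsubset_m z$) contradicts the respective equality, giving $x\sqsubset_m z$. The hard part will be arranging this last step so that one cycle-collapse argument handles the three configurations ($\alpha<\beta$, $\alpha=\beta$, $\alpha>\beta$) uniformly, rather than through a tedious three-way case split; isolating the cycle as the source of the contradiction is what makes the uniform treatment possible.
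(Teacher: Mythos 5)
Your proposal is correct and follows essentially the same route as the paper: every part is derived from the collapsing property of Axiom~\ref{axiom1} together with the preorder axioms and Axiom~\ref{axiom2}, and your part (b) is the same contradiction argument (assume $x=_m z$, close the cycle, contradict strictness), merely packaged to treat the cases $\alpha<\beta$, $\alpha=\beta$, $\alpha>\beta$ uniformly where the paper splits them. Your explicit appeal to the limit-ordinal convention in part (a), to guarantee $\alpha+1<\lord$, is a point the paper glosses over but legitimately relies on.
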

\begin{proof}

To prove (a),
consider $(x,y) \in \bigcap_{\alpha<\lord} \sqsubseteq_\alpha$.
Then, for all $\alpha < \lord$, $x \sqsubseteq_\alpha y$ which, by Axiom 1, implies that for all $\beta < \lord$, $x=_\beta y$. By Axiom 2, $x=y$. On the other hand if $x = y$ then $x =_\alpha y$ and $x \sqsubseteq_\alpha y$
for all $\alpha < \kappa$.

Consider now statement (b). If $\alpha=\beta$ then the statement obviously holds.
Assume therefore that $\alpha < \beta$ (the case $\beta < \alpha$ is symmetric).
Let $(x,z) \in  \sqsubset_\alpha \circ \sqsubset_\beta$. Then, there exists some
$y$ such that $x \sqsubset_\alpha y$ and $y \sqsubset_\beta z$; therefore,
$x \sqsubseteq_\alpha y$ and $y \sqsubseteq_\beta z$. By Axiom 1 and the fact that
$y \sqsubseteq_\beta z$ we get $y \sqsubseteq_\alpha z$ and therefore, by the transitivity
of $\sqsubseteq_\alpha$, $x \sqsubseteq_\alpha z$. But then, either $x =_\alpha z$ or
$x \sqsubset_\alpha z$. Assume for the sake of contradiction that $x =_\alpha z$.
Then, $z \sqsubseteq_\alpha x$ and since $y \sqsubseteq_\alpha z$ we get $y \sqsubseteq_\alpha x$.
But since $x \sqsubseteq_\alpha y$, we get $x=_\alpha y$, which is a contradiction
because $x \sqsubset_\alpha y$. Therefore, $x \sqsubset_\alpha z$, ie. $x \sqsubset_{\min\{\alpha,\beta\}} z$.

To establish statement (c), assume that $\sqsubset_\alpha$ and $\sqsubset_\beta$ are not disjoint. Then
there exist $x,y$ such that $x \sqsubset_\alpha y$ and $x \sqsubset_\beta y$. But $x \sqsubset_\beta y$ implies
$x \sqsubseteq_\beta y$ and by Axiom 1 we get $x =_\alpha y$ (contradiction).

Finally, to prove statement (d), assume that $(x,z)$ belongs to the relation $=_\alpha \circ \sqsubseteq_\beta$.
Then, there exists $y$ such that $x =_\alpha y$ and $y \sqsubseteq_\beta z$. By Axiom 1 we get that
$y=_\alpha z$ and therefore $x =_\alpha z$. Conversely, assume that $x=_\alpha z$. Then, since
$x=_\alpha x$ and $x=_\alpha z$  we get that $(x,z)$ belongs to the relation $=_\alpha \circ \sqsubseteq_\beta$.
A similar proof shows that the relations $\sqsubseteq_\beta \circ =_\alpha$ and  $=_\alpha$ are equal.
\end{proof}

\begin{lemma}\label{partial-order}
The relation $\sqsubseteq$ is a partial order.
\end{lemma}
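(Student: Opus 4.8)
The plan is to verify the three defining properties of a partial order---reflexivity, transitivity, and antisymmetry---directly from the definition $x \sqsubseteq y \iff (x \sqsubset y \text{ or } x = y)$, leaning heavily on part (b) of Lemma~\ref{some-consequences}, which states that $\sqsubset_\alpha \circ \sqsubset_\beta$ is contained in $\sqsubset_{\min\{\alpha,\beta\}}$. Reflexivity is immediate: $x \sqsubseteq x$ holds because $x = x$.

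For transitivity, suppose $x \sqsubseteq y$ and $y \sqsubseteq z$. If either relation is in fact an equality, the conclusion $x \sqsubseteq z$ follows trivially by substitution, so the only substantive case is $x \sqsubset y$ and $y \sqsubset z$. Since $\sqsubset = \bigcup_{\alpha<\lord}\sqsubset_\alpha$, there are ordinals $\alpha,\beta < \lord$ with $x \sqsubset_\alpha y$ and $y \sqsubset_\beta z$. Then $(x,z) \in \sqsubset_\alpha \circ \sqsubset_\beta$, and Lemma~\ref{some-consequences}(b) gives $x \sqsubset_{\min\{\alpha,\beta\}} z$, whence $x \sqsubset z$ and therefore $x \sqsubseteq z$.

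For antisymmetry, suppose $x \sqsubseteq y$ and $y \sqsubseteq x$ and assume toward a contradiction that $x \neq y$. Then both relations must be strict, so $x \sqsubset_\alpha y$ and $y \sqsubset_\beta x$ for some $\alpha,\beta < \lord$. As above, $(x,x) \in \sqsubset_\alpha \circ \sqsubset_\beta \subseteq \sqsubset_{\min\{\alpha,\beta\}}$ by Lemma~\ref{some-consequences}(b). But no relation $\sqsubset_\gamma$ can be reflexive: $x \sqsubset_\gamma x$ would require $x \sqsubseteq_\gamma x$ together with the failure of $x =_\gamma x$, and the latter always holds since $=_\gamma$ is an equivalence relation. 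This contradiction forces $x = y$.

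The proof is short precisely because the composition property (b) does all the combinatorial work, and that property in turn rests on Axiom~\ref{axiom1}; so in effect the main obstacle has already been dispatched in the preceding lemma. The only point demanding care here is the reduction to the strict cases in transitivity and antisymmetry, together with the observation that each $\sqsubset_\gamma$ is irreflexive by construction, which is what converts the degenerate composition $(x,x) \in \sqsubset_{\min\{\alpha,\beta\}}$ into the desired contradiction.
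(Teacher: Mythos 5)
Your proof is correct and follows essentially the same route as the paper: reflexivity from the definition, transitivity and antisymmetry both reduced to the composition property of Lemma~\ref{some-consequences}(b), with the contradiction in the antisymmetry case coming from the irreflexivity of each $\sqsubset_\gamma$ (since $x =_\gamma x$ always holds). The only difference is that you spell out the degenerate equality cases explicitly, which the paper leaves implicit.
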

\begin{proof}
Since by Lemma~\ref{some-consequences}(b) $\sqsubset_\alpha \circ \sqsubset_\beta$ is included
in $\sqsubset_{\min\{\alpha,\beta\}}$, the relation $\sqsubset$
is transitive as is the relation $\sqsubseteq$. It is clear
that $\sqsubseteq$ is reflexive. If $x \sqsubseteq y$ and $y \sqsubseteq x$,
and $x \neq y$, then there exist ordinals $\alpha,\beta$ with
$x \sqsubset_\alpha y$ and $y \sqsubset_\beta x$. Let $\gamma = \min\{\alpha,\beta\}$.
Then by Lemma~\ref{some-consequences}(b) $x \sqsubset_\gamma x$, which is a contradiction
since $x =_\gamma x$. Thus, if $x \sqsubseteq y$ and $y \sqsubseteq x$, then $x = y$.
\end{proof}
\begin{lemma}\label{inclusion-lemma}
The relation $\sqsubseteq$ is included in $\sqsubseteq_0$.
\end{lemma}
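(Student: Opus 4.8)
The plan is to unfold the definition of $\sqsubseteq$ and run a short case analysis, reducing everything to a single application of Axiom~\ref{axiom1} together with the fact that $\sqsubseteq_0$ is a preorder (hence reflexive, and containing its own induced equivalence relation $=_0$). No use of Axioms~2--4, of the lattice structure, or of the order $\leq$ will be needed.

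First I would take an arbitrary pair $(x,y)$ with $x \sqsubseteq y$ and split on the definition $x \sqsubseteq y$ iff $x = y$ or $x \sqsubset y$. If $x = y$, then $x \sqsubseteq_0 y$ is immediate from the reflexivity of the preorder $\sqsubseteq_0$. Otherwise $x \sqsubset y$, so by the definition $\sqsubset = \bigcup_{\alpha < \lord} \sqsubset_\alpha$ there is an ordinal $\alpha < \lord$ with $x \sqsubset_\alpha y$; in particular $x \sqsubseteq_\alpha y$.

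The remaining work is a sub-split on $\alpha$. If $\alpha = 0$, then $x \sqsubseteq_0 y$ holds directly. If $\alpha > 0$, then $0 < \alpha < \lord$, so Axiom~\ref{axiom1} applies with the pair $0 < \alpha$ and tells us that $\sqsubseteq_\alpha$ is included in $=_0$; from $x \sqsubseteq_\alpha y$ we therefore obtain $x =_0 y$. Since $=_0$ is the equivalence relation determined by $\sqsubseteq_0$, we have that $x =_0 y$ entails $x \sqsubseteq_0 y$, which closes this case and hence the proof.

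There is essentially no obstacle here; the only two points requiring care are to recall that $=_0$ denotes the equivalence relation induced by $\sqsubseteq_0$ (so that $x =_0 y$ yields $x \sqsubseteq_0 y$), and to note that any index $\alpha$ witnessing $x \sqsubset y$ satisfies $\alpha < \lord$, which is precisely what is needed for Axiom~\ref{axiom1} to be available.
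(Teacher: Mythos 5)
Your proof is correct and follows essentially the same route as the paper's: split on $x=y$ versus $x\sqsubset_\alpha y$, handle $\alpha=0$ directly, and use Axiom~\ref{axiom1} to get $x=_0 y$ (hence $x\sqsubseteq_0 y$) when $\alpha>0$. No issues.
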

\begin{proof}
If $x \sqsubseteq y$, then either $x = y$ or $x \sqsubset_\alpha y$ for some $\alpha$.
If $x = y$ then by reflexivity we have $x =_0 y$ and therefore $x\sqsubseteq_0 y$.
If $x \sqsubset_0 y$, then $x \sqsubseteq_0 y$ clearly holds. If $x \sqsubset_\alpha y$
for some $\alpha > 0$, then, by Axiom~\ref{axiom1}, $x =_0 y$, so that $x \sqsubseteq_0 y$ again.
\end{proof}


For each $x\in L$ and ordinal $\alpha < \lord$, let $[x]_\alpha = \{y \in L: x =_\alpha y\} =
\{y \in L: \forall \beta \leq \alpha\ x =_\beta y\}$ (where the second equality is
justified by Remark~\ref{remark0}). Note that $[x]_\alpha \subseteq (x]_\alpha$
and $[x]_\alpha = (x]_{\alpha + 1}$.

\begin{lemma}
\label{lem-lem0}
Let $\alpha < \lord$. Suppose that $y = \bigsqcup_\alpha X$,
where $X\subseteq (x]_\alpha$ for some $x\in L$.
Then $y$ is the $\leq$-least element of $[y]_\alpha$.
Also, $y$ is a $\sqsubseteq_{\alpha + 1}$-least
element of $[y]_\alpha$, ie., $y \sqsubseteq_{\alpha + 1} z$
for all $z \in [y]_\alpha$.
\end{lemma}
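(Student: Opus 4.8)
The plan is to prove the two assertions in turn, establishing the $\leq$-minimality first and then deducing the $\sqsubseteq_{\alpha+1}$-minimality from it almost for free via Remark~\ref{rem-remark1}.

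First I would show that $y$ is the $\leq$-least element of $[y]_\alpha$. Since $y =_\alpha y$ holds by reflexivity, we have $y \in [y]_\alpha$, so it suffices to prove $y \leq z$ for an arbitrary $z \in [y]_\alpha$. Fix such a $z$, so that $y =_\alpha z$. The key step is to verify that $z$ lies in $(x]_\alpha$ and that it is a $\sqsubseteq_\alpha$-upper bound of $X$, which lets me invoke the minimality clause of Axiom~\ref{axiom3} satisfied by $y = \bigsqcup_\alpha X$. For membership in $(x]_\alpha$: from $y \in (x]_\alpha$ we have $x =_\beta y$ for all $\beta < \alpha$, while $y =_\alpha z$ gives $y =_\beta z$ for all $\beta \leq \alpha$ (using the second description of $[y]_\alpha$), and transitivity of each $=_\beta$ then yields $x =_\beta z$ for all $\beta < \alpha$, i.e.\ $z \in (x]_\alpha$. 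For the upper-bound property: since $=_\alpha$ is the equivalence determined by the preorder $\sqsubseteq_\alpha$, the relation $y =_\alpha z$ implies $y \sqsubseteq_\alpha z$; combined with $X \sqsubseteq_\alpha y$ and transitivity of $\sqsubseteq_\alpha$ this gives $X \sqsubseteq_\alpha z$. Now the second clause of Axiom~\ref{axiom3} applies to $z$ and delivers $y \leq z$ (and also $y \sqsubseteq_\alpha z$, which we do not need here). Hence $y$ is the $\leq$-least element of $[y]_\alpha$.

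For the second assertion I would exploit the identity $[y]_\alpha = (y]_{\alpha+1}$ together with Remark~\ref{rem-remark1}. Applied at the ordinal $\alpha+1$ (which is $<\lord$ since $\lord$ is a limit ordinal), Remark~\ref{rem-remark1} says that $(y]_{\alpha+1}$ has a $\leq$-least element and that this same element is a $\sqsubseteq_{\alpha+1}$-least element of $(y]_{\alpha+1}$. But the first part of the proof has just shown that $y$ is the $\leq$-least element of $[y]_\alpha = (y]_{\alpha+1}$, and a $\leq$-least element is unique because $\leq$ is a partial order; therefore $y$ coincides with the distinguished element supplied by Remark~\ref{rem-remark1}, and is consequently a $\sqsubseteq_{\alpha+1}$-least element of $[y]_\alpha$, i.e.\ $y \sqsubseteq_{\alpha+1} z$ for every $z \in [y]_\alpha$.

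I expect the only real subtlety to lie in the first part, specifically in the bookkeeping needed to place an arbitrary $z \in [y]_\alpha$ back inside $(x]_\alpha$ and to certify it as a $\sqsubseteq_\alpha$-upper bound of $X$, so that the minimality clause of Axiom~\ref{axiom3} can legitimately be applied to it; this is where the interplay between $=_\alpha$, $\sqsubseteq_\alpha$ and transitivity is used. Once $\leq$-minimality is in hand, the second part is a direct consequence of Remark~\ref{rem-remark1} and the uniqueness of $\leq$-least elements, so no further computation is required there.
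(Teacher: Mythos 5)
Your proof is correct and follows essentially the same route as the paper's: place an arbitrary $z \in [y]_\alpha$ back into $(x]_\alpha$, check $X \sqsubseteq_\alpha z$, invoke the minimality clause of Axiom~\ref{axiom3} to get $y \leq z$, and then obtain the $\sqsubseteq_{\alpha+1}$-minimality from the identity $[y]_\alpha = (y]_{\alpha+1}$ together with Remark~\ref{rem-remark1} and uniqueness of $\leq$-least elements. Your explicit remark that $\alpha+1 < \lord$ (because $\lord$ is taken to be a limit ordinal) is a small point the paper leaves implicit, but otherwise the two arguments coincide.
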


\begin{proof}
Suppose $z \in [y]_\alpha$.
Then since $y =_\alpha z$ and $y \in (x]_\alpha$, by Remark~\ref{remark0}
we have $z =_\beta x$ for all $\beta < \alpha$, so that $z \in (x]_\alpha$.
Since $X \sqsubseteq_\alpha y$ and $y =_\alpha z$,
also $X \sqsubseteq_\alpha z$. 
Thus, $y \leq z$, by the definition of $y$.

Since $[y]_\alpha = (y]_{\alpha + 1}$, and since $y$
is $\leq$-least in $[y]_\alpha$, $y$ is $\leq$-least in $(y]_{\alpha + 1}$.
But by Remark~\ref{rem-remark1}, the $\leq$-least element of $(y]_{\alpha + 1}$ is
$\bigsqcup_{\alpha + 1}\emptyset$, where $\emptyset$ is viewed as
a subset of $(y]_{\alpha+1}$, which is also a $\sqsubseteq_{\alpha + 1}$-least
element of $(y]_{\alpha + 1}$. Thus,  $y \sqsubseteq_{\alpha + 1} z$
for all $z \in [y]_\alpha$.
\end{proof}

\begin{lemma}
\label{lem-lem1}
Let $x \in L$ and $\alpha < \lord$ be an ordinal.
Suppose that $X \subseteq (x]_\alpha$ and $z\in (x]_\alpha$ with $X \sqsubseteq z$.
Then $\bigsqcup_\alpha X \sqsubseteq_\alpha z$ and $\bigsqcup_\alpha X \leq z$.
\end{lemma}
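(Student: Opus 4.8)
The plan is to reduce the statement directly to the defining property of $\bigsqcup_\alpha X$ supplied by Axiom~\ref{axiom3}. That axiom asserts that for every $w \in (x]_\alpha$ with $X \sqsubseteq_\alpha w$ we have $\bigsqcup_\alpha X \sqsubseteq_\alpha w$ and $\bigsqcup_\alpha X \leq w$. Since the hypothesis already gives $z \in (x]_\alpha$, both desired conclusions will follow at once, provided I can upgrade the given global relation $X \sqsubseteq z$ to the level-$\alpha$ relation $X \sqsubseteq_\alpha z$. So the entire content of the lemma reduces to the claim: for every $w \in X$, from $w \sqsubseteq z$ together with $w, z \in (x]_\alpha$ one may infer $w \sqsubseteq_\alpha z$.

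To prove this core claim I would first record that $w, z \in (x]_\alpha$ forces $w =_\beta z$ for all $\beta < \alpha$: both are $=_\beta$-related to $x$, and each $=_\beta$ is an equivalence relation. Next I would unfold $w \sqsubseteq z$. If $w = z$, then $w =_\alpha z$ and hence $w \sqsubseteq_\alpha z$. Otherwise $w \sqsubset_\gamma z$ for some $\gamma < \lord$, and I would argue by cases on the position of $\gamma$ relative to $\alpha$. If $\gamma < \alpha$, then $w \sqsubset_\gamma z$ entails that $w =_\gamma z$ fails, contradicting the equality $w =_\gamma z$ just established; so this case is vacuous. If $\gamma = \alpha$, then $w \sqsubset_\alpha z$ immediately yields $w \sqsubseteq_\alpha z$. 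If $\gamma > \alpha$, then from $w \sqsubseteq_\gamma z$ and Axiom~\ref{axiom1} (which includes $\sqsubseteq_\gamma$ in $=_\alpha$) we obtain $w =_\alpha z$, and again $w \sqsubseteq_\alpha z$. In every case $w \sqsubseteq_\alpha z$.

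Having shown $w \sqsubseteq_\alpha z$ for each $w \in X$, that is, $X \sqsubseteq_\alpha z$, the second clause of Axiom~\ref{axiom3} applied to $z \in (x]_\alpha$ delivers both $\bigsqcup_\alpha X \sqsubseteq_\alpha z$ and $\bigsqcup_\alpha X \leq z$, completing the argument. The empty case $X = \emptyset$ needs no separate treatment, since then $X \sqsubseteq_\alpha z$ holds vacuously and Axiom~\ref{axiom3} still applies. The only step requiring genuine care is the translation of the global relation $\sqsubseteq$ into the level-$\alpha$ relation $\sqsubseteq_\alpha$, and in particular the observation that the subcase $\gamma < \alpha$ cannot occur because $w$ and $z$ lie in the same class $(x]_\alpha$; but this is exactly where membership in $(x]_\alpha$ and Axiom~\ref{axiom1} combine to resolve the matter, so I do not anticipate any real obstacle.
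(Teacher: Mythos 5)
Your proposal is correct and follows essentially the same route as the paper's own proof: both reduce the lemma to upgrading $X \sqsubseteq z$ to $X \sqsubseteq_\alpha z$ by noting that any $w \sqsubset_\gamma z$ with $w,z \in (x]_\alpha$ forces $\gamma \geq \alpha$, and then invoke the defining property of $\bigsqcup_\alpha X$ from Axiom~\ref{axiom3}. Your version merely spells out the case analysis on $\gamma$ versus $\alpha$ in more detail than the paper does.
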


\begin{proof}
Since $X \sqsubseteq z$, for each $x \in X$
either $x = z$ or there is some ordinal $\beta$ with $x\sqsubset_\beta z$.
Since $z \in (x]_\alpha$, we must have $\beta \geq \alpha$.
In either case, $x \sqsubseteq_\alpha z$, so that $X \sqsubseteq_\alpha z$.
Thus, by the definition of $\bigsqcup_\alpha X$, we have that
$\bigsqcup_\alpha X \sqsubseteq_\alpha z$ and $\bigsqcup_\alpha X \leq z$.
\end{proof}

\subsection{Slices}\label{slices}
In this subsection we demonstrate that each element $x$ of $L$ can
be created as the least upper bound of a set of ``simpler'' elements
of $L$ called the {\em slices} of $x$. In the following we establish
several properties of slices.
\begin{lemma}
\label{lem-slice}
For every $x \in L$ and ordinal $\alpha < \lord$,
 $y = \bigsqcup_\alpha \{ x \}$ is both the $\leq$-least and a
$\sqsubseteq_{\alpha+1}$-least element of $[x]_\alpha$. In particular,
$y \in [x]_\alpha$.
\end{lemma}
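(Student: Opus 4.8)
The plan is to reduce the statement to Lemma~\ref{lem-lem0} by taking $X = \{x\}$. Since $x =_\beta x$ for every $\beta$, we have $x \in (x]_\alpha$, so $\{x\} \subseteq (x]_\alpha$ and $y = \bigsqcup_\alpha \{x\}$ is well-defined by Axiom~\ref{axiom3}. Lemma~\ref{lem-lem0} then immediately tells us that $y$ is the $\leq$-least element of $[y]_\alpha$ and a $\sqsubseteq_{\alpha+1}$-least element of $[y]_\alpha$. The only gap between this and the desired conclusion is that the target class is $[x]_\alpha$ rather than $[y]_\alpha$, so the crux of the argument is to establish that $y \in [x]_\alpha$, equivalently $x =_\alpha y$. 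Once that is in hand, $=_\alpha$ being an equivalence relation gives $[x]_\alpha = [y]_\alpha$, and the two minimality claims transfer verbatim.

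To show $x =_\alpha y$ I would use both halves of the defining property of $\bigsqcup_\alpha$ from Axiom~\ref{axiom3}. The first half gives $\{x\} \sqsubseteq_\alpha y$, that is $x \sqsubseteq_\alpha y$. For the reverse comparison, note that $x$ itself lies in $(x]_\alpha$ and, by reflexivity of $\sqsubseteq_\alpha$, satisfies $\{x\} \sqsubseteq_\alpha x$; hence the minimality clause of Axiom~\ref{axiom3}, applied with $z = x$, yields $y \sqsubseteq_\alpha x$ (and incidentally $y \leq x$). Combining $x \sqsubseteq_\alpha y$ with $y \sqsubseteq_\alpha x$ gives $x =_\alpha y$, since $=_\alpha$ is precisely the equivalence relation determined by the preorder $\sqsubseteq_\alpha$. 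This establishes $y \in [x]_\alpha$, which is the \emph{in particular} assertion.

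With $[x]_\alpha = [y]_\alpha$ now available, I would simply invoke Lemma~\ref{lem-lem0}, instantiated at $X = \{x\}$, to conclude that $y$ is the $\leq$-least element and a $\sqsubseteq_{\alpha+1}$-least element of $[x]_\alpha$, completing the proof. The main obstacle is the single observation that the singleton's least upper bound does not merely land in the coarser class $(x]_\alpha$ guaranteed directly by Axiom~\ref{axiom3}, but in fact in the finer class $[x]_\alpha$; everything else is a direct appeal to Lemma~\ref{lem-lem0} together with the fact that $=_\alpha$ is an equivalence relation.
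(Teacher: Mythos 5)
Your proposal is correct and follows essentially the same route as the paper's proof: invoke Lemma~\ref{lem-lem0} to get the minimality claims for $[y]_\alpha$, then establish $x =_\alpha y$ by combining $x \sqsubseteq_\alpha y$ (from the first clause of Axiom~\ref{axiom3}) with $y \sqsubseteq_\alpha x$ (from the minimality clause applied to $z = x$), so that $[x]_\alpha = [y]_\alpha$. No gaps.
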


\begin{proof}
By Lemma~\ref{lem-lem0}, $y$ is both the $\leq$-least element and a
$\sqsubseteq_{\alpha + 1}$-least element of $[y]_\alpha$.
If we can show that $y =_\alpha x$, then it follows that
$[y]_\alpha = [x]_\alpha$, and thus $y$ is both the $\leq$-least and a
$\sqsubseteq_{\alpha+1}$-least element of $[x]_\alpha$.

By definition, we have that $x \sqsubseteq_\alpha  \bigsqcup_\alpha \{x\}  = y$.
Moreover, by Axiom~\ref{axiom3}, for all $z$, if $x \sqsubseteq_\alpha z$, then
$y \sqsubseteq_\alpha z$. Therefore, since $x \sqsubseteq_\alpha x$, it follows that
$y \sqsubseteq_\alpha x$. Since both $x \sqsubseteq_\alpha y$ and
$y \sqsubseteq_\alpha x$ hold, $x =_\alpha y$.
\end{proof}


\begin{definition}
For every $x \in L$ and $\alpha < \lord$, we denote $\bigsqcup_\alpha \{x\}$
by $x|_\alpha$ and call it the \emph{slice of $x$ at $\alpha$}.
\end{definition}
\begin{remark}
Consider the standard model discussed at the end of Subsection~\ref{axioms-subsection}
and let $I$ be an infinite-valued interpretation. Then:
\[
\begin{array}{ll}
(I|_\alpha)(z) = (\bigsqcup_{\alpha} \{I\})(z) = \left\{\begin{array}{ll}
                                                       I(z),         &\mbox{if $\textit{order}(I(z))\leq\alpha$}\\
                                                       F_{\alpha+1}, &\mbox{otherwise}
                                                       \end{array}
                                                 \right.
\end{array}
\]
In other words, the slice of an interpretation $I$ at level $\alpha$ is an interpretation identical
to $I$ for all levels less than or equal to $\alpha$, which however assigns a ``default value'' equal
to $F_{\alpha+1}$ to all variables that in $I$ possess a value of order greater than $\alpha$.
\end{remark}

By Lemma~\ref{lem-slice}, we have that $x =_\alpha x|_\alpha$,
and $x|_\alpha$ is both the $\leq$-least and a
$\sqsubseteq_{\alpha + 1}$-least element of $L$ which
is $=_\alpha$-equivalent to $x$.
In particular, $\perp\kern-.4em|_\alpha = \perp$ for all $\alpha$, since $\perp$ is the
$\leq$-least element of $[\perp]_\alpha$.

\begin{lemma}
\label{lem-slice added}
The following conditions are equivalent for all $x \in L$:
\begin{itemize}
\item[(a)] $x = x|_\alpha$.
\item[(b)] $x = \bigsqcup_\alpha \{x\}$.
\item[(c)] $x = \bigsqcup_\alpha\{y\}$ for some $y\in L$.
\item[(d)] $x = \bigsqcup_\alpha X$ for some $X \subseteq (z]_\alpha$, where $z\in L$.
\end{itemize}
\end{lemma}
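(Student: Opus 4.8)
The plan is to notice at the outset that conditions (a) and (b) are the very same statement: by definition $x|_\alpha = \bigsqcup_\alpha\{x\}$, so ``$x = x|_\alpha$'' and ``$x = \bigsqcup_\alpha\{x\}$'' are identical. It therefore suffices to prove the cyclic chain (b) $\Rightarrow$ (c) $\Rightarrow$ (d) $\Rightarrow$ (a), and two of these three arrows are trivial. For (b) $\Rightarrow$ (c) I would simply instantiate $y := x$. For (c) $\Rightarrow$ (d) I would take the singleton $X := \{y\}$ together with $z := y$; since $y \in (y]_\alpha$ (the membership $x \in (x]_\alpha$ recorded right after Axiom~\ref{axiom3}), we have $X \subseteq (z]_\alpha$ and $x = \bigsqcup_\alpha X$, exactly as (d) demands.

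The single substantive step is (d) $\Rightarrow$ (a), and the idea I would use is the uniqueness of a $\leq$-least element. Assuming $x = \bigsqcup_\alpha X$ for some $X \subseteq (z]_\alpha$ with $z \in L$, Lemma~\ref{lem-lem0} applies directly and tells me that $x = \bigsqcup_\alpha X$ is the $\leq$-least element of $[x]_\alpha$. Independently, Lemma~\ref{lem-slice} says that the slice $x|_\alpha = \bigsqcup_\alpha\{x\}$ is also the $\leq$-least element of $[x]_\alpha$. Since a least element of a poset is unique, the two must agree, giving $x = x|_\alpha$, which is precisely (a), and the cycle closes.

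I do not expect a real obstacle here: once (a) and (b) are identified definitionally, the whole lemma collapses to wrapping the two previously established lemmas around the uniqueness of least elements. The only places calling for a moment's care are verifying in (d) $\Rightarrow$ (a) that the hypotheses of Lemma~\ref{lem-lem0} are met --- which they are verbatim, since (d) supplies exactly an $X \subseteq (z]_\alpha$ --- and, in (c) $\Rightarrow$ (d), the reflexivity fact $y \in (y]_\alpha$ that legitimizes the choice $z := y$.
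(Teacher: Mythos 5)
Your proof is correct, and its overall skeleton (identify (a) with (b) definitionally, dispose of (b)$\Rightarrow$(c)$\Rightarrow$(d) trivially, concentrate all work in the last arrow) is exactly the paper's. The one place where you genuinely diverge is the substantive step. The paper closes the cycle by verifying the universal property of $\bigsqcup_\alpha\{x\}$ directly: since $x=\bigsqcup_\alpha X$, any $y$ with $\{x\}\sqsubseteq_\alpha y$ satisfies $X\sqsubseteq_\alpha y$, hence $x\sqsubseteq_\alpha y$ and $x\leq y$ by Axiom~\ref{axiom3}, so $x$ itself meets the defining conditions of $\bigsqcup_\alpha\{x\}$ and equals it by uniqueness. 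You instead route through the $\leq$-least-element characterization: Lemma~\ref{lem-lem0} makes $x=\bigsqcup_\alpha X$ the $\leq$-least element of $[x]_\alpha$, Lemma~\ref{lem-slice} makes $x|_\alpha$ the $\leq$-least element of the same set, and uniqueness of least elements in a poset forces $x=x|_\alpha$. Both arguments are sound and equally short; yours buys a slightly cleaner appeal to already-proved lemmas at the cost of invoking two of them where the paper uses none, while the paper's version is self-contained modulo Axiom~\ref{axiom3}. Your side remarks (reflexivity $y\in(y]_\alpha$ for (c)$\Rightarrow$(d), and that Lemma~\ref{lem-lem0} applies verbatim to the $X$ supplied by (d), empty or not) are the right points to check and they do check out.
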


\begin{proof}
The first two conditions are equivalent by definition. It is clear that
condition (b) implies (c) which in turn implies (d). Suppose now that
(d) holds. It is clear that $\{ x \}\sqsubseteq_\alpha x$. Let $y\in L$
with $\{x\} \sqsubseteq_\alpha y$. Then also $X \sqsubseteq_\alpha y$
and since $x = \bigsqcup_\alpha X$ it holds $x \leq y$. We conclude that
$x =\bigsqcup_\alpha \{ x \}$.
\end{proof}

\begin{lemma}
\label{lem-alphaeq}
For all $x,y \in L$ and ordinal $\alpha < \lord$,
$x =_\alpha y$ iff $x|_\alpha =_\alpha  y|_\alpha$ iff $x|_\alpha = y|_\alpha$.
Similarly, $x \sqsubset_\alpha y$ iff $x|_\alpha \sqsubset_\alpha y|_\alpha$
and $x \sqsubseteq_\alpha y$ iff $x|_\alpha \sqsubseteq_\alpha y|_\alpha$.
\end{lemma}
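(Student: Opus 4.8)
The plan is to reduce everything to two facts established earlier. By Lemma~\ref{lem-slice}, the slice $x|_\alpha = \bigsqcup_\alpha\{x\}$ satisfies $x =_\alpha x|_\alpha$ and is the unique $\leq$-least element of $[x]_\alpha = \{z \in L : x =_\alpha z\}$. Moreover, recall that $=_\alpha$ is by definition the equivalence relation determined by the preorder $\sqsubseteq_\alpha$, so that $u =_\alpha v$ holds iff both $u \sqsubseteq_\alpha v$ and $v \sqsubseteq_\alpha u$; in particular $=_\alpha$ is contained in $\sqsubseteq_\alpha$.

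First I would settle the three-way equivalence for $=_\alpha$. If $x =_\alpha y$, then $[x]_\alpha = [y]_\alpha$, and since $x|_\alpha$ and $y|_\alpha$ are each the $\leq$-least element of this common set, uniqueness of least elements forces $x|_\alpha = y|_\alpha$. Trivially $x|_\alpha = y|_\alpha$ implies $x|_\alpha =_\alpha y|_\alpha$ by reflexivity. Finally, if $x|_\alpha =_\alpha y|_\alpha$, then combining $x =_\alpha x|_\alpha$ and $y =_\alpha y|_\alpha$ with the symmetry and transitivity of $=_\alpha$ gives $x =_\alpha y$. This closes the cycle, so the three conditions are equivalent.

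Next I would handle $\sqsubseteq_\alpha$ by a sandwiching argument. Since $x =_\alpha x|_\alpha$, both $x|_\alpha \sqsubseteq_\alpha x$ and $x \sqsubseteq_\alpha x|_\alpha$ hold, and similarly for $y$. Hence if $x \sqsubseteq_\alpha y$, transitivity of $\sqsubseteq_\alpha$ along the chain $x|_\alpha \sqsubseteq_\alpha x \sqsubseteq_\alpha y \sqsubseteq_\alpha y|_\alpha$ yields $x|_\alpha \sqsubseteq_\alpha y|_\alpha$; the converse direction is the symmetric chain $x \sqsubseteq_\alpha x|_\alpha \sqsubseteq_\alpha y|_\alpha \sqsubseteq_\alpha y$. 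Thus $x \sqsubseteq_\alpha y$ iff $x|_\alpha \sqsubseteq_\alpha y|_\alpha$.

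Finally, the statement for $\sqsubset_\alpha$ follows by combining the two previous equivalences, since $x \sqsubset_\alpha y$ means $x \sqsubseteq_\alpha y$ together with the failure of $x =_\alpha y$: by the $\sqsubseteq_\alpha$-equivalence the first conjunct transfers to $x|_\alpha \sqsubseteq_\alpha y|_\alpha$, and by the $=_\alpha$-equivalence the negated second conjunct transfers to the failure of $x|_\alpha =_\alpha y|_\alpha$, which together are precisely $x|_\alpha \sqsubset_\alpha y|_\alpha$. I do not expect a genuine obstacle; the only step needing care is deducing the honest equality $x|_\alpha = y|_\alpha$, rather than mere $=_\alpha$-equivalence, from $x =_\alpha y$, and this is exactly where the uniqueness of the $\leq$-least element of $[x]_\alpha$ must be invoked.
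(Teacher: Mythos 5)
Your proof is correct and follows essentially the same route as the paper's: both rest on Lemma~\ref{lem-slice} (that $x =_\alpha x|_\alpha$ and that $x|_\alpha$ is the unique $\leq$-least element of $[x]_\alpha$), deduce $x|_\alpha = y|_\alpha$ from $[x]_\alpha = [y]_\alpha$, and transfer $\sqsubseteq_\alpha$ and $\sqsubset_\alpha$ across the $=_\alpha$-equivalences $x =_\alpha x|_\alpha$, $y =_\alpha y|_\alpha$. Your explicit sandwiching chain just spells out what the paper compresses into ``by the definition of $\sqsubset_\alpha$.''
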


\begin{proof}
We have $x =_\alpha x|_\alpha$ and $y =_\alpha y|_\alpha$. Thus,
$x =_\alpha y$ iff $x|_\alpha =_\alpha y|_\alpha$.

If $x|_\alpha = y|_\alpha$ then clearly $x|_\alpha =_\alpha y|_\alpha$.
Suppose now that $x|_\alpha =_\alpha y|_\alpha$ so that $x =_\alpha y$ also
holds. Since $x|_\alpha$
is the $\leq$-least element of $[x]_\alpha$  and $y|_\alpha$ is the $\leq$-least
element of $[y]_\alpha$, and since $[x]_\alpha = [y]_\alpha$,
we have that $x|_\alpha = y|_\alpha$.

Since $x =_\alpha x|_\alpha$ and $y =_\alpha y|_\alpha$,
we have (by the definition of $\sqsubset_\alpha$) that
$x \sqsubset_\alpha y$ iff $x|_\alpha \sqsubset_\alpha y|_\alpha$
and $x \sqsubseteq_\alpha y$ iff $x|_\alpha \sqsubseteq_\alpha y|_\alpha$.
\end{proof}
\begin{corollary}
\label{cor-sqsubset-first}
For all $x,y \in L$, $x \sqsubset y$ iff there is some $\alpha < \lord$ with
$x|_\alpha \sqsubset_\alpha y|_\alpha$. Moreover, $x \sqsubseteq y$
iff either there is some $\alpha < \lord$ with $x|_\alpha \sqsubset y|_\alpha$,
or $x =_\alpha y$ for all $\alpha < \lord$
(or equivalently, $x \sqsubseteq_\alpha y$, for all $\alpha < \lord$).
\end{corollary}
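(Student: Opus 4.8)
The plan is to read off both equivalences from Lemma~\ref{lem-alphaeq} together with the definition $\sqsubset = \bigcup_{\alpha < \lord} \sqsubset_\alpha$. For the first claim I would simply unfold $x \sqsubset y$ as ``$x \sqsubset_\alpha y$ for some $\alpha < \lord$'' and then rewrite $x \sqsubset_\alpha y$ as $x|_\alpha \sqsubset_\alpha y|_\alpha$ using the last line of Lemma~\ref{lem-alphaeq}. This is a one-line argument with no real content.

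For the second claim I would split $x \sqsubseteq y$ into the two defining cases $x \sqsubset y$ and $x = y$. In the forward direction, if $x \sqsubset y$ then the first claim produces an $\alpha$ with $x|_\alpha \sqsubset_\alpha y|_\alpha$, and since $\sqsubset_\alpha \subseteq \sqsubset$ this immediately upgrades to $x|_\alpha \sqsubset y|_\alpha$, the left disjunct; if instead $x = y$ then $x =_\alpha y$ holds for every $\alpha$ by reflexivity, the right disjunct. The parenthetical equivalence of ``$x =_\alpha y$ for all $\alpha$'' with ``$x \sqsubseteq_\alpha y$ for all $\alpha$'' is exactly the content of Axiom~\ref{axiom2} and of Lemma~\ref{some-consequences}(a), both of which say that the intersection of the relevant family of relations is the identity.

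The one genuinely non-routine step, and the place I expect the main obstacle, is the converse of the left disjunct: that $x|_\alpha \sqsubset y|_\alpha$ forces $x \sqsubset y$. The difficulty is that $x|_\alpha \sqsubset y|_\alpha$ only yields $x|_\alpha \sqsubset_\gamma y|_\alpha$ for \emph{some} $\gamma$, not necessarily $\gamma = \alpha$, so Lemma~\ref{lem-alphaeq} cannot be applied directly. My plan is first to pin down $\gamma \leq \alpha$: if $\gamma > \alpha$, then $x|_\alpha \sqsubseteq_\gamma y|_\alpha$ together with Axiom~\ref{axiom1} gives $x|_\alpha =_\alpha y|_\alpha$, hence $x|_\alpha = y|_\alpha$ by Lemma~\ref{lem-alphaeq}, contradicting the irreflexivity of $\sqsubset_\gamma$. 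Once $\gamma \leq \alpha$ is known, I would use $x =_\alpha x|_\alpha$ and $y =_\alpha y|_\alpha$ (Lemma~\ref{lem-slice}), which by Remark~\ref{remark0} also give $x =_\gamma x|_\alpha$ and $y =_\gamma y|_\alpha$; transitivity of $\sqsubseteq_\gamma$ then yields $x \sqsubseteq_\gamma y$, and observing that $x =_\gamma y$ would force $x|_\alpha =_\gamma y|_\alpha$ (a contradiction) sharpens this to $x \sqsubset_\gamma y$, whence $x \sqsubset y$. With this auxiliary fact in hand the backward direction of the second claim is immediate: the right disjunct gives $x = y$ by Axiom~\ref{axiom2}, so that in either case $x \sqsubseteq y$. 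All remaining steps are routine bookkeeping over the definitions of $\sqsubset$ and $\sqsubseteq$.
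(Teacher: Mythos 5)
Your proposal is correct. The paper states this as a corollary with no written proof, so there is nothing to compare line by line; your argument supplies exactly the intended derivation from Lemma~\ref{lem-alphaeq} (for the first equivalence and the forward direction of the second) together with Axiom~\ref{axiom2} and Lemma~\ref{some-consequences}(a) (for the parenthetical equivalence). You also correctly identify and close the one non-immediate gap: the left disjunct in the second claim is stated with the unsubscripted relation $\sqsubset$, so the backward direction needs the auxiliary fact that $x|_\alpha \sqsubset y|_\alpha$ implies $x \sqsubset y$. Your handling of it is sound: the witnessing index $\gamma$ must satisfy $\gamma \leq \alpha$ (otherwise Axiom~\ref{axiom1} and Lemma~\ref{lem-alphaeq} force $x|_\alpha = y|_\alpha$, contradicting irreflexivity of $\sqsubset_\gamma$), and then $x =_\gamma x|_\alpha$ and $y =_\gamma y|_\alpha$ (via Lemma~\ref{lem-slice} and Remark~\ref{remark0}) transfer $\sqsubset_\gamma$ from the slices to $x$ and $y$ by Lemma~\ref{some-consequences}(d) or direct transitivity. (If the missing subscript is merely a typo and $\sqsubset_\alpha$ was intended, your argument covers that case a fortiori.)
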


Notice that in the latter case, $x = y$, by Lemma~\ref{some-consequences}(a).

\begin{corollary}
\label{cor-sqsubset}
For all $x,y \in L$, $x \sqsubset y$ iff there is some $\alpha < \lord$
with $x|_\alpha \sqsubset_\alpha y$. Thus, $x \sqsubseteq y$ iff
either there is some $\alpha < \lord$ with $x|_\alpha \sqsubset_\alpha y$, or
$x|_\alpha =_\alpha y$ (or $x|_\alpha \sqsubseteq_\alpha y$)
for all $\alpha < \lord$.
\end{corollary}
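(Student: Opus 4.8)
The plan is to deduce this corollary directly from Corollary~\ref{cor-sqsubset-first}, which is the identical statement except that it carries $y|_\alpha$ in place of $y$ on the right-hand side of every $\alpha$-level relation. Thus the entire task reduces to showing that, for each $\alpha < \lord$ and each $u \in L$, replacing $y$ by its slice $y|_\alpha$ leaves the truth of $u \sqsubset_\alpha y$, of $u =_\alpha y$, and of $u \sqsubseteq_\alpha y$ unchanged; I would then specialise to $u = x|_\alpha$.

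First I would record the single fact that drives everything: by Lemma~\ref{lem-slice} we have $y =_\alpha y|_\alpha$, that is, both $y \sqsubseteq_\alpha y|_\alpha$ and $y|_\alpha \sqsubseteq_\alpha y$ hold. Since $\sqsubseteq_\alpha$ is a preorder, transitivity immediately yields $u \sqsubseteq_\alpha y \iff u \sqsubseteq_\alpha y|_\alpha$ for every $u \in L$; combining the two directions gives $u =_\alpha y \iff u =_\alpha y|_\alpha$, and hence also $u \sqsubset_\alpha y \iff u \sqsubset_\alpha y|_\alpha$. In other words, $y$ and $y|_\alpha$ are freely interchangeable as the right argument of all three $\alpha$-level relations.

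Taking $u = x|_\alpha$, I obtain $x|_\alpha \sqsubset_\alpha y \iff x|_\alpha \sqsubset_\alpha y|_\alpha$, and likewise for $=_\alpha$ and $\sqsubseteq_\alpha$. Substituting these equivalences into the two halves of Corollary~\ref{cor-sqsubset-first} then gives at once the claimed characterisations of $x \sqsubset y$ and of $x \sqsubseteq y$. For the final ``for all $\alpha$'' clause I would additionally note that $x|_\alpha =_\alpha y \iff x =_\alpha y$ (by the interchangeability above together with Lemma~\ref{lem-alphaeq}), so that ``$x|_\alpha =_\alpha y$ for all $\alpha$'' is exactly the condition ``$x =_\alpha y$ for all $\alpha$'' appearing in Corollary~\ref{cor-sqsubset-first}, which by Lemma~\ref{some-consequences}(a) amounts to $x = y$; the parenthetical variant ``$x|_\alpha \sqsubseteq_\alpha y$ for all $\alpha$'' is handled identically.

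There is no real obstacle here, since the corollary is essentially a rephrasing of Corollary~\ref{cor-sqsubset-first}. The only points requiring care are the bookkeeping across the three relation variants and the observation that it is precisely the preorder property of $\sqsubseteq_\alpha$ (and \emph{not} antisymmetry, which $\sqsubseteq_\alpha$ does not enjoy) that licenses the substitution of $y$ by $y|_\alpha$.
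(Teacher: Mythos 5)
Your proposal is correct and follows the route the paper clearly intends (the corollary is stated without proof, immediately after Corollary~\ref{cor-sqsubset-first} and Lemma~\ref{lem-alphaeq}): since $y =_\alpha y|_\alpha$, transitivity of the preorder $\sqsubseteq_\alpha$ lets you interchange $y$ and $y|_\alpha$ on the right of $\sqsubseteq_\alpha$, $=_\alpha$, and $\sqsubset_\alpha$, and the statement then reduces to Corollary~\ref{cor-sqsubset-first}. Your closing remark that it is the preorder property, not antisymmetry, doing the work is exactly the right point of care.
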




\begin{lemma}
For all $x,y \in L$, $x = y$ iff $x|_\alpha = y|_\alpha$ for all $\alpha < \lord$.
\end{lemma}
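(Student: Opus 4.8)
The plan is to reduce this biconditional to the already-established equivalence between slice-equality and the relations $=_\alpha$, and then to invoke the separation axiom. The forward direction is immediate: if $x = y$, then for every $\alpha < \lord$ we have $x|_\alpha = \bigsqcup_\alpha \{x\} = \bigsqcup_\alpha \{y\} = y|_\alpha$, since the slice operator is defined purely in terms of its argument.

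For the converse, I would assume $x|_\alpha = y|_\alpha$ for all $\alpha < \lord$ and argue as follows. The key is Lemma~\ref{lem-alphaeq}, which establishes that $x|_\alpha = y|_\alpha$ holds precisely when $x =_\alpha y$. Hence the hypothesis yields $x =_\alpha y$ for every $\alpha < \lord$, i.e. $(x,y) \in \bigcap_{\alpha<\lord} =_\alpha$. By Axiom~\ref{axiom2} this intersection is the identity relation on $L$, so $x = y$. Equivalently, one could route the same conclusion through Lemma~\ref{some-consequences}(a), which packages the separation property in terms of the relations $\sqsubseteq_\alpha$ rather than $=_\alpha$.

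I expect no genuine obstacle here: the substantive content of the statement has already been absorbed into Lemma~\ref{lem-alphaeq}, which converts the externally visible data of an element --- its family of slices $(x|_\alpha)_{\alpha<\lord}$ --- into its family of $=_\alpha$-equivalence classes. The only point requiring a moment's attention is to apply the clause of Lemma~\ref{lem-alphaeq} that identifies slice equality with \emph{literal} equality in $L$, rather than the weaker clause giving only $=_\alpha$-equality of the slices; since that lemma proves these coincide, citing the correct clause suffices. The statement is thus best understood as the ``slice'' reformulation of Axiom~\ref{axiom2}: an element of $L$ is completely determined by its slices.
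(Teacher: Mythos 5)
Your proof is correct and follows exactly the paper's own route: the forward direction is immediate, and the converse uses Lemma~\ref{lem-alphaeq} to convert $x|_\alpha = y|_\alpha$ into $x =_\alpha y$ for all $\alpha < \lord$, then concludes $x = y$ by Axiom~\ref{axiom2}. No differences worth noting.
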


\begin{proof}
The left to right direction is clear. Suppose that $x|_\alpha = y|_\alpha$ for all $\alpha < \lord$.
Then by Lemma~\ref{lem-alphaeq}, $x =_\alpha y$ for all $\alpha < \lord$, and thus,
by Axiom~\ref{axiom2}, $x = y$. \end{proof}

\begin{lemma}
\label{lem alpha < beta}
  Suppose that $x \in L$ and $\alpha < \beta < \lord$.
  Then $x|_\alpha \leq x|_\beta$
  and $x|_\alpha =_\alpha x|_\beta$.
\end{lemma}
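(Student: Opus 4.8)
The plan is to establish the equivalence $x|_\alpha =_\alpha x|_\beta$ first, and then obtain the inequality $x|_\alpha \leq x|_\beta$ as an almost immediate consequence. The two facts I would lean on are, on the one hand, Lemma~\ref{lem-slice}, which tells us that each slice $x|_\gamma$ satisfies $x =_\gamma x|_\gamma$ and is moreover the $\leq$-least element of the class $[x]_\gamma$, and, on the other hand, Remark~\ref{remark0}, which says that $=_\beta$ is included in $=_\alpha$ whenever $\alpha < \beta$.

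For the equivalence, I would start from the two instances of Lemma~\ref{lem-slice}, namely $x =_\alpha x|_\alpha$ and $x =_\beta x|_\beta$. Since $\alpha < \beta$, Remark~\ref{remark0} upgrades the second of these to $x =_\alpha x|_\beta$. As $=_\alpha$ is an equivalence relation, transitivity then gives $x|_\alpha =_\alpha x =_\alpha x|_\beta$, which is exactly the claim $x|_\alpha =_\alpha x|_\beta$.

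For the inequality, I would simply note that the computation above already shows $x =_\alpha x|_\beta$, that is, $x|_\beta \in [x]_\alpha$. Since Lemma~\ref{lem-slice} identifies $x|_\alpha$ as the $\leq$-least element of $[x]_\alpha$, we have $x|_\alpha \leq z$ for every $z \in [x]_\alpha$, and instantiating $z = x|_\beta$ yields $x|_\alpha \leq x|_\beta$. There is no real obstacle here; the entire argument rests on the monotonicity of the equivalences together with the minimality properties of slices already isolated in Lemma~\ref{lem-slice}. The only point deserving a moment's care is confirming that $x|_\beta$ genuinely lies in $[x]_\alpha$ before invoking the minimality of $x|_\alpha$ there, but this membership is precisely the content of the equivalence proved in the preceding step.
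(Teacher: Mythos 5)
Your proof is correct and follows essentially the same route as the paper: both arguments reduce to observing that $x|_\beta \in [x]_\beta \subseteq [x]_\alpha$ (via Lemma~\ref{lem-slice} and Remark~\ref{remark0}) and then invoking the $\leq$-minimality of $x|_\alpha$ in $[x]_\alpha$. The only difference is presentational — you spell out the transitivity step for $=_\alpha$ that the paper leaves implicit.
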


\begin{proof}
 Since by Lemma~\ref{lem-slice} and Remark~\ref{remark0},
 $x|_\beta \in [x]_\beta \subseteq [x]_\alpha$,
 and since by Lemma~\ref{lem-slice} $x|_\alpha\in [x]_\alpha$ and
 $x|_\alpha$ is $\leq$-least in $[x]_\alpha$, we have both properties.
\end{proof}

\begin{lemma}\label{characterization-lemma}
 For every $x \in L$, $x = \bigvee_{\alpha < \lord} x|_\alpha$.
\end{lemma}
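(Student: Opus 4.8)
The plan is to set $s = \bigvee_{\alpha<\lord} x|_\alpha$ and to prove $x = s$ not by comparing the two elements directly in the $\leq$ order, but by showing that they agree at every level, i.e.\ that $x =_\alpha s$ for all $\alpha<\lord$. The identity $x = s$ then follows immediately from Axiom~\ref{axiom2}, which states that $\bigcap_{\alpha<\lord} =_\alpha$ is the identity relation on $L$. This reduces the whole lemma to a levelwise agreement statement, which is exactly the kind of statement Axiom~\ref{axiom4} is designed to deliver.

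Fixing $\alpha < \lord$, the first step is to replace the full family of slices by its tail. By Lemma~\ref{lem alpha < beta} the slices form a $\leq$-increasing family, so $x|_\beta \leq x|_\alpha$ for every $\beta<\alpha$; hence the family $\{x|_\beta : \beta<\lord\}$ and its tail $\{x|_\beta : \alpha\leq\beta<\lord\}$ have exactly the same upper bounds, and therefore the same least upper bound. Thus $s = \bigvee_{\alpha\leq\beta<\lord} x|_\beta$. The second step is to observe that every member of this tail is $=_\alpha$-equivalent to $x$: indeed $x =_\alpha x|_\alpha$ by Lemma~\ref{lem-slice}, while $x|_\alpha =_\alpha x|_\beta$ for all $\beta\geq\alpha$, again by Lemma~\ref{lem alpha < beta}, so $x =_\alpha x|_\beta$ for every $\beta$ in the tail. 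The tail is nonempty, since it contains $x|_\alpha$, so Axiom~\ref{axiom4} applies with witness $x$ and yields $x =_\alpha \bigvee_{\alpha\leq\beta<\lord} x|_\beta = s$, as desired.

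The step I expect to be the crux is this restriction to the tail. Axiom~\ref{axiom4} can only be invoked when a single element is $=_\alpha$-equivalent to \emph{every} member of the set being joined, and this requirement fails for the full family: the low slices $x|_\beta$ with $\beta<\alpha$ agree with $x$ only up to level $\beta$, not up to level $\alpha$, so in general they are not $=_\alpha$-equivalent to $x$. Discarding them is legitimate precisely because the slices increase in $\leq$, which is what makes the tail cofinal and join-preserving. Everything else is routine; in particular, the $\leq$-inequality $s \leq x$ (each $x|_\alpha \leq x$, since $x \in [x]_\alpha$ and $x|_\alpha$ is the $\leq$-least element there) is not even needed for the argument, because Axiom~\ref{axiom2} delivers the equality $x = s$ outright once levelwise agreement has been established.
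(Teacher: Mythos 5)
Your proof is correct and follows essentially the same route as the paper: reduce the join to the tail $\{x|_\beta : \beta \geq \alpha\}$ using the $\leq$-monotonicity of slices from Lemma~\ref{lem alpha < beta}, apply Axiom~\ref{axiom4} to get $x =_\alpha \bigvee_{\alpha<\lord} x|_\alpha$ for every $\alpha$, and conclude via Axiom~\ref{axiom2}. The only (immaterial) difference is that you invoke Axiom~\ref{axiom4} with witness $x$ directly, while the paper uses witness $x|_\alpha$ and then transitivity with $x =_\alpha x|_\alpha$.
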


\begin{proof}
Let us denote $\bigvee_{\alpha < \lord} x|_\alpha$ by $y$.
Then by Lemma~\ref{lem alpha < beta},
also $y = \bigvee_{\beta \geq \alpha} x|_\beta$ for all fixed $\alpha < \lord$.
But by Lemma~\ref{lem alpha < beta}, $x|_\alpha =_\alpha x|_\beta$
for all $\beta$ with $\beta \geq \alpha$. Therefore, using Axiom~\ref{axiom4} we get
that for all $\alpha < \lord$, $x|_\alpha=_\alpha y$ and therefore $x =_\alpha y$. Since this holds for all $\alpha$,
by Axiom~\ref{axiom2} we get $x = y$. \end{proof}

\begin{lemma}\label{less-lemma}
For all $x,y \in L$, $x \leq y$ iff $x|_\alpha \leq y$ for all $\alpha < \lord$.
Moreover, if for all $\alpha < \lord$ there is some $\beta$ with $x|_\alpha
\leq y|_\beta$, then $x \leq y$. In particular, if
$x|_\alpha \leq y|_\alpha$ for all $\alpha < \lord$, then  $x \leq y$.
\end{lemma}

\begin{proof} Using the formula $x = \bigvee_{\alpha < \lord} x|_\alpha$ (Lemma~\ref{characterization-lemma}). \end{proof}

\subsection{Compatible Sequences}\label{compatible-sequences}
One can also view the elements of $L$ as least upper bounds of a
special kind of sequences, termed {\em compatible sequences}. The following
lemma will be used below:
\begin{lemma}
\label{lem-sup}
If $\alpha \leq \lord$ is an ordinal and $(x_\beta)_{\beta < \alpha}$ is a sequence
of elements of $L$ such that $x_\beta =_\beta x_\gamma$ and $x_\beta \leq x_\gamma$
whenever $\beta < \gamma <\alpha$, and if $x = \bigvee_{\beta < \alpha} x_\beta$,
then $x_\beta =_\beta x$ holds for all $\beta < \alpha$.
\end{lemma}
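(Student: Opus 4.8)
The plan is to fix an arbitrary $\beta < \alpha$ and derive $x_\beta =_\beta x$ by a single application of Axiom~\ref{axiom4}. The obstacle to applying that axiom directly to the whole family $\{x_\gamma : \gamma < \alpha\}$ is that Axiom~\ref{axiom4} requires every member of the set to be $=_\beta$-equivalent to $x_\beta$, and this fails for the indices $\gamma < \beta$: the compatibility hypothesis only gives $x_\gamma =_\gamma x_\beta$, and since $=_\beta$ is finer than $=_\gamma$ (Remark~\ref{remark0}), we cannot upgrade this to $x_\gamma =_\beta x_\beta$. The key idea is therefore to discard the initial segment and work only with the tail of the sequence.

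First I would set $T_\beta = \{x_\gamma : \beta \leq \gamma < \alpha\}$, which is nonempty because $x_\beta \in T_\beta$. For every $x_\gamma \in T_\beta$ we have $x_\beta =_\beta x_\gamma$: this is trivial when $\gamma = \beta$, and it is exactly the first compatibility hypothesis when $\gamma > \beta$. Thus $x_\beta$ is $=_\beta$-equivalent to every element of the nonempty set $T_\beta$, so the two preconditions of Axiom~\ref{axiom4} (for the ordinal $\beta$, which satisfies $\beta < \alpha \leq \lord$, hence $\beta < \lord$) are met, provided we identify $\bigvee T_\beta$ with $x$.

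Next I would show precisely that $\bigvee T_\beta = x$, so that passing to the tail loses no information. The inequality $\bigvee T_\beta \leq x$ is immediate since $T_\beta \subseteq \{x_\gamma : \gamma < \alpha\}$. For the reverse inequality I would invoke the second (monotonicity) hypothesis: for each $\gamma < \beta$ we have $x_\gamma \leq x_\beta \leq \bigvee T_\beta$, while for $\beta \leq \gamma < \alpha$ trivially $x_\gamma \leq \bigvee T_\beta$; hence $\bigvee T_\beta$ is an upper bound of the entire family, giving $x = \bigvee_{\gamma < \alpha} x_\gamma \leq \bigvee T_\beta$.

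Finally, applying Axiom~\ref{axiom4} with the ordinal $\beta$, the witness $x_\beta$, and the nonempty set $T_\beta$ yields $x_\beta =_\beta \bigvee T_\beta = x$. Since $\beta < \alpha$ was arbitrary, this gives $x_\beta =_\beta x$ for all $\beta < \alpha$, as required. The only genuinely delicate point is the realization that one must restrict attention to the tail $T_\beta$ rather than the full family before invoking Axiom~\ref{axiom4}; once the correct set is chosen, both the $=_\beta$-equivalence and the equality of suprema follow directly from the two hypotheses together with the completeness of $(L,\leq)$.
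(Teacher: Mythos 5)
Your proof is correct and follows essentially the same route as the paper: pass to the tail $\{x_\gamma : \beta \leq \gamma < \alpha\}$, observe its supremum is still $x$ by monotonicity of the sequence, and apply Axiom~4 at level $\beta$. The only (harmless) difference is that you treat all $\alpha$ uniformly, whereas the paper dispatches the case of $\alpha$ zero or a successor separately and reserves the tail argument for limit ordinals.
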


\begin{proof}
Indeed, this is clear when $\alpha$ is $0$ or a successor ordinal.
If $\alpha > 0$ is a limit ordinal, then
for each $\beta <\alpha$ it holds that $x = \bigvee_{\beta \leq \gamma < \alpha}x_\gamma$
and $x_\beta =_\beta x$ for all $\beta \leq \gamma < \alpha$.
Thus, by Axiom~\ref{axiom4}, $x =_\beta x_\beta$ for all $\beta < \alpha$.
\end{proof}

\begin{definition}\label{compatible-sequence}
A sequence $(x_\alpha)_{\alpha<\lord}$ of elements of $L$ is called \emph{compatible}
if each $x_\alpha$ is the $\leq$-least element of $[x_\alpha]_\alpha$
and if $x_\alpha =_\alpha x_\beta$ for all $\alpha <\beta$.
\end{definition}

Note that the above definition implies that $x_\alpha \leq x_\beta$ for all
$\alpha < \beta$, since under these conditions, if $\alpha < \beta$,
then $[x_\beta]_\beta \subseteq [x_\alpha]_\alpha$, and thus
$x_\alpha \leq x_\beta$, since $x_\alpha$ is $\leq$-least in $[x_\alpha]_\alpha$.
Also, by Lemma~\ref{lem-slice}, $x_\alpha$ is a $\sqsubseteq_{\alpha + 1}$-least
element of $[x_\alpha]_\alpha$.

\begin{example}
Consider again the standard model presented at the end of Section~\ref{models-section}.
We give an example of a compatible sequence of interpretations, which may
facilitate the understanding of the remaining results of this subsection.
Let $Z=\{p_i : i< \omega\}$ be a set of propositional variables. We define
the following compatible sequence of interpretations $(I_\alpha)_{\alpha<\Omega}$:
\[
\begin{array}{ll}
(I_\alpha)(p_i) = \left\{\begin{array}{ll}
                        T_i,          &\mbox{if $i\leq \alpha$}\\
                        F_{\alpha+1}, &\mbox{otherwise}
                  \end{array}
                  \right.
\end{array}
\]
It can be easily checked that $(I_\alpha)_{\alpha<\Omega}$ satisfies the
requirements of Definition~\ref{compatible-sequence} and is therefore a
compatible sequence of interpretations. Let $I=\bigvee_{\alpha<\Omega}  I_\alpha$.
Then, it is easy to verify that $I(p_i) = T_i$ for all $i< \omega$. Moreover, notice that
$I_\alpha = I|_\alpha$, for all $\alpha<\Omega$. This is not a coincidence, as the following
Lemma illustrates.
\end{example}

\begin{lemma}
Suppose that $(x_\alpha)_{\alpha<\lord}$ is a compatible sequence and let
$x = \bigvee_{\alpha<\lord}  x_\alpha$. Then $x|_\alpha = x_\alpha$
for all $\alpha<\lord$.
\end{lemma}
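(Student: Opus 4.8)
The plan is to show that for each fixed $\alpha < \lord$ both $x_\alpha$ and $x|_\alpha$ are the $\leq$-least element of one and the same set, namely $[x]_\alpha$, and then invoke uniqueness of least elements. By Lemma~\ref{lem-slice}, $x|_\alpha$ is the $\leq$-least element of $[x]_\alpha$ (and it lies in $[x]_\alpha$, so $x|_\alpha =_\alpha x$). On the other hand, the very definition of a compatible sequence (Definition~\ref{compatible-sequence}) tells us that $x_\alpha$ is the $\leq$-least element of $[x_\alpha]_\alpha$. Hence the whole argument reduces to proving the single equality $x_\alpha =_\alpha x$: once this is known we have $[x_\alpha]_\alpha = [x]_\alpha$, so $x_\alpha$ is the $\leq$-least element of $[x]_\alpha$, and since a least element is unique this forces $x_\alpha = x|_\alpha$.

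To establish $x_\alpha =_\alpha x$ for every $\alpha < \lord$, I would apply Lemma~\ref{lem-sup} to the entire compatible sequence, taking the index bound to be $\lord$ itself (the hypothesis of that lemma permits the bound $\alpha \leq \lord$, so $\lord$ is allowed). The two hypotheses of Lemma~\ref{lem-sup} are exactly what compatibility provides: the relation $x_\beta =_\beta x_\gamma$ for $\beta < \gamma < \lord$ is part of Definition~\ref{compatible-sequence}, while the monotonicity $x_\beta \leq x_\gamma$ for $\beta < \gamma$ was recorded in the remark immediately following that definition (it follows from $[x_\gamma]_\gamma \subseteq [x_\beta]_\beta$ together with $x_\beta$ being $\leq$-least in $[x_\beta]_\beta$). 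Since $x = \bigvee_{\beta < \lord} x_\beta$ is precisely the supremum considered in Lemma~\ref{lem-sup}, its conclusion yields $x_\beta =_\beta x$ for all $\beta < \lord$, which is the required equality.

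Combining the two steps completes the proof: for each $\alpha$ we obtain $[x_\alpha]_\alpha = [x]_\alpha$ from $x_\alpha =_\alpha x$, and then the coincidence of the two $\leq$-least elements gives $x|_\alpha = x_\alpha$. I do not expect any genuine obstacle here; the only point that requires care is checking that Lemma~\ref{lem-sup} may legitimately be applied with the index bound equal to the limit ordinal $\lord$ and that its two hypotheses are supplied verbatim by the definition of compatibility and the remark after it. Everything else is bookkeeping with the already-established facts that $x|_\alpha$ is the $\leq$-least member of $[x]_\alpha$ and that least elements in a partial order are unique.
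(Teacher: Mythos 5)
Your proposal is correct and follows essentially the same route as the paper: both apply Lemma~\ref{lem-sup} (with index bound $\lord$, using the $\leq$-monotonicity noted after Definition~\ref{compatible-sequence}) to get $x_\alpha =_\alpha x$, deduce $[x_\alpha]_\alpha = [x]_\alpha$, and conclude by uniqueness of the $\leq$-least element of that set. No issues.
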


\begin{proof}
We know that $x_\alpha =_\alpha x_\beta$ and $x_\alpha \leq x_\beta$
for all $\alpha < \beta$. Thus, by Lemma~\ref{lem-sup}, $x =_\alpha x_\alpha$
for all $\alpha < \lord$. But also $x =_\alpha x|_\alpha$ (since $x|_\alpha \in [x]_\alpha$)
and therefore $x|_\alpha =_\alpha x_\alpha$
for all $\alpha < \lord$. Thus, $[x_\alpha]_\alpha = [x|_\alpha]_\alpha$,
and since both $x|_\alpha$ and $x_\alpha$ are $\leq$-least elements of this set,
$x|_\alpha = x_\alpha$.
\end{proof}

The following fact is an immediate corollary of the previous lemmas.

\begin{corollary}
There is a bijection between $L$ and the set of all compatible sequences
$(x_\alpha)_{\alpha < \lord}$. This bijection maps $x\in L$ to the compatible sequence
$(x|_\alpha)_{\alpha < \lord}$, and a compatible sequence
$(x_\alpha)_{\alpha < \lord}$ to $\bigvee_{\alpha < \lord} x_\alpha$.
\end{corollary}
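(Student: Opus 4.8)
The plan is to exhibit the two maps explicitly and show that they are mutually inverse, since essentially all of the required content is already contained in the preceding lemmas. Write $\Phi(x) = (x|_\alpha)_{\alpha<\lord}$ for the map from $L$ to sequences, and $\Psi((x_\alpha)_{\alpha<\lord}) = \bigvee_{\alpha<\lord} x_\alpha$ for the candidate inverse. The first task is to check that $\Phi$ actually takes values in the set of compatible sequences, i.e., that $(x|_\alpha)_{\alpha<\lord}$ satisfies Definition~\ref{compatible-sequence} for every $x\in L$. By Lemma~\ref{lem-slice}, $x|_\alpha$ is the $\leq$-least element of $[x]_\alpha$ and $x|_\alpha =_\alpha x$; the latter gives $[x|_\alpha]_\alpha = [x]_\alpha$, so $x|_\alpha$ is the $\leq$-least element of $[x|_\alpha]_\alpha$, which is the first defining condition. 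The second condition, $x|_\alpha =_\alpha x|_\beta$ for $\alpha < \beta$, is exactly Lemma~\ref{lem alpha < beta}. Hence $\Phi$ is well defined into compatible sequences, while $\Psi$ is trivially well defined into $L$.

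It then remains to verify $\Psi\circ\Phi = \mathrm{id}_L$ and $\Phi\circ\Psi = \mathrm{id}$ on compatible sequences. The first identity asserts $\bigvee_{\alpha<\lord} x|_\alpha = x$ for all $x\in L$, which is precisely Lemma~\ref{characterization-lemma}. The second identity asserts that if $(x_\alpha)_{\alpha<\lord}$ is compatible and $x = \bigvee_{\alpha<\lord} x_\alpha$, then $x|_\alpha = x_\alpha$ for all $\alpha<\lord$; this is exactly the statement of the immediately preceding lemma. Together these two identities show that $\Phi$ and $\Psi$ are mutual inverses, whence both are bijections, establishing the claimed correspondence.

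There is in fact no genuine obstacle here: the corollary is an assembly of results already proved, and the only point requiring a moment's care is the verification that $\Phi$ lands in the correct codomain --- specifically the observation that $[x|_\alpha]_\alpha = [x]_\alpha$, which lets one read off the ``$\leq$-least element'' clause of Definition~\ref{compatible-sequence} directly from Lemma~\ref{lem-slice}. Once the two maps are seen to be well defined, the inverse relationship is immediate from Lemma~\ref{characterization-lemma} and the preceding lemma, and nothing further needs to be checked.
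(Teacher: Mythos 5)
Your proposal is correct and matches the paper's intent exactly: the paper states this as an ``immediate corollary of the previous lemmas'' without writing out a proof, and your argument is precisely the assembly it has in mind --- Lemma~\ref{lem-slice} and Lemma~\ref{lem alpha < beta} to see that $x \mapsto (x|_\alpha)_{\alpha<\lord}$ lands in compatible sequences, Lemma~\ref{characterization-lemma} for one composite, and the immediately preceding lemma for the other. Your explicit note that $[x|_\alpha]_\alpha = [x]_\alpha$ is the right small verification to make the codomain check airtight.
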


We can now restate Corollaries~\ref{cor-sqsubset-first} and~\ref{cor-sqsubset}
as-well-as Lemma~\ref{less-lemma} under the view of compatible sequences:
\begin{corollary}
Let $x,y\in L$ and let $(x_\alpha)_{\alpha < \lord}$ and $(y_\alpha)_{\alpha < \lord}$ be the
corresponding compatible sequences. Then, $x \sqsubset y$ iff there is some $\alpha < \lord$ with
$x_\alpha \sqsubset_\alpha y_\alpha$. Moreover, $x \sqsubseteq y$
iff either there is some $\alpha < \lord$ with $x_\alpha \sqsubset y_\alpha$,
or $x =_\alpha y$ for all $\alpha < \lord$
(or equivalently, $x \sqsubseteq_\alpha y$, for all $\alpha < \lord$).
\end{corollary}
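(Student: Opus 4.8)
The plan is to recognise that this corollary is simply a transcription of Corollary~\ref{cor-sqsubset-first} (augmented by the parenthetical equivalences recorded after Corollaries~\ref{cor-sqsubset-first} and~\ref{cor-sqsubset}) into the language of compatible sequences. Consequently the entire argument reduces to the identification provided by the immediately preceding bijection corollary, and no new order-theoretic content is needed.

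First I would invoke that bijection: it maps each $x \in L$ to the compatible sequence $(x|_\alpha)_{\alpha < \lord}$, and is inverse to the map sending a compatible sequence to its least upper bound. Since $(x_\alpha)_{\alpha < \lord}$ and $(y_\alpha)_{\alpha < \lord}$ are, by hypothesis, the compatible sequences corresponding to $x$ and $y$, injectivity of this correspondence (equivalently, the fact that the two constructions are mutually inverse) forces $x_\alpha = x|_\alpha$ and $y_\alpha = y|_\alpha$ for every $\alpha < \lord$.

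With these equalities in hand, the first assertion, namely that $x \sqsubset y$ iff $x_\alpha \sqsubset_\alpha y_\alpha$ for some $\alpha < \lord$, follows by substituting $x|_\alpha$ for $x_\alpha$ and $y|_\alpha$ for $y_\alpha$ in the first statement of Corollary~\ref{cor-sqsubset-first}. Performing the same substitution in the ``Moreover'' clause of that corollary yields the stated characterisation of $\sqsubseteq$: either $x_\alpha \sqsubset y_\alpha$ for some $\alpha < \lord$, or $x =_\alpha y$ for all $\alpha < \lord$. The parenthetical reformulation ``$x \sqsubseteq_\alpha y$ for all $\alpha < \lord$'' is equivalent to ``$x =_\alpha y$ for all $\alpha < \lord$'' by Lemma~\ref{some-consequences}(a) together with the remark following Corollary~\ref{cor-sqsubset-first}, both conditions amounting to $x = y$.

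Because the proof is nothing more than transporting an already-established equivalence across a bijection, I expect essentially no obstacle. The only step demanding a moment's care is verifying that the sequences attached to $x$ and $y$ are indeed the slice sequences $(x|_\alpha)_{\alpha < \lord}$ and $(y|_\alpha)_{\alpha < \lord}$, which is precisely what the bijection corollary guarantees; once that identification is made, the substitution is purely formal.
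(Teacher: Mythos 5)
Your proposal is correct and matches the paper's intent exactly: the paper states this corollary without proof, presenting it as an immediate restatement of Corollary~\ref{cor-sqsubset-first} obtained by identifying $x_\alpha$ with $x|_\alpha$ and $y_\alpha$ with $y|_\alpha$ via the preceding bijection corollary. Your argument makes that identification explicit and substitutes accordingly, which is precisely the intended (and only needed) reasoning.
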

\begin{corollary}
\label{cor-sqsubset-compatible}
Let $x,y\in L$ and let $(x_\alpha)_{\alpha < \lord}$ and $(y_\alpha)_{\alpha < \lord}$ be the
corresponding compatible sequences. Then, $x \sqsubset y$ iff there is some $\alpha < \lord$
with $x_\alpha \sqsubset_\alpha y$. Thus, $x \sqsubseteq y$ iff
either there is some $\alpha < \lord$ with $x_\alpha \sqsubset_\alpha y$, or
$x_\alpha =_\alpha y$ (or $x_\alpha \sqsubseteq_\alpha y$)
for all $\alpha < \lord$.
\end{corollary}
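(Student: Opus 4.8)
The plan is to reduce this statement directly to Corollary~\ref{cor-sqsubset}, which asserts exactly the same equivalences but phrased in terms of the slices $x|_\alpha$ rather than the components $x_\alpha$ of the compatible sequence. The only work needed is to identify these two families of elements.

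First I would invoke the bijection established just above: by definition that bijection sends $x\in L$ to the compatible sequence $(x|_\alpha)_{\alpha<\lord}$, and conversely the preceding lemma shows that whenever $(x_\alpha)_{\alpha<\lord}$ is compatible and $x=\bigvee_{\alpha<\lord}x_\alpha$, then $x_\alpha=x|_\alpha$ for every $\alpha<\lord$. Hence, in the notation of the present statement, the compatible sequence corresponding to $x$ satisfies $x_\alpha=x|_\alpha$, and likewise $y_\alpha=y|_\alpha$, for all $\alpha<\lord$.

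With this identification, every occurrence of $x_\alpha$ in the claim may be replaced by $x|_\alpha$, so that ``$x\sqsubset y$ iff there is some $\alpha<\lord$ with $x_\alpha\sqsubset_\alpha y$'' becomes the literal first assertion of Corollary~\ref{cor-sqsubset}, and the $\sqsubseteq$-characterization likewise becomes the second assertion of that corollary. The result then follows immediately. I do not expect any genuine obstacle, as the entire content is carried by the bijection; the single point requiring care is to verify that the correspondence runs in the intended direction, namely that the $\alpha$-th term of the compatible sequence attached to $x$ is precisely the slice $x|_\alpha$, which is exactly what makes the substitution into Corollary~\ref{cor-sqsubset} legitimate.
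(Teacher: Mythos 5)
Your proposal is correct and matches the paper's (implicit) argument exactly: the paper states Corollary~\ref{cor-sqsubset-compatible} without proof as a direct restatement of Corollary~\ref{cor-sqsubset}, justified by the preceding lemma and bijection identifying $x_\alpha$ with $x|_\alpha$. Nothing further is needed.
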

\begin{lemma}
Let $x,y\in L$ and let $(x_\alpha)_{\alpha < \lord}$ and $(y_\alpha)_{\alpha < \lord}$ be the
corresponding compatible sequences. Then, $x \leq y$ iff $x_\alpha \leq y$ for all $\alpha < \lord$.
Moreover, if for all $\alpha < \lord$ there is some $\beta$ with $x_\alpha
\leq y_\beta$, then $x \leq y$. In particular, if
$x_\alpha \leq y_\alpha$ for all $\alpha < \lord$, then  $x \leq y$.
\end{lemma}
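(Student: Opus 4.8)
The plan is to observe that this Lemma is nothing more than a restatement of Lemma~\ref{less-lemma} under the correspondence between elements of $L$ and compatible sequences established in the preceding results. First I would invoke the lemma asserting that if $(x_\alpha)_{\alpha<\lord}$ is the compatible sequence corresponding to $x$ (i.e. $x = \bigvee_{\alpha<\lord} x_\alpha$), then $x_\alpha = x|_\alpha$ for all $\alpha < \lord$; applying it to both $x$ and $y$ gives $x_\alpha = x|_\alpha$ and $y_\beta = y|_\beta$ for all indices. This identification is exactly what the bijection Corollary guarantees, and it is the only ingredient needed to transfer statements about slices into statements about the compatible sequences.

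With this substitution in hand, each of the three assertions follows immediately from Lemma~\ref{less-lemma}. For the first claim, Lemma~\ref{less-lemma} states that $x \leq y$ iff $x|_\alpha \leq y$ for all $\alpha < \lord$, and rewriting $x|_\alpha$ as $x_\alpha$ yields $x \leq y$ iff $x_\alpha \leq y$ for all $\alpha < \lord$. For the second claim, the hypothesis that for every $\alpha < \lord$ there is some $\beta$ with $x_\alpha \leq y_\beta$ translates verbatim into the hypothesis $x|_\alpha \leq y|_\beta$ of the corresponding clause of Lemma~\ref{less-lemma}, so we again conclude $x \leq y$. The final ``in particular'' statement is then just the special case obtained by taking $\beta = \alpha$.

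I do not anticipate any genuine obstacle in this argument, since the entire mathematical content is carried by Lemma~\ref{less-lemma} together with the identity $x_\alpha = x|_\alpha$. The only point requiring care is to cite correctly the preceding lemma that identifies the compatible sequence of $x$ with $(x|_\alpha)_{\alpha<\lord}$, so that the passage from the slice formulation to the compatible-sequence formulation is fully justified; once that identification is stated, the proof reduces to a direct substitution into an already-proved result.
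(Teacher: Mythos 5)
Your proposal is correct and matches the paper's intent exactly: the paper states this lemma without proof precisely because it is the restatement of Lemma~\ref{less-lemma} obtained via the identification $x_\alpha = x|_\alpha$ from the preceding lemma and bijection corollary, which is the substitution you carry out. Nothing further is needed.
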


We now slightly generalize the notion of compatible sequence:
\begin{definition}
Let $(x_\beta)_{\beta < \alpha}$, $\alpha \leq \lord$, be a sequence of elements of
$L$. We call this sequence a \emph{partial compatible sequence}
if each $x_\beta$ is the $\leq$-least
element of $[x_\beta]_\beta$, and if $x_\beta =_\beta x_\gamma$
for all $\beta < \gamma$.
\end{definition}

Note that the above definition implies
that $x_\beta \leq x_\gamma$ for all $\beta < \gamma <\alpha$.
Moreover, for each $\beta < \alpha$, $x_\beta$ is a
$\sqsubseteq_{\beta + 1}$-least element of $[x_\beta]_\beta$.
Notice also that when $\alpha$ is equal to
$\lord$ then we actually have a compatible sequence.

\begin{lemma}
\label{lem-partial}
Suppose that $(x_\beta)_{\beta < \alpha}$ is a partial compatible
sequence where $\alpha < \lord$. Define $x = \bigvee_{\beta < \alpha} x_\beta$.
Let $y_\delta = x$ for all $\alpha \leq \delta <\lord$ and $y_\beta = x_\beta$
if $\beta < \alpha$. Then $x$ is the $\leq$-least
element of $(x]_\alpha$, which is also necessarily a
$\sqsubseteq_\alpha$-least element of $(x]_\alpha$.
Moreover, the sequence $(y_\delta)_{\delta < \lord}$
is compatible.
\end{lemma}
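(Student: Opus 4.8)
The plan is to reduce the whole statement to Lemma~\ref{lem-sup} together with the $\leq$-leastness of the individual components $x_\beta$. First I would observe that a partial compatible sequence $(x_\beta)_{\beta<\alpha}$ meets the hypotheses of Lemma~\ref{lem-sup}: by definition $x_\beta =_\beta x_\gamma$ whenever $\beta < \gamma < \alpha$, and (as noted immediately after the definition of partial compatible sequence) this also forces $x_\beta \leq x_\gamma$. Hence Lemma~\ref{lem-sup} applies to $x = \bigvee_{\beta<\alpha} x_\beta$ and yields $x_\beta =_\beta x$ for every $\beta < \alpha$; this identity is the workhorse of the whole argument.

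Next I would show that $x$ is the $\leq$-least element of $(x]_\alpha$. Membership $x \in (x]_\alpha$ is immediate. For leastness, take any $z \in (x]_\alpha$, so that $x =_\beta z$ for all $\beta < \alpha$; combining with $x_\beta =_\beta x$ and transitivity of $=_\beta$ gives $x_\beta =_\beta z$, i.e.\ $z \in [x_\beta]_\beta$. Since each $x_\beta$ is by hypothesis the $\leq$-least element of $[x_\beta]_\beta$, we obtain $x_\beta \leq z$ for all $\beta < \alpha$, whence $x = \bigvee_{\beta<\alpha} x_\beta \leq z$. As $\leq$-least elements are unique, $x$ coincides with the $\leq$-least element of $(x]_\alpha$ furnished by Remark~\ref{rem-remark1}, which that remark also identifies as a $\sqsubseteq_\alpha$-least element of $(x]_\alpha$; this settles the second assertion.

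For compatibility of $(y_\delta)_{\delta<\lord}$ I would verify the two defining clauses. That each $y_\delta$ is $\leq$-least in $[y_\delta]_\delta$ is immediate for $\delta < \alpha$, where $y_\delta = x_\delta$; for $\delta \geq \alpha$ we have $y_\delta = x$ and, since $[x]_\delta = (x]_{\delta+1} \subseteq (x]_\alpha$ (more constraints give a smaller set, as $\delta+1 > \alpha$) with $x \in [x]_\delta$, the $\leq$-leastness of $x$ in $(x]_\alpha$ established above restricts to $\leq$-leastness in $[x]_\delta$. For the relation $y_\delta =_\delta y_{\delta'}$ with $\delta < \delta'$ I would split into three ranges: if $\delta < \delta' < \alpha$ it is the hypothesis $x_\delta =_\delta x_{\delta'}$; if $\delta < \alpha \leq \delta'$ it reads $x_\delta =_\delta x$, exactly the conclusion of Lemma~\ref{lem-sup}; and if $\alpha \leq \delta < \delta'$ both terms equal $x$, so it holds by reflexivity.

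The only genuinely nontrivial step is the $\leq$-leastness of $x$ in $(x]_\alpha$, which depends on transporting an arbitrary $z \in (x]_\alpha$ back into each $[x_\beta]_\beta$ through the identity $x_\beta =_\beta x$; once that is in place, the passage to a $\sqsubseteq_\alpha$-least element is handed to us by Remark~\ref{rem-remark1}, and the compatibility check is routine case analysis over the ordinal ranges $\delta < \alpha$, $\delta < \alpha \leq \delta'$, and $\alpha \leq \delta$. I expect no real obstacle beyond justifying carefully the inclusion $(x]_{\delta+1} \subseteq (x]_\alpha$ for the tail indices $\delta \geq \alpha$.
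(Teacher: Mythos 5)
Your proposal is correct and follows essentially the same route as the paper's proof: Lemma~\ref{lem-sup} gives $x_\beta =_\beta x$, any $z \in (x]_\alpha$ is transported into each $[x_\beta]_\beta$ to conclude $x \leq z$, the $\sqsubseteq_\alpha$-leastness is delegated to Remark~\ref{rem-remark1} via uniqueness of the $\leq$-least element, and the tail of the compatible sequence is handled by restricting $\leq$-leastness from $(x]_\alpha$ to $[x]_\delta$. Your explicit three-way case check of the $=_\delta$ clauses is slightly more detailed than the paper, which leaves that part implicit, but the argument is the same.
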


\begin{proof}
By Lemma~\ref{lem-sup}, $x =_\beta x_\beta$ for all $\beta < \alpha$.
Consider the set $(x]_\alpha$.
We want to show that
$x$ is the $\leq$-least element of $(x]_\alpha$.

First, $x \in (x]_\alpha$.
Second, if $z \in (x]_\alpha$ then $z =_\beta x =_\beta x_\beta$ for all
$\beta < \alpha$, hence $x_\beta \leq z$ for all $\beta < \alpha$ and
$x =\bigvee_{\beta < \alpha} x_\beta \leq z$.


To complete the proof of the fact that $(y_\delta)_{\delta < \lord}$ is compatible,
it suffices to show that $y_\delta$ is the $\leq$-least element of $[x]_\delta$
for all $\delta \geq \alpha$. So let $\delta \geq \alpha$.
Then $x \in [x]_\delta \subseteq (x]_\alpha$. Since $x$ is
$\leq$-least in $(x]_\alpha$ and $x \in [x]_\delta$, it is
also $\leq$-least in $[x]_\delta$.
\end{proof}

\begin{corollary}\label{aux-corol}
For each partial compatible sequence $(x_\beta)_{\beta < \alpha}$, $\alpha \leq \lord$,
$x = \bigvee_{\beta < \alpha} x_\beta$ is the unique element of
$L$ with $x|_\beta = x_\beta$  for $\beta < \alpha$ and $x|_\delta = x$
for all $\delta$ with $\alpha \leq \delta < \lord$.
\end{corollary}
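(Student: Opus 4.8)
The plan is to reduce the statement to Lemma~\ref{lem-partial} together with the bijection between $L$ and compatible sequences established above. Set $x = \bigvee_{\beta<\alpha} x_\beta$. I would treat separately the cases $\alpha=\lord$ and $\alpha<\lord$, in each exhibiting a genuine compatible sequence $(y_\gamma)_{\gamma<\lord}$ whose join is $x$ and whose terms are exactly the slices claimed for $x$.

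When $\alpha=\lord$ there is nothing to extend: by the note following the definition of partial compatible sequences, $(x_\beta)_{\beta<\lord}$ is already a compatible sequence, so the lemma asserting that the slices of the join of a compatible sequence recover its terms gives $x|_\beta=x_\beta$ for all $\beta<\lord$, and the condition on $\delta$ with $\alpha\le\delta<\lord$ is vacuous. When $\alpha<\lord$, I would invoke Lemma~\ref{lem-partial} on $(x_\beta)_{\beta<\alpha}$: it produces the compatible sequence $(y_\gamma)_{\gamma<\lord}$ with $y_\beta=x_\beta$ for $\beta<\alpha$ and $y_\delta=x$ for $\alpha\le\delta<\lord$. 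Since $\lord$ is a limit ordinal and $\alpha<\lord$, at least one index $\delta$ with $\alpha\le\delta<\lord$ occurs, while every $y_\beta=x_\beta\le x$; hence $\bigvee_{\gamma<\lord}y_\gamma=x$. Applying the same compatible-sequence lemma to this sequence yields $x|_\gamma=y_\gamma$ for every $\gamma$, that is, $x|_\beta=x_\beta$ for $\beta<\alpha$ and $x|_\delta=x$ for $\alpha\le\delta<\lord$. This settles existence in both cases.

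For uniqueness I would read the characterizing property as fixing the \emph{entire} slice sequence of the sought element: an element $w$ qualifies precisely when $w|_\beta=x_\beta$ for $\beta<\alpha$ and $w|_\delta=x$ for $\alpha\le\delta<\lord$, where the tail target is the fixed element $x=\bigvee_{\beta<\alpha}x_\beta$ just constructed. Such a $w$ then satisfies $(w|_\gamma)_{\gamma<\lord}=(y_\gamma)_{\gamma<\lord}=(x|_\gamma)_{\gamma<\lord}$, and since the map $w\mapsto(w|_\gamma)_{\gamma<\lord}$ is the bijection between $L$ and compatible sequences --- in particular injective --- we conclude $w=x$.

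The step requiring the most care is exactly this interpretation of the tail clause: it must be understood as $w|_\delta=x$ for the fixed join $x$, rather than as the self-referential $w|_\delta=w$. The constructed element does have the property $x|_\delta=x$, but that self-referential form alone would not determine the element, since the level-$\alpha$ data is not forced by the lower slices together with $w|_\alpha=w$; uniqueness therefore genuinely comes from comparing against the fixed element $x$ through injectivity of the slice map, not from any order-theoretic squeezing. Once this is granted, the whole corollary is an immediate consequence of Lemma~\ref{lem-partial} and the bijection, as its placement in the text suggests.
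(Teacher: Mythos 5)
Your proof is correct and is essentially the argument the paper intends: the corollary is stated there without proof as an immediate consequence of Lemma~\ref{lem-partial} and the slice/compatible-sequence machinery, which is exactly the route you take (extend to a compatible sequence, identify its join with $x$, recover the slices, and get uniqueness from injectivity of the slice map). Your remark that the tail clause $x|_\delta = x$ must be read against the fixed join $\bigvee_{\beta<\alpha}x_\beta$ rather than self-referentially is a correct and worthwhile clarification of the statement.
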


\begin{corollary}
\label{cor-partsup}
Suppose that $(x_\beta)_{\beta < \alpha}$, $\alpha \leq \lord$, is a partial compatible
sequence and $x = \bigvee_{\beta < \alpha} x_\beta$.
If $z \in L$ with $x_\beta \sqsubseteq_\beta z$ for all $\beta < \alpha$, then
$x \sqsubseteq z$.
\end{corollary}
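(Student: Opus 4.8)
The plan is to reduce the claim to the slice characterization of $\sqsubseteq$ supplied by Corollary~\ref{cor-sqsubset}, exploiting the explicit description of the slices of $x = \bigvee_{\beta < \alpha} x_\beta$ from Corollary~\ref{aux-corol}, namely $x|_\beta = x_\beta$ for $\beta < \alpha$ and $x|_\delta = x$ for $\alpha \leq \delta < \lord$. At the outset I would record two auxiliary facts. First, $x =_\beta x_\beta$ for every $\beta < \alpha$, which follows from Lemma~\ref{lem-sup} applied to the partial compatible sequence. Second, $x$ is the $\leq$-least element of $(x]_\alpha$, and hence, by Remark~\ref{rem-remark1}, also a $\sqsubseteq_\alpha$-least element of it; this is exactly the content of Lemma~\ref{lem-partial} (available whenever $\alpha < \lord$).

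I would first dispose of the trivial case $x = z$. Otherwise $x \neq z$, so by Lemma~\ref{some-consequences}(a) there is a \emph{least} ordinal $\gamma_0 < \lord$ with $\lnot(x =_{\gamma_0} z)$, and by minimality $x =_\gamma z$ holds for all $\gamma < \gamma_0$. The whole argument then aims at a single target: proving $x|_{\gamma_0} \sqsubset_{\gamma_0} z$, which by Corollary~\ref{cor-sqsubset} immediately gives $x \sqsubset z$, and therefore $x \sqsubseteq z$. I would split on the position of $\gamma_0$ relative to $\alpha$.

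If $\gamma_0 < \alpha$, then $x|_{\gamma_0} = x_{\gamma_0}$ and the hypothesis gives $x_{\gamma_0} \sqsubseteq_{\gamma_0} z$. Since $x =_{\gamma_0} x_{\gamma_0}$ while $\lnot(x =_{\gamma_0} z)$, we cannot have $x_{\gamma_0} =_{\gamma_0} z$; hence $x_{\gamma_0} \sqsubset_{\gamma_0} z$, as required. (When $\alpha = \lord$ this is the only case that can occur, since then no $\gamma_0 \geq \alpha$ lies below $\lord$, so Lemma~\ref{lem-partial} is not invoked in that degenerate case.)

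The harder case, and the step I expect to be the main obstacle, is $\gamma_0 \geq \alpha$: here $x|_{\gamma_0} = x$, yet my hypotheses only constrain $z$ at levels below $\alpha$, so I must compare $x$ with $z$ at the higher level $\gamma_0$. The key is to show $z \in (x]_{\gamma_0}$ together with the fact that $x$ is $\sqsubseteq_{\gamma_0}$-least in $(x]_{\gamma_0}$. Membership $z \in (x]_{\gamma_0}$ is immediate from $x =_\gamma z$ for all $\gamma < \gamma_0$. For the leastness I would not appeal to $\sqsubseteq_{\gamma_0}$ directly (Lemma~\ref{lem-partial} only yields $\sqsubseteq_\alpha$-leastness in $(x]_\alpha$); instead I would transfer $\leq$-leastness: since $\alpha \leq \gamma_0$ forces $(x]_{\gamma_0} \subseteq (x]_\alpha$ and $x \in (x]_{\gamma_0}$, the $\leq$-leastness of $x$ in $(x]_\alpha$ (Lemma~\ref{lem-partial}) restricts to $\leq$-leastness in $(x]_{\gamma_0}$, which Remark~\ref{rem-remark1} then upgrades to $\sqsubseteq_{\gamma_0}$-leastness. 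Consequently $x \sqsubseteq_{\gamma_0} z$, and because $\lnot(x =_{\gamma_0} z)$ this sharpens to $x \sqsubset_{\gamma_0} z = z$ at slice level $\gamma_0$, completing the proof via Corollary~\ref{cor-sqsubset}.
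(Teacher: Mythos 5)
Your proof is correct and follows essentially the same route as the paper: it uses Corollary~\ref{aux-corol} to read off the slices of $x$, Lemma~\ref{lem-partial} for the $\leq$-leastness of $x$ in $(x]_\alpha$, the transfer of that leastness to $(x]_{\gamma_0}$ upgraded to $\sqsubseteq_{\gamma_0}$-leastness via Remark~\ref{rem-remark1}, and Corollary~\ref{cor-sqsubset} to conclude. The only difference is organizational — you jump directly to the least $\gamma_0$ with $x \neq_{\gamma_0} z$ instead of the paper's staged case analysis (below $\alpha$, at $\alpha$, above $\alpha$), which also lets you absorb the $\alpha = \lord$ case without a separate argument.
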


\begin{proof}
If $\alpha = \lord$ then $(x_\beta)_{\beta < \alpha}$ is a compatible sequence. Since
$x_\beta \sqsubseteq_\beta z$ for all $\beta < \alpha$, we have by Lemma~\ref{some-consequences}(a)
that $x=y$.

Assume now that $\alpha < \lord$. By Corollary~\ref{aux-corol} we
know that $x|_\delta = x_\delta$ if $\delta < \alpha$,
otherwise $x|_\delta = x$. Thus, by Corollary~\ref{cor-sqsubset},
our claim is clear if $x|_\beta \sqsubset_\beta z$ (ie., $x_\beta \sqsubset_\beta z$)
for some $\beta < \alpha$.

Suppose that $x_\beta =_\beta z$ for all $\beta < \alpha$.
Then $z \in (x]_\alpha$. By Lemma~\ref{lem-partial}
we know that $x$ is the $\leq$-least
element and a $\sqsubseteq_\alpha$-least element of $(x]_\alpha$.
Thus, $x \sqsubseteq_\alpha z$. If $x \sqsubset_\alpha z$
then we are done.

Suppose that $x =_\alpha z$.
If $x =_\gamma z$ for all $\gamma \geq \alpha$, then $x = z$.
Otherwise let $\gamma > \alpha$ denote the
least ordinal such that
$x \neq_\gamma z$. We have that $x|_\gamma = x$.
Also, $x \in (x]_\gamma \subseteq (x]_\alpha$.
Since $x$ is $\leq$-least in $(x]_\alpha$, $x$ is
also $\leq$-least in $(x]_\gamma$. Thus, $x \sqsubseteq_\gamma z$,
since the $\leq$-least element is also a $\sqsubseteq_\gamma$-least element
of $(x]_\gamma$.
 But $x \neq_\gamma z$, thus $x \sqsubset_\gamma z$,
proving $x \sqsubseteq z$.
\end{proof}

\section{$L$ is a Complete Lattice}\label{lattice-section}
 In this section we demonstrate that the partial order $(L,\sqsubseteq)$ is
actually a complete lattice. We use the following convenient definition
of complete lattices:
\begin{definition}~\cite{DP02}[Definition 2.4, page 34 and Theorem 2.31, page 47]
A partially ordered set $(L,\sqsubseteq)$ is called a {\em complete lattice}
if $L$ has a bottom element and every non-empty subset $X\subseteq L$
has a least upper bound in $L$.
\end{definition}

The above definition is equivalent to the more usual one (which specifies
that a partially ordered set $(L,\sqsubseteq)$ is a complete lattice if
every subset $X$ of $L$ has a least upper bound and a greatest lower bound
in $L$).

Therefore, in our case it suffices to show that $\bot \sqsubseteq x$
for all $x \in L$ and that every nonempty subset $X$ of $L$ has a
least upper bound $x_\infty \in L$ with respect to $\sqsubseteq$.
The construction of $x_\infty$ proceeds in stages. At each stage $\alpha$,
an approximation $x_\alpha$ of $x_\infty$ is constructed. Intuitively, $x_\alpha$
is an upper bound of the elements of $X$ if we restrict our comparison to
the stages that are less than or equal to $\alpha$. The limit of this construction
is the desired element $x_\infty$.

The main result of this section implies (as a special case) that the set of
interpretations of a logic program equipped with the relation $\sqsubseteq$,
forms a complete lattice. The bottom element of this lattice
is the interpretation that assigns to each propositional variable the value $F_0$
and the top element is the one that assigns to every variable the value $T_0$. At
a first reading of the proof of the following theorem, it may be easier to think
of the standard model (instead of arbitrary models of Axioms 1-4).

\begin{theorem}\label{complete-lattice-theorem}
$(L,\sqsubseteq)$ is a complete lattice.
\end{theorem}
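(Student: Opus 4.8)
Following the definition of complete lattice just recalled, it suffices to exhibit a $\sqsubseteq$-bottom element and to show that every nonempty $X\subseteq L$ has a $\sqsubseteq$-least upper bound. For the bottom element I would take $\perp$, the $\leq$-least element of $L$, and argue $\perp\sqsubseteq x$ for every $x$: if $x\neq\perp$, then Axiom~\ref{axiom2} yields a least ordinal $\alpha$ with $\perp\neq_\alpha x$, so $x\in(\perp]_\alpha$; since $\perp$ is $\leq$-least in $L$ it is $\leq$-least, hence (Remark~\ref{rem-remark1}) $\sqsubseteq_\alpha$-least, in $(\perp]_\alpha$, giving $\perp\sqsubseteq_\alpha x$, and as $\perp\neq_\alpha x$ this is $\perp\sqsubset_\alpha x$, so $\perp\sqsubset x$.

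The heart of the proof is the construction of the least upper bound $x_\infty$ of a nonempty $X$, carried out by the stage-wise process sketched before the theorem. By transfinite recursion define the \emph{surviving sets} $X_\alpha=\{x\in X:\ x=_\beta x_\beta\ \text{for all }\beta<\alpha\}$ (so $X_0=X$) and, as long as $X_\alpha\neq\emptyset$, set $x_\alpha=\bigsqcup_\alpha X_\alpha$; this is legitimate since the elements of $X_\alpha$ pairwise agree below $\alpha$ and hence lie in a common $(z]_\alpha$. The $X_\alpha$ are decreasing, so there is a least $\alpha^*\leq\lord$ with $X_{\alpha^*}=\emptyset$ (take $\alpha^*=\lord$ if they never vanish). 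Using Lemma~\ref{lem-lem0} (each $x_\alpha$ is $\leq$-least in $[x_\alpha]_\alpha$) and the fact that $x_\alpha\in(z]_\alpha$ agrees with the earlier terms below $\alpha$, one checks that $(x_\beta)_{\beta<\alpha^*}$ is a partial compatible sequence. I then put $x_\infty=\bigvee_{\beta<\alpha^*}x_\beta$; by Corollary~\ref{aux-corol} its slices are exactly $x_\infty|_\beta=x_\beta$ for $\beta<\alpha^*$ and $x_\infty|_\delta=x_\infty$ for $\alpha^*\leq\delta<\lord$.

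To see $x_\infty$ is an upper bound, fix $x\in X$. Survivors can only first disappear at successor stages (a limit $X_\lambda$ is the intersection of the earlier $X_\beta$), so either $x$ survives through every $\beta<\alpha^*$, which (since $X_{\alpha^*}=\emptyset$ when $\alpha^*<\lord$) can only happen when $\alpha^*=\lord$, forcing $x=_\beta x_\infty$ for all $\beta<\lord$ and hence $x=x_\infty$ by Axiom~\ref{axiom2}; or $x\in X_\gamma\setminus X_{\gamma+1}$ for some $\gamma<\alpha^*$, i.e.\ $x\sqsubseteq_\gamma x_\gamma$ (as $x\in X_\gamma$ and $X_\gamma\sqsubseteq_\gamma x_\gamma$) while $x\neq_\gamma x_\gamma$, whence $x\sqsubset_\gamma x_\gamma=x_\infty|_\gamma$ and so $x\sqsubset x_\infty$ by Corollary~\ref{cor-sqsubset}. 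For leastness, let $z$ satisfy $x\sqsubseteq z$ for all $x\in X$. If $z=_\beta x_\beta$ for every $\beta<\alpha^*$, then $x_\beta\sqsubseteq_\beta z$ for all such $\beta$ and Corollary~\ref{cor-partsup} gives $x_\infty\sqsubseteq z$ at once. Otherwise let $\gamma<\alpha^*$ be least with $z\neq_\gamma x_\gamma$; then $z\in(x_\gamma]_\gamma$ and $X_\gamma\subseteq(x_\gamma]_\gamma$ with $X_\gamma\sqsubseteq z$, so Lemma~\ref{lem-lem1} yields $x_\gamma=\bigsqcup_\gamma X_\gamma\sqsubseteq_\gamma z$, which together with $z\neq_\gamma x_\gamma$ upgrades to $x_\gamma\sqsubset_\gamma z$, and hence $x_\infty\sqsubset z$ by Corollary~\ref{cor-sqsubset}.

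The main obstacle, and the place demanding the most care, is the interaction between the two orders in the leastness argument: one cannot simply take a single $\bigsqcup_\alpha$ over all of $X$, because the accumulated upper bound may strictly dominate every element of $X$ at some level, emptying the survivors. The surviving sets $X_\alpha$ and the stopping ordinal $\alpha^*$ are precisely what tame this phenomenon, while Corollary~\ref{cor-partsup} together with Lemma~\ref{lem-lem1} is what converts level-wise domination by $z$ into genuine $\sqsubseteq$-domination. Verifying that $(x_\beta)_{\beta<\alpha^*}$ really is a partial compatible sequence, so that the slice machinery of Corollary~\ref{aux-corol} applies, is the routine but essential bookkeeping underlying everything.
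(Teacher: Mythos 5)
Your proof is correct in substance and follows the paper's own strategy: build approximations $x_\alpha$ as $\bigsqcup_\alpha$ of the set of elements of $X$ that still agree with all earlier approximations, verify that this yields a (partial) compatible sequence, and identify its $\bigvee$ as the $\sqsubseteq$-supremum via Lemma~\ref{lem-lem1}, Corollary~\ref{aux-corol} and Corollary~\ref{cor-partsup}. Your version is in fact mildly cleaner in two places: by truncating the recursion at the first $\alpha^*$ with $X_{\alpha^*}=\emptyset$, every $x_\beta$ is a $\bigsqcup_\beta$ of a nonempty set, so the compatibility check goes through uniformly via Lemma~\ref{some-consequences}(d) and you avoid the paper's separate case $Y_\alpha=\emptyset$ (handled there with Lemma~\ref{lem-sup}); and in the leastness argument you jump directly to the least $\gamma$ with $z\neq_\gamma x_\gamma$ and apply Lemma~\ref{lem-lem1} once, rather than running the paper's induction on $\beta$. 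Both shortcuts are legitimate, since the truncated join equals the full one by Lemma~\ref{lem-partial}.

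One step in the upper-bound argument is misstated. If $\alpha^*=\delta+1$ is a successor, an element $x$ can ``survive through every $\beta<\alpha^*$'' (i.e.\ lie in $X_\delta$) while $\alpha^*<\lord$: membership in $X_{\alpha^*}=X_{\delta+1}$ requires the \emph{additional} condition $x=_\delta x_\delta$, so $X_{\alpha^*}=\emptyset$ does not force $\alpha^*=\lord$ in your first case, and the conclusion $x=x_\infty$ would be wrong for such an $x$. The dichotomy should instead be taken on whether $x=_\beta x_\beta$ holds for all $\beta<\alpha^*$. If it fails, the least failing $\gamma$ satisfies $\gamma<\alpha^*$ and $x\in X_\gamma\setminus X_{\gamma+1}$, which is exactly your second case. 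If it holds for all $\beta<\alpha^*$ and $\alpha^*<\lord$, then $x\in X_{\alpha^*}$, contradicting $X_{\alpha^*}=\emptyset$; hence $\alpha^*=\lord$, and then $x=_\beta x_\beta=_\beta x_\infty$ for all $\beta<\lord$ gives $x=x_\infty$ by Axiom~\ref{axiom2}. With this one-line repair the argument is complete.
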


\begin{proof}
First we show that $\perp$ is the $\sqsubseteq$-least element of $L$, ie. that
$\perp \sqsubseteq x$ holds for all $x \in L$.
If $x \neq \perp$,
there is a least ordinal $\alpha < \lord$ such that $x \not =_\alpha \perp$
(since we know that $\perp\kern-.4em|_\alpha = \perp$ for all $\alpha < \lord$).
Thus $x =_\beta \perp$ for all $\beta < \alpha$, ie., $x \in (\perp]_\alpha$.
We have noted (Remark~\ref{rem-remark1}) that $\perp \sqsubseteq_\alpha z$ holds for all $z \in (\perp]_\alpha$.
In particular, $\perp \sqsubseteq_\alpha x$. But $x \not =_\alpha \perp$, thus
$\perp \sqsubset_\alpha x$. It follows that $\perp \sqsubset x$.

We show that every non-empty subset $X$ of $L$ has a least upper bound
$x_\infty$ with respect to $\sqsubseteq$. Let $X \subseteq L = (\perp]_0$,
$X \neq \emptyset$. For each $\alpha < \lord$, we define $x_\alpha \in L$ and
$X_\alpha,Y_\alpha  \subseteq X$. Let $Y_0= X$ and  $x_0 = \bigsqcup_0 Y_0$ and
$$X_0 = \{x \in Y_0 : x =_0 x_0\} = \{x \in X : x =_0 x_0\}$$
For each nonzero ordinal $\alpha < \lord$,
we define $x_\alpha$, $X_\alpha$ and $Y_\alpha$ as follows:
\begin{eqnarray*}
Y_\alpha & = & \bigcap_{\beta < \alpha} X_\beta\\
x_\alpha & = & \left\{ \begin{array}{ll}
                        \bigvee_{\beta < \alpha}x_\beta & \mbox{if $Y_\alpha = \emptyset$}\\
                        \bigsqcup_\alpha Y_\alpha       & \mbox{if $Y_\alpha \neq \emptyset$}
                       \end{array}
               \right.\\
X_\alpha & = & \{x \in Y_\alpha : x =_\alpha x_\alpha\}
\end{eqnarray*}
Notice that if $Y_\alpha =\emptyset$ then $X_\alpha=\emptyset$. From the above definitions it is easy
to see that:
$$Y_\alpha = \bigcap_{\beta < \alpha} X_\beta =
\{x \in X : \forall \beta < \alpha\ x =_\beta x_\beta\}$$
%
and:
$$X_\alpha = \{x \in Y_\alpha : x =_\alpha x_\alpha\}= \{x \in X : \forall \beta \leq \alpha\ x =_\beta x_\beta\}$$
%
It is clear from the above that $X_\beta \supseteq X_\alpha$ whenever $\beta < \alpha$.

We show by induction on $\alpha < \lord$ that $(x_\beta)_{\beta < \alpha}$ is a
partial compatible sequence. To this end, it is sufficient to prove that
$x_\beta =_\beta x_\alpha$ for all $\beta < \alpha$,
since it follows then that each $x_\beta$ is $\leq$-least in $[x_\beta]_\beta$ (Lemma~\ref{lem-lem0}
and Lemma~\ref{lem-partial}).

We argue by induction on $\alpha$.
When $\alpha = 0$ our claim is obvious. Suppose now that $\alpha > 0$
and the claim is true for all ordinals less than $\alpha$. Let $\beta < \alpha$.
We distinguish two cases:
\begin{itemize}
\item $Y_\alpha \neq \emptyset$. Let $y$ be a fixed element of $Y_\alpha$.
      Since $y \in X_\beta$, we have that $x_\beta =_\beta y$.
      Also, $y \sqsubseteq_\alpha x_\alpha$,
      since $x_\alpha = \bigsqcup_\alpha Y_\alpha$.
      Thus, by Lemma~\ref{some-consequences}(d), $x_\beta =_\beta x_\alpha$.

\item $Y_\alpha = \emptyset$. Then by definition, $x_\beta \leq x_\alpha$.
      To complete the proof, it suffices to show that $x_\beta =_\beta x_\alpha$
      in the case when $\alpha$ is the least ordinal such that $Y_\alpha = \emptyset$.
      But $x_\beta =_\beta x_\gamma$ for all $\beta \leq \gamma < \alpha$
      and thus, by Lemma~\ref{lem-sup}, $x_\beta =_\beta \bigvee_{\beta \leq \gamma < \alpha}x_\gamma = x_\alpha$.
\end{itemize}

We have proved that for all $\alpha<\lord$, $(x_\beta)_{\beta < \alpha}$ is a partial compatible sequence.
Therefore, $(x_\alpha)_{\alpha<\lord}$ is a compatible sequence.
In particular, $x_\alpha \leq x_\beta$ whenever $\alpha < \beta$.

Since $(x_\alpha)_{\alpha < \lord}$ is a compatible sequence, it
determines a unique element of $L$, $x_\infty = \bigvee_{\alpha < \lord} x_\alpha$.
Note that for all $\alpha < \lord$, $x_\infty =_\alpha x_\alpha$. Let $X_\infty = \bigcap_{\alpha < \lord} X_\alpha$.
Our aim is to prove that $x_\infty$ is the least upper bound of $X$ with respect to the relation $\sqsubseteq$.
Moreover, we prove that either $X_\infty = \emptyset$ or $X_\infty = \{x_\infty\}$,
ie., $X_\infty \subseteq \{x_\infty\}$.

{\em Proof of $X_\infty \subseteq \{x_\infty\}$.}
Suppose that $y \in X_\infty$. Then for all $\alpha$,
$x_\infty =_\alpha x_\alpha =_\alpha y$. It follows that $x_\infty = y$,
since by Axiom 2 it holds that $\bigcap_{\alpha < \lord} =_\alpha$ is the equality relation.

{\em Proof of $X \sqsubseteq x_\infty$.}
Let $y \in X$.
There are two cases, either $y \in X_\infty$ or $y \not\in X_\infty$. If $y \in X_\infty$
then $y = x_\infty$ and clearly $y \sqsubseteq x_\infty$. If $y \not\in X_\infty$,
then there is a least
ordinal $\alpha$ less than $\lord$ such that $y \not\in X_\alpha$.
We have $y \in Y_\alpha$ and thus $y \sqsubseteq_\alpha x_\alpha$,
since $x_\alpha = \bigsqcup_\alpha Y_\alpha$.
But $y \not=_\alpha x_\alpha$ since $y \not\in X_\alpha$.
Thus $y \sqsubset_\alpha x_\alpha$
and  $y \sqsubseteq x_\infty$.

To complete the proof, it remains to show that $x_\infty$ is the {\em least}
among the upper bounds of $X$. Therefore suppose that for some $z$, $X \sqsubseteq z$.
We need to prove that $x_\infty\sqsubseteq z$. This is clear when
$X_\infty \neq \emptyset$,
since in this  case $x_\infty \in X$. Suppose now that
$X_\infty = \emptyset$, and let
$\alpha$ denote the least ordinal that is less than $\lord$ such that $X_\alpha = \emptyset$,
if such an ordinal exists, otherwise let $\alpha = \lord$.
We prove by induction on
$\beta < \alpha$ that either $x_\beta \sqsubseteq_\beta z$,
or there is some $\gamma < \beta$ with $x_\gamma \sqsubset_\gamma z$.

When $\beta = 0$, $x_0 = \bigsqcup_0 X$. Since $X \sqsubseteq z$,
by Lemma~\ref{inclusion-lemma} we have $X \sqsubseteq_0 z$. Thus, $x_0\sqsubseteq_0 z$.
Suppose that $\beta > 0$ and the claim is true for all
ordinals less than $\beta$. If there is some $\gamma < \beta$
with $x_\gamma \sqsubset_\gamma z$ then we are done. Otherwise
$x_\gamma =_\gamma z$ for all $\gamma < \beta$.
Now $x_\beta = \bigsqcup_\beta Y_\beta$, where $Y_\beta = \bigcap_{\gamma < \beta} X_\gamma
= \{y \in X: \forall \gamma < \beta\ x_\gamma =_\gamma y\}$.
Since $x_\gamma =_\gamma z$ for all $\gamma < \beta$ and $Y_\beta \sqsubseteq z$,
it follows by Lemma~\ref{lem-lem1} that $x_\beta \sqsubseteq_\beta z$.

To complete the proof of the fact that $x_\infty \sqsubseteq z$, first note
that by Corollary~\ref{cor-sqsubset-compatible},
if there is some $\beta < \alpha$ with $x_\beta \sqsubset_\beta z$,
then $x_\infty \sqsubset z$.
If $x_\beta =_\beta z$ for all $\beta < \alpha$, then we use
Corollary~\ref{cor-partsup} to conclude that $x_\infty \sqsubseteq z$.
\end{proof}

\section{Certain Models of the Axioms}\label{models-section}
In this section we investigate models of the axioms introduced in Section~\ref{axioms-section}.
Apart from the set of infinite-valued interpretations of logic programs, it turns out
that there exist certain alternative models with different structures and properties.
\subsection{The Standard Model}\label{standard-model}
As it was demonstrated at the end of Section~\ref{axioms-subsection}, the set of
infinite-valued interpretations together with the relations introduced in
Section~\ref{infinite-valued-section}, form a model of the Axioms 1-4.
As we have already mentioned, this particular model will be referred as
``{\em the standard model}'' throughout the paper. In the following, we
give some more facts regarding this model.

Apart from Axioms 1-4, the standard model also satisfies the following Axiom 5. As we are going to see in
Subsection~\ref{non-standard-product-model}, not all models of Axioms 1-4 satisfy this axiom.
\begin{framed}
\begin{axiom}\label{axiom5}
Let $x,y \in L$ and $\alpha < \lord$. If $x \leq y$ and $x =_\beta y$ for all $\beta < \alpha$,
then $x \sqsubseteq_\alpha y$.
\end{axiom}
\end{framed}

It is straightforward to verify that Axiom 5 holds in the standard model.
Given Axiom 5, we can prove:
\begin{proposition}
\label{prop-ax5cons}
Let $L$ be a model and $x,y\in L$ with $x\leq y$. If Axiom 5 holds in $L$,
then $x \sqsubseteq y$.
\end{proposition}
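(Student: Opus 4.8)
The plan is to locate the least level at which $x$ and $y$ differ and to invoke Axiom~\ref{axiom5} precisely there. If $x = y$ then $x \sqsubseteq y$ holds by reflexivity, so I may assume $x \neq y$. By Axiom~\ref{axiom2}, $\bigcap_{\alpha < \lord} =_\alpha$ is the identity relation, so there is at least one ordinal $\alpha < \lord$ with $x \neq_\alpha y$. Since the ordinals are well-ordered, I would take $\alpha$ to be the \emph{least} such ordinal.

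By the minimality of $\alpha$, we have $x =_\beta y$ for every $\beta < \alpha$. Together with the hypothesis $x \leq y$, Axiom~\ref{axiom5} then gives $x \sqsubseteq_\alpha y$. But $x \neq_\alpha y$ by the choice of $\alpha$, so in fact $x \sqsubset_\alpha y$. Hence $x \sqsubset y$ by the definition of $\sqsubset$, and therefore $x \sqsubseteq y$, as required.

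I expect the only point needing care to be the existence of a least level of disagreement. Remark~\ref{remark0} shows that $=_\beta$ refines $=_\alpha$ whenever $\alpha < \beta$, so the set $\{\alpha < \lord : x \neq_\alpha y\}$ is a final segment of the ordinals below $\lord$; Axiom~\ref{axiom2} guarantees it is nonempty when $x \neq y$, and the well-ordering of the ordinals then supplies its least element. Beyond this observation, the argument is a direct one-step application of Axiom~5, so no substantial obstacle arises.
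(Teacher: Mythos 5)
Your proof is correct and follows essentially the same route as the paper's: reduce to the case $x \neq y$, take the least ordinal $\alpha$ with $x \neq_\alpha y$ (justified by Axiom~\ref{axiom2}), apply Axiom~\ref{axiom5} at that level to get $x \sqsubseteq_\alpha y$, and conclude $x \sqsubset_\alpha y$ and hence $x \sqsubset y$. Your extra remark on why the least level of disagreement exists is a small but welcome addition that the paper leaves implicit.
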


\begin{proof}
If $x=y$ the proposition obviously holds.
Suppose that $x < y$. Let $\alpha$ denote the least
ordinal such that $x \neq_\alpha  y$. Then $x =_\beta y$
for all $\beta < \alpha$. Since $x < y$, we have by Axiom 5 that
$x \sqsubseteq_\alpha y$. But $x\neq_\alpha  y$, thus
$x \sqsubset_\alpha y$, so that $x \sqsubset y$.
\end{proof}

\subsection{The Model of Truth Values}\label{model-of-truth-values}
When $Z$ has a single element, the infinite set $V$ of truth values introduced
in Section~\ref{infinite-valued-section} may be identified with $V^Z$,
and this implies that $V$ is also a model of the axioms. However, since the model
of truth values will be used several times in the following sections,
it is more convenient to define the relations $\sqsubseteq_\alpha$ directly
on $V$ and not use the isomorphism with $V^Z$ when $Z$ is a singleton
(from a mathematical point of view, the two approaches are equivalent).
For reasons of completeness, we will give the proofs that $V$ is
a model (despite the fact that these proofs can be retrieved from the
corresponding ones for $V^Z$ when $Z$ is a singleton).

The relations $\sqsubseteq_\alpha$ on $V$ for $\alpha < \Omega$
are defined by $x \sqsubseteq_\alpha y$ iff either $\order(x) =
\order(y)  < \alpha$ and $x = y$, or $\order(x),\order(y) \geq \alpha$,
and either $\order(x),\order(y) > \alpha$ or $x = F_\alpha$ or $y = T_\alpha$.
Of course, we have $x \sqsubseteq_\alpha y$ iff either
$x = y$, or $\order(x),\order(y) \geq \alpha$, and either
$\order(x),\order(y) > \alpha$ or $x = F_\alpha$ or $y = T_\alpha$.
Thus, $x =_\alpha y$ iff $x = y$ or $\order(x),\order(y) > \alpha$.
It is clear that Axioms 1, 2 and 4 hold. To show that Axiom 3 also holds, let $x\in V$,
$\alpha < \Omega$ and let $X \subseteq (x]_\alpha$ be a nonempty set.
We distinguish two cases. If $\order(x) < \alpha$, then $(x]_\alpha = \{x\}$
and $\bigsqcup_\alpha X = x$ satisfies Axiom 3. If on the other hand
$\order(x) \geq \alpha$, then
$(x]_\alpha = \{y : \order(y) \geq \alpha\}$.
We consider two subcases: if $\bigvee X = T_\alpha$ or $\bigvee X = F_\alpha$ then
we take $\bigsqcup_\alpha X = \bigvee X$, which is easily seen to satisfy Axiom 3
(notice that $\bigvee X = T_\alpha$ happens exactly when $T_\alpha \in X$,
and $\bigvee X = F_\alpha$ happens when $X = \{F_\alpha\}$).
In any other case, we take $\bigsqcup_\alpha X = F_{\alpha+1}$,
which again satisfies Axiom 3.

To show that Axiom 5 also holds, let $x,y\in V$
with $x\leq y$ and $x=_\beta y$ for all $\beta < \alpha$, where
$\alpha < \Omega$. Then either $x = y$ or $\order(x),\order(y) \geq \alpha$.
In the former case, $x \sqsubseteq_\alpha y$ clearly holds.
So suppose that $x \neq y$ and $\order(x),\order(y) \geq \alpha$.
If $x = F_\alpha$ or $y = T_\alpha$, then $x \sqsubseteq_\alpha y$.
Otherwise $\order(x),\order(y) > \alpha$ and $x \sqsubseteq_\alpha y$ again.

\subsection{The Product Model}\label{product-model}
Let $I$ be an index set. Suppose that for all $i \in I$, $L_i$ satisfies Axioms 1-4. We define a
new model $L$ on the cartesian product $\prod_{i \in I}L_i$.
For simplicity we overload our notation by using the same symbols $\leq$, $\sqsubseteq_\alpha$
and $\sqsubset_\alpha$ for the ordering relations in all $L_i$ and in $L$.
For any $f,g \in L$ we define $f\leq g$ iff $f(i) \leq g(i)$
for all $i \in I$. Moreover, for each $\alpha < \kappa$,
we define $f \sqsubseteq_\alpha g$ iff $f(i) \sqsubseteq_\alpha g(i)$
for each $i \in I$. Thus, $f \sqsubset_\alpha g$
iff $f \sqsubseteq_\alpha g$ and there is some $i\in I$
with $f(i) \sqsubset_\alpha g(i)$.

We claim that $L$ is a model. Indeed, it is clear that Axioms 1 and 2 hold.
In order to prove Axiom 3, suppose that $h\in L$ and
$X\subseteq (h]_\alpha$, where $\alpha < \lord$.
Then $X(i) = \{f(i) : f \in X \} \subseteq (h(i)]_\alpha$
in $L_i$, and thus $g(i) = \bigsqcup_\alpha X(i)$ is in $L_i$.
Suppose that $h'\in (h]_\alpha$ with $X \sqsubseteq_\alpha h'$.
Then $h'(i) \in (h(i)]_\alpha$ and $X(i) \sqsubseteq_\alpha h'(i)$
for all $i \in I$. Since Axiom 3 holds in each $L_i$,
we obtain that $g(i) \sqsubseteq_\alpha h'(i)$ and
$g(i) \leq h'(i)$. Since this holds for all $i \in I$,
we conclude that $g \sqsubseteq_\alpha h'$ and $g \leq h'$.
In order to verify Axiom 4, suppose now that $X \subseteq L$
is not empty and $f =_\alpha g$ holds for all $f\in X$,
where $g \in L$ and $\alpha < \kappa$. Then $f(i) =_\alpha g(i)$
for all $f\in X$ and $i \in I$, so that $\bigvee X(i) =_\alpha g(i)$
for all $i\in I$. It follows that $\bigvee X =_\alpha g$.
We also note that if each $L_i$ satisfies Axiom 5,
then so does $L$.

As an application of the above model construction
operation, observe that the standard model
could have been obtained by the fact that the set $V$ of truth values is a model
(see Subsection~\ref{model-of-truth-values}); therefore
the set $V^Z$, where $Z$ is any set of propositional variables, is also
a model. Moreover, since $V$ satisfies Axiom 5, $V^Z$ also does.
More generally we have the following lemma, whose proof is straightforward:

\begin{lemma}
\label{lem-ax5}
Suppose that $L_i$ is a model satisfying Axiom 5 for each $i \in I$.
Then the product model $L = \prod_{i\in I} L_i$ also satisfies Axiom 5.
\end{lemma}

\subsection{A Non-Standard Product Model}\label{non-standard-product-model}
%
Suppose that for each $\alpha < \lord$, $(L_\alpha,\leq)$
is a complete lattice. Let  $L$ be the cartesian
product $\prod_{\alpha < \lord} L_\alpha$. Then $(L,\leq)$,
equipped with the pointwise ordering
$$x \leq y \quad {\rm iff}\quad \forall\alpha < \lord\ x(\alpha)\leq y(\alpha)$$
is a complete lattice. Suppose now that for each $\alpha < \lord$, $(L_\alpha,\preceq)$
is another complete lattice with underlying set $L_\alpha$. We
use the orderings $\preceq$ to define preorderings
$\sqsubseteq_\alpha$ on $L$. For each $x,y\in L$ and $\alpha < \lord$,
we define $x \sqsubseteq_\alpha y$ iff $x(\beta) = y(\beta)$
for all $\beta< \alpha$ and $x(\alpha) \preceq y(\alpha)$.
In the following we will refer to the construction just described
as the {\em nonstandard product construction}.

The following lemma is straightforward to establish:
\begin{lemma}
When $\preceq$ coincides with $\leq$ for each $L_\alpha$ we get a
model that satisfies Axiom 5.
\end{lemma}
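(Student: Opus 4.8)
The plan is to verify the four axioms (plus Axiom~5) directly from the definitions of the nonstandard product construction, specialized to the case $\preceq\ =\ \leq$ in each $L_\alpha$. In this case the condition $x\sqsubseteq_\alpha y$ unfolds to: $x(\beta)=y(\beta)$ for all $\beta<\alpha$ and $x(\alpha)\leq y(\alpha)$. First I would record the induced equivalence: $x=_\alpha y$ holds iff $x\sqsubseteq_\alpha y$ and $y\sqsubseteq_\alpha x$, which forces $x(\beta)=y(\beta)$ for all $\beta\leq\alpha$. This immediately gives Axiom~1, since if $\alpha<\beta$ and $x\sqsubseteq_\beta y$ then $x$ and $y$ agree on all coordinates below $\beta$, in particular on all coordinates $\leq\alpha$, so $x=_\alpha y$. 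Axiom~2 follows because $\bigcap_{\alpha<\lord}=_\alpha$ forces agreement on every coordinate, hence equality in the product.

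For Axiom~3, I would fix $x\in L$, an ordinal $\alpha<\lord$, and $X\subseteq (x]_\alpha$. Every element of $(x]_\alpha$ agrees with $x$ on all coordinates $\beta<\alpha$, so the only freedom is at coordinate $\alpha$ and above. I would define the candidate least upper bound $y=\bigsqcup_\alpha X$ coordinatewise: set $y(\beta)=x(\beta)$ for $\beta<\alpha$, set $y(\alpha)=\bigvee\{w(\alpha):w\in X\}$ (the $\leq$-least upper bound in the complete lattice $(L_\alpha,\leq)$, which is $\perp_\alpha$ when $X=\emptyset$), and set $y(\beta)=\perp_\beta$ (the $\leq$-least element of $L_\beta$) for $\beta>\alpha$. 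Then $y\in(x]_\alpha$, and I would check both bullets of Axiom~3: $X\sqsubseteq_\alpha y$ holds because each $w\in X$ agrees with $y$ below $\alpha$ and $w(\alpha)\leq\bigvee\{\,\cdot\,\}=y(\alpha)$; and for any $z\in(x]_\alpha$ with $X\sqsubseteq_\alpha z$, one has $z(\alpha)\geq w(\alpha)$ for all $w$, hence $z(\alpha)\geq y(\alpha)$, giving $y\sqsubseteq_\alpha z$, while the choice of minimal values $\perp_\beta$ above $\alpha$ guarantees $y\leq z$ in the full product ordering.

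For Axiom~4, I would take nonempty $X\subseteq L$ with $y=_\alpha x$ for all $x\in X$, so all members of $X$ share the same values as $y$ on coordinates $\leq\alpha$; since the product least upper bound $\bigvee X$ is computed coordinatewise, $(\bigvee X)(\beta)$ equals that common value for each $\beta\leq\alpha$, whence $y=_\alpha\bigvee X$. For Axiom~5, I would suppose $x\leq y$ and $x=_\beta y$ for all $\beta<\alpha$; the latter gives $x(\beta)=y(\beta)$ for $\beta<\alpha$, and $x\leq y$ gives $x(\alpha)\leq y(\alpha)$, which is exactly $x\sqsubseteq_\alpha y$.

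I do not expect a genuine obstacle here: every axiom reduces to a coordinatewise statement in the complete lattices $(L_\alpha,\leq)$, and the hypothesis $\preceq\,=\,\leq$ collapses the preordering $\sqsubseteq_\alpha$ into the transparent ``agree below $\alpha$, then compare by $\leq$ at $\alpha$'' form. The one point requiring a little care is the definition of $\bigsqcup_\alpha X$ in Axiom~3, where I must assign the $\leq$-least values $\perp_\beta$ to coordinates strictly above $\alpha$ so that the second bullet's requirement $y\leq z$ (and not merely $y\sqsubseteq_\alpha z$) is met; this is precisely the extra demand that distinguishes $\bigsqcup_\alpha$ from an ordinary supremum, and it is the step I would write out most explicitly.
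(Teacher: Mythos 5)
Your proof is correct and takes essentially the same route as the paper: the paper dismisses this lemma as ``straightforward,'' but its general verification of Axioms 1, 2 and 4 for the nonstandard product, together with the explicit construction of $\bigsqcup_\alpha X$ (copy $x$ below $\alpha$, take the supremum at coordinate $\alpha$, and put the $\leq$-least elements above $\alpha$) in the proof that Axiom 3 holds whenever $\leq$ extends $\preceq$, is exactly your argument specialized to $\preceq\,=\,\leq$. Your Axiom 5 check is the same one-line coordinatewise observation, and you rightly flag the only delicate point, namely that the second bullet of Axiom 3 demands $y \leq z$ and not just $y \sqsubseteq_\alpha z$, which forces the bottom elements on coordinates above $\alpha$.
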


We are therefore now interested in the case where $\preceq$ and $\leq$
do not coincide. It is clear that Axioms 1 and 2 hold. Axiom 4 also holds.
Indeed, let $X \subseteq L$ be a nonempty set, $y \in L$,
and let $\alpha < \lord$.  Suppose that
$y =_\alpha x$ for all $x \in X$. Then
$x(\alpha) = y(\alpha)$ for all $x \in X$
and thus $(\bigvee X)(\alpha) = \bigvee_{x \in X} x(\alpha) = y(\alpha)$,
ie., $y =_\alpha \bigvee X$. Regarding Axiom 3, we have the
following lemma:

\begin{lemma}
Axiom 3 holds in $L$ iff for all $\alpha < \lord$ and $a,b \in L_\alpha$,
if $a \preceq b$ then $a \leq b$, ie., when $\leq$ extends $\preceq$
for all $L_\alpha$.
\end{lemma}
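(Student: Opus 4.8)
The plan is to unwind the definitions of the nonstandard product construction at a single coordinate, and then prove the two implications separately, with the direction ``Axiom 3 $\Rightarrow$ $\leq$ extends $\preceq$'' being the delicate one.

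First I would record the concrete shape of the relevant sets. Since $\preceq$ is a partial order on each $L_\alpha$, the relation $x =_\alpha y$ amounts to $x(\beta) = y(\beta)$ for all $\beta \leq \alpha$, and hence $(x]_\alpha = \{y \in L : \forall\, \gamma < \alpha,\ y(\gamma) = x(\gamma)\}$; that is, membership in $(x]_\alpha$ fixes exactly the coordinates below $\alpha$ and leaves the rest free. Consequently, for $y \in (x]_\alpha$ and $X \subseteq (x]_\alpha$, the relation $X \sqsubseteq_\alpha y$ collapses to the single requirement that $y(\alpha)$ be a $\preceq$-upper bound of $X(\alpha) := \{f(\alpha) : f \in X\}$, because all members of $X \cup \{y\}$ already agree with $x$ on the coordinates below $\alpha$. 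This reduction is what lets the whole argument concentrate at coordinate $\alpha$.

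For the sufficiency direction, assume $\leq$ extends $\preceq$. Let $s$ be the $\preceq$-least upper bound of $X(\alpha)$ in the complete lattice $(L_\alpha,\preceq)$, and define $y$ by $y(\gamma) = x(\gamma)$ for $\gamma < \alpha$, $y(\alpha) = s$, and $y(\gamma) = \perp_\gamma$ (the $\leq$-least element of $L_\gamma$) for $\gamma > \alpha$. I would then verify the two clauses of Axiom 3: the relation $X \sqsubseteq_\alpha y$ holds because $s$ is a $\preceq$-upper bound of $X(\alpha)$; and for any $z \in (x]_\alpha$ with $X \sqsubseteq_\alpha z$, so that $z(\alpha)$ is a $\preceq$-upper bound of $X(\alpha)$, the $\preceq$-leastness of $s$ gives $s \preceq z(\alpha)$, whence $y \sqsubseteq_\alpha z$. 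Finally $y \leq z$ follows coordinatewise: equality below $\alpha$, then $y(\alpha) = s \leq z(\alpha)$ precisely because $\leq$ extends $\preceq$, and $y(\gamma) = \perp_\gamma \leq z(\gamma)$ above $\alpha$. Thus $y = \bigsqcup_\alpha X$ witnesses Axiom 3.

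For the necessity direction, which I expect to be the main obstacle, assume Axiom 3 and fix $\alpha < \lord$ and elements $a \preceq b$ in $L_\alpha$; the goal is $a \leq b$. The key idea is to feed Axiom 3 a test instance that pins down the witness $y = \bigsqcup_\alpha X$ exactly at coordinate $\alpha$. Take any $x$, choose $f \in (x]_\alpha$ with $f(\alpha) = a$, and set $X = \{f\}$, so $X(\alpha) = \{a\}$. Using the comparison element $z \in (x]_\alpha$ with $z(\alpha) = a$ (a valid upper bound since $a \preceq a$), the $\sqsubseteq_\alpha$-minimality clause forces $y(\alpha) \preceq a$, while $X \sqsubseteq_\alpha y$ gives $a \preceq y(\alpha)$; antisymmetry of $\preceq$ then yields $y(\alpha) = a$. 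Now I would invoke a second comparison element $z' \in (x]_\alpha$ with $z'(\alpha) = b$ and $z'(\gamma) = y(\gamma)$ for $\gamma \neq \alpha$; this $z'$ is a valid upper bound because $a = f(\alpha) \preceq b$, so the $\leq$-minimality clause of Axiom 3 gives $y \leq z'$, and reading off coordinate $\alpha$ yields $a = y(\alpha) \leq z'(\alpha) = b$, as required. The subtlety throughout is that the two orders interact only at the single coordinate $\alpha$, so the argument must exploit both the $\sqsubseteq_\alpha$-minimality (to locate $y(\alpha)$ exactly) and the $\leq$-minimality (to transfer $\preceq$ into $\leq$) of the element guaranteed by Axiom 3.
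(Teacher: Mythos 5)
Your proof is correct, and the sufficiency direction (defining $y$ to agree with $x$ below $\alpha$, take the $\preceq$-supremum of $X(\alpha)$ at $\alpha$, and the $\leq$-least elements above $\alpha$) is exactly the paper's construction. The necessity direction reaches the same conclusion by a cleaner route. The paper applies Axiom~3 to the two-element set $X=\{x,y\}$ with $x(\alpha)=a$, $y(\alpha)=b$ over a background of $\leq$-least coordinates, first pins down the witness $\bigsqcup_\alpha X$ to be $y$ itself, and from this only extracts the implication ``$a\preceq b\preceq c$ implies $b\leq c$''; it must then specialize $a$ to the $\preceq$-least element of $L_\alpha$ to recover the full statement. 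You instead take the singleton $X=\{f\}$ with $f(\alpha)=a$, use $f$ itself as a comparison element together with the antisymmetry of $\preceq$ to force $y(\alpha)=a$, and then read $a\leq b$ directly off the $\leq$-minimality clause against a second comparison element with $\alpha$-coordinate $b$. This avoids the paper's detour through a third element and the specialization step, at no cost in rigor; the one point worth making explicit in a final write-up is that the comparison elements you build really lie in $(x]_\alpha$ and really satisfy $X\sqsubseteq_\alpha z$, which your coordinatewise description of $(x]_\alpha$ and of $\sqsubseteq_\alpha$ does justify.
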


\begin{proof}
Assume first that Axiom 3 holds, and let $a,b\in L_\alpha$ for some
fixed $\alpha < \lord$ with $a \preceq b$. Let $x,y \in L$
be such that $x(\beta)=y(\beta)$ is the least element of $L_\beta$ with respect
to the ordering $\leq$ for all $\beta \neq  \alpha$, $\beta < \lord$,
and $x(\alpha) = a$, $y(\alpha) = b$. Then $x \sqsubseteq_\alpha y$.
Since Axiom 3 holds by assumption, there is some $z$ with $\{x,y\} \sqsubseteq_\alpha z$
such that whenever $\{x,y\}\sqsubseteq_\alpha z'$ then $z \sqsubseteq_\alpha z'$ and $z \leq z'$.
In particular, let $z' = y$. Then we must have $z \sqsubseteq_\alpha y$ and $z \leq y$,
and of course also $y \sqsubseteq_\alpha z$, so that $y =_\alpha z$. Since
$z \leq y$, we have that $z(\beta)$ is the least element of $L_\beta$
for all $\beta \neq \alpha$. Since $y =_\alpha z$, $z(\alpha) = b$.
Thus $z = y$. Suppose now that $b \preceq c$ holds in $L_\alpha$. Then let $z'\in L$
such that $z'(\alpha)$ is the $\leq$-least element of $L_\beta$ for all $\beta \neq \alpha$,
and let $z'(\alpha) = c$. Then $\{x,y\} \sqsubseteq_\alpha z'$ and thus $z \sqsubseteq_\alpha z'$
and $z \leq z'$, so that $b \leq c$. We have thus established that if $a \preceq  b \preceq c$
in $L_\alpha$, then $b \leq c$. In particular, let $a$ be the $\preceq$-least element
of $L_\alpha$. Then we obtain that $b \leq c$ whenever $b \preceq c$.


In order to prove the reverse direction,
suppose now that for each $\alpha < \lord$,
the ordering $\leq$ of $L_\alpha$ extends the ordering $\preceq$.
Let $x_0 \in L$ and $\alpha < \lord$. Moreover, let $X\subseteq (x_0]_\alpha$.
Then define $y(\beta) = x_0(\beta)$ for all $\beta < \alpha$
and let $y(\beta)$ be the $\leq$-least element of $L_\beta$ if $\alpha < \beta < \lord$;
finally, let $y(\alpha)$ be the supremum of the set $\{x(\alpha) : x \in X\}$
in the lattice $(L_\alpha,\preceq)$.
Then $y$ is the element $\bigsqcup_\alpha X$ specified by Axiom 3.
Indeed, it is clear that $X \sqsubseteq_\alpha y$.
Suppose that $X \sqsubseteq_\alpha z$ where $z \in (x_0]_\alpha$.
Then $x(\alpha)  \preceq z(\alpha)$ for
all $x \in X$, and thus $y(\alpha)  \preceq z(\alpha)$.
Since $x_0(\beta) = z(\beta)$ for $\beta < \alpha$, we have that
$y \sqsubseteq_\alpha z$.
Since $\leq$ extends $\preceq$ on $L_\alpha$, $y(\alpha)  \leq  z(\alpha)$.
Since $y(\beta)$ is the $\leq$-least element of $L_\beta$ for
all $\alpha < \beta < \lord$, and since $y(\beta) = x_0(\beta) = z(\beta)$
for all $\beta < \alpha$, we also have $y \leq z$.
\end{proof}

Notice that when $\leq$ is an extension of $\preceq$ on $L_\alpha$, then
the $\leq$ and $\preceq$-least elements of $L_\alpha$
are the same. Similarly, the $\leq$ and $\preceq$-greatest elements
are also the same.

As it was demonstrated above, $L$ is a
model of Axioms 1-4 provided that $\leq$ extends $\preceq$ for all $L_\alpha$.
However, this model does not always satisfy Axiom 5:
\begin{lemma}
There exists an instance of the non-standard product construction that does not
satisfy Axiom 5.
\end{lemma}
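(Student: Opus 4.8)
The plan is to unfold what Axiom~\ref{axiom5} actually demands in the non-standard product construction and then engineer a single component on which that demand fails. Unfolding the definitions, the hypothesis ``$x =_\beta y$ for all $\beta < \alpha$'' means precisely that $x(\beta) = y(\beta)$ for all $\beta < \alpha$ (antisymmetry of each $\preceq$ collapses $=_\beta$ to equality of all coordinates up to and including $\beta$). Under this hypothesis, $x \sqsubseteq_\alpha y$ holds iff $x(\alpha) \preceq y(\alpha)$, whereas $x \leq y$ delivers only $x(\alpha) \leq y(\alpha)$. Hence, for this family of models, Axiom~\ref{axiom5} is equivalent to the inclusion $\leq\ \subseteq\ \preceq$ on every $L_\alpha$. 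Since the standing assumption guaranteeing Axioms 1--4 is the reverse inclusion $\preceq\ \subseteq\ \leq$ (the previous lemma), validity of Axiom 5 would force $\preceq\ =\ \leq$ on each factor. Therefore it suffices to produce one factor on which $\leq$ strictly extends $\preceq$.

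The main obstacle is that both $(L_\alpha,\leq)$ and $(L_\alpha,\preceq)$ must be complete lattices, which rules out the naive small attempts: on a set of at most three elements the only lattice order is a chain, and one chain cannot be a proper subrelation of another lattice order on the same finite set. The smallest construction that works therefore has four elements. I would take $L_0 = \{0,a,b,1\}$, let $\leq$ be the chain $0 < a < b < 1$, and let $\preceq$ be the ``diamond'' with $0 \prec a \prec 1$, $0 \prec b \prec 1$ and $a,b$ incomparable under $\preceq$. Both are complete lattices, and every $\preceq$-comparable pair is $\leq$-comparable in the same direction, so $\leq$ extends $\preceq$; by the preceding lemma the resulting product is a genuine model of Axioms 1--4. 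To keep the product trivial away from the interesting coordinate, I would let $L_\alpha$ be a one-element lattice for every $\alpha$ with $1 \leq \alpha < \lord$, so that $L \cong L_0$.

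It then remains to witness the failure at $\alpha = 0$. Define $x,y \in L$ by $x(0) = a$, $y(0) = b$, with $x$ and $y$ taking the unique available value on every other coordinate. Since $a \leq b$ in the chain, we have $x \leq y$; the hypothesis ``$x =_\beta y$ for all $\beta < 0$'' is vacuous, so the premises of Axiom~\ref{axiom5} are satisfied with $\alpha = 0$. However $x \sqsubseteq_0 y$ would require $x(0) \preceq y(0)$, i.e.\ $a \preceq b$, which is false because $a$ and $b$ are $\preceq$-incomparable. Thus Axiom~\ref{axiom5} fails, and this instance of the non-standard product construction is the desired counterexample. The only genuinely delicate point in the whole argument is the existence of two complete-lattice orders standing in a strict-extension relationship, which the chain/diamond pair resolves.
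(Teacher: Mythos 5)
Your proposal is correct and uses essentially the same counterexample as the paper: the $4$-element chain for $\leq$ versus the $4$-element diamond for $\preceq$ (the paper places this pair at every coordinate, you at a single coordinate with trivial factors elsewhere, which makes no difference). Your additional unfolding showing that Axiom~5 in this construction is equivalent to $\leq\ \subseteq\ \preceq$ on each factor, together with the explicit witness $x(0)=a$, $y(0)=b$, is a correct and welcome elaboration of what the paper leaves implicit.
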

\begin{proof}
For each $\alpha< \lord$, let $(L_\alpha,\leq)$ be the $4$-element chain,
and $(L_\alpha,\preceq)$ the $4$-element lattice that is not a chain,
with the same least and greatest elements. Then we have a model that violates
Axiom 5. In fact, it violates the implication $x \leq y \Rightarrow x \sqsubseteq y$
of Proposition~\ref{prop-ax5cons}.
\end{proof}

\subsection{Some Further Consequences of Axiom 5}
\label{subsection-further}
In this subsection we discuss some further consequences of Axiom 5.
The material of this subsection will not be further used in the rest
of the paper, and is included for completeness reasons.

When $L$ is a model and $\alpha < \lord$, let us denote the set
$\{x|_\alpha : x \in L\}$ by $L_\alpha$. By Lemma~\ref{lem-slice added}
it holds $L_\alpha = \{x \in L : x = \bigsqcup_\alpha \{ x \} \} =
\{ \bigsqcup_\alpha \{x\}: x \in L\}$.

For example, when $L = V$ and $\alpha < \Omega$, then $L_\alpha =
\{F_0,\ldots,F_{\alpha+1},T_\alpha,\ldots,T_0\}$. And when $Z$ is any set
and $L = V^Z$, then $L_\alpha$ is the collection of all functions
$f: Z \to V$ with $f(Z) \subseteq \{F_0,\ldots,F_{\alpha+1},T_\alpha,\ldots,T_0\}$.
More generally, when $L$ is the product model $\prod_{i \in I}L_i$,
then $L_\alpha = \prod_{i \in I}(L_i)_\alpha$, due to the fact that
$(\bigsqcup_\alpha \{x\})(i) = \bigsqcup_\alpha\{x(i)\}$
for all $x \in L$ and $i \in I$.

\begin{proposition}\label{prop-ax5-2}
Suppose that $L$ is a model, $\alpha < \kappa$ and $x,y \in L$.
If $x \in L_\alpha$ and $x \sqsubseteq_\alpha y$ then $x \leq y$.
If Axiom 5 holds and $x \in L_\alpha$ and  $x =_\beta y$ for all $\beta < \alpha$,
then $x \leq y$ iff $x \sqsubseteq_\alpha y$.
\end{proposition}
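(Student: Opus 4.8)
The plan is to treat the two assertions in turn, with the first doing the real work and the second following almost immediately.

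For the first assertion, I would start from the characterization of $L_\alpha$ recalled just before the proposition: membership $x \in L_\alpha$ means precisely $x = \bigsqcup_\alpha\{x\}$, where $\{x\}$ is viewed as a subset of $(x]_\alpha$. The element $\bigsqcup_\alpha\{x\}$ is the one singled out by Axiom~\ref{axiom3}, so it carries not only the $\sqsubseteq_\alpha$-minimality clause but also a $\leq$-minimality clause: for every $z \in (x]_\alpha$ with $x \sqsubseteq_\alpha z$ one has $x \leq z$. The only thing to check before invoking this clause is that the given $y$ actually lies in $(x]_\alpha$. This is where Axiom~\ref{axiom1} enters: from $x \sqsubseteq_\alpha y$ we obtain $x =_\beta y$ for every $\beta < \alpha$, which is exactly the defining condition for $y \in (x]_\alpha$. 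Feeding $z = y$ into the $\leq$-minimality clause of Axiom~\ref{axiom3} then yields $x \leq y$, completing the first assertion.

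For the second assertion I would split into the two implications. The direction $x \leq y \Rightarrow x \sqsubseteq_\alpha y$ is a direct application of Axiom~\ref{axiom5}, whose hypotheses are exactly $x \leq y$ together with $x =_\beta y$ for all $\beta < \alpha$, both of which are available here. The converse $x \sqsubseteq_\alpha y \Rightarrow x \leq y$ is nothing more than the first assertion, which applies since $x \in L_\alpha$; note that for this direction neither the extra hypothesis $x =_\beta y$ nor Axiom~\ref{axiom5} is actually needed.

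The argument is short, and I do not anticipate a genuine obstacle. The one point that requires care is verifying $y \in (x]_\alpha$, so that the $\leq$-comparison furnished by Axiom~\ref{axiom3} becomes available, since that comparison is only guaranteed against elements of $(x]_\alpha$. Once this membership is secured via Axiom~\ref{axiom1}, both assertions reduce to facts already established in the excerpt.
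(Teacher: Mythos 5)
Your proof is correct and follows essentially the same route as the paper: the first assertion comes from $x = \bigsqcup_\alpha\{x\}$ together with the $\leq$-minimality clause of Axiom~3, and the second is immediate from Axiom~5 plus the first assertion. Your explicit verification that $y \in (x]_\alpha$ via Axiom~1 is a detail the paper leaves implicit, but it is exactly the right justification for invoking the $\leq$-comparison from Axiom~3.
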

\begin{proof}
Suppose that $x\in L_\alpha$
with $x \sqsubseteq_\alpha y$. Then since $x = \bigsqcup_\alpha \{x\}$, it holds
$x \leq y$. The second claim is immediate from Axiom 5 and the first claim.
\end{proof}
\begin{proposition}
\label{prop-ax5-1}
Suppose that $L$ is a model satisfying Axiom 5 and let $\alpha < \kappa$. Assume
that for each $i \in I$, $x_i \in L_\alpha$, where $I$ is a nonempty set. Let
$X = \{x_i: i \in I\}$. Assume that $x_i =_\beta x_j$
for all $i,j\in I$ and all $\beta < \alpha$. Then $\bigvee X \in L_\alpha$
and $\bigvee X  = \bigsqcup_\alpha X$.
\end{proposition}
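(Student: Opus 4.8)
The plan is to set $y=\bigvee X$ and $w=\bigsqcup_\alpha X$, show each lies below the other with respect to $\leq$, and conclude $y=w\in L_\alpha$. First I would fix some $i_0\in I$, using that $I$ is nonempty. Since $x_i=_\beta x_{i_0}$ for all $i\in I$ and all $\beta<\alpha$, we have $X\subseteq (x_{i_0}]_\alpha$, so $w=\bigsqcup_\alpha X$ is well defined by Axiom~\ref{axiom3}; moreover $w\in L_\alpha$ by Lemma~\ref{lem-slice added}, since $w=\bigsqcup_\alpha X$ with $X\subseteq (x_{i_0}]_\alpha$ is exactly condition (d), which forces $w=w|_\alpha$. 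It therefore suffices to prove $y=w$, as this immediately gives $\bigvee X\in L_\alpha$.

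The inequality $y\leq w$ follows from Axiom~\ref{axiom3} together with Proposition~\ref{prop-ax5-2}. From $X\sqsubseteq_\alpha w$ we obtain $x_i\sqsubseteq_\alpha w$ for each $i$, and since each $x_i\in L_\alpha$, the first part of Proposition~\ref{prop-ax5-2} yields $x_i\leq w$. Taking the join gives $y=\bigvee X\leq w$.

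For the reverse inequality I would verify the two hypotheses needed to apply the universal (minimality) clause of $w=\bigsqcup_\alpha X$ to the element $y$. First, $y\in (x_{i_0}]_\alpha$: for each $\beta<\alpha$ all $x_i$ satisfy $x_i=_\beta x_{i_0}$, so Axiom~\ref{axiom4} gives $x_{i_0}=_\beta\bigvee X=y$, i.e. $y=_\beta x_{i_0}$ for all $\beta<\alpha$. Second, $X\sqsubseteq_\alpha y$: each $x_i$ satisfies $x_i\leq y$, because $y$ is an upper bound of $X$, and $x_i=_\beta x_{i_0}=_\beta y$ for all $\beta<\alpha$, so Axiom~\ref{axiom5} gives $x_i\sqsubseteq_\alpha y$. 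With $y\in (x_{i_0}]_\alpha$ and $X\sqsubseteq_\alpha y$ in hand, the defining property of $\bigsqcup_\alpha X$ in Axiom~\ref{axiom3} yields $w\leq y$. Combining $y\leq w$ and $w\leq y$ gives $y=w$, whence $\bigvee X=\bigsqcup_\alpha X\in L_\alpha$.

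The step I expect to be the crux is the reverse inequality, specifically the invocation of Axiom~\ref{axiom5} to pass from the automatic $\leq$-comparison $x_i\leq y$ back to the $\sqsubseteq_\alpha$-comparison $x_i\sqsubseteq_\alpha y$ that feeds $y$ into the minimality clause of Axiom~\ref{axiom3}. This is precisely where Axiom~\ref{axiom5} is indispensable, since in a general model the implication $x\leq y\Rightarrow x\sqsubseteq_\alpha y$ (under agreement below $\alpha$) is otherwise unavailable. The nonemptiness of $I$ plays a double role: it supplies the anchor $i_0$ that makes $\bigsqcup_\alpha X$ unambiguous, and it is required for the application of Axiom~\ref{axiom4}.
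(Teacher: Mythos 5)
Your proof is correct and follows essentially the same route as the paper's: Axiom~\ref{axiom4} to place $\bigvee X$ in $(x_{i_0}]_\alpha$, Axiom~\ref{axiom5} to upgrade $x_i \leq \bigvee X$ to $x_i \sqsubseteq_\alpha \bigvee X$, Proposition~\ref{prop-ax5-2} for the converse direction, and the minimality clause of Axiom~\ref{axiom3} to close the loop. The only (harmless) difference is organizational: the paper verifies $\bigvee X \in L_\alpha$ directly from the universal property of $\bigsqcup_\alpha\{\cdot\}$, whereas you obtain it as a consequence of the identity $\bigvee X = \bigsqcup_\alpha X$ together with Lemma~\ref{lem-slice added}.
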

\begin{proof}
First recall that since $x_i = \bigsqcup_\alpha \{x_i\}$, $x_i$ is a
$\sqsubseteq_\alpha$-least and $\leq$-least in $[x_i]_\alpha$.
Let $x = \bigvee X$. Then $x =_\beta x_i$ for all $i \in I$
and $\beta < \alpha$, by Axiom 4. Since also $x_i \leq x$ for all $i \in I$,\
and since Axiom 5 holds, we have $x_i \sqsubseteq_\alpha x$ for all $i \in I$.

Suppose now that $x \sqsubseteq_\alpha z$. We want to prove that $x\leq z$.
Since $x \sqsubseteq_\alpha z$, it holds $x_i \sqsubseteq_\alpha z$ for all $i \in I$.
Since $x_i = \bigsqcup_\alpha \{x_i\}$ for all $i \in I$, it follows that
$x_i \leq z$ for all $i \in I$. Thus, $x = \bigvee X \leq z$, proving that
$\bigvee X \in L_\alpha$.

It remains to show that $x =\bigsqcup_\alpha X$. We have seen that
$x_i \sqsubseteq_\alpha x$ for all $i \in I$. Thus,
$\bigsqcup_\alpha X \leq x$. Also, $x_i \sqsubseteq_\alpha \bigsqcup_\alpha X$
for all $i \in I$, and since by Lemma~\ref{lem-slice added}
 $\bigsqcup_\alpha X\in L_\alpha$,
it follows from Proposition~\ref{prop-ax5-2} that
$x_i \leq \bigsqcup_\alpha X$ for all $i \in I$. Thus,
$x \leq \bigsqcup_\alpha X$.
\end{proof}

\section{The Fixed Point Theorem}\label{fixed-point-section}
In this section we develop a fixed point theorem for functions $f:L\rightarrow L$,
where $L$ is a model of Axioms 1-4. Notice that the functions $f$ we consider are not
necessarily monotonic with respect to $\sqsubseteq$ (and therefore the traditional theorems
of fixed point theory do not apply to them). Instead, we require that the functions we consider
are {\em $\alpha$-monotonic} with respect to $\sqsubseteq_\alpha$, for all $\alpha < \lord$:

\begin{definition}
Suppose that $L$ is a model and let $\alpha < \kappa$.
A function $f: L \to L$ is called $\alpha$-monotonic
if for all $x,y \in L$, if $x \sqsubseteq_\alpha y$ then
$f(x) \sqsubseteq_\alpha f(y)$.
\end{definition}

In order to prove the main theorem (Theorem~\ref{main-fixed-point-theorem}) we need the following
technical lemma.
\begin{lemma}
\label{lem-fix2}
Let $L$ be a model. Suppose that $f: L \to L$ is $\alpha$-monotonic,
where $\alpha < \lord$. If $x \in L$ and  $x \sqsubseteq_\alpha f(x)$,
then there is some $y \in L$ with the following properties:
\begin{itemize}
\item $x \sqsubseteq_\alpha y =_\alpha f(y)$.
\item If $x \sqsubseteq_\alpha z$ and $f(z) \sqsubseteq_\alpha z$, then
$y \sqsubseteq_\alpha z$.
\item $y$ is the $\leq$-least element and a $\sqsubseteq_{\alpha + 1}$-least
element of $[y]_\alpha$ and $y \sqsubseteq_{\alpha + 1} f(y)$.
\end{itemize}
\end{lemma}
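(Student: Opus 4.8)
The plan is to build $y$ by a transfinite Kleene-style ascent of $f$ performed \emph{inside} the preorder $\sqsubseteq_\alpha$, imitating the usual least-fixed-point iteration but relative to level $\alpha$. I would start from the slice $w_0 = x|_\alpha = \bigsqcup_\alpha\{x\}$, which by Lemma~\ref{lem-slice} satisfies $w_0 =_\alpha x$ (hence $x \sqsubseteq_\alpha w_0$) and is the $\leq$-least element of $[x]_\alpha$. At successor stages I set $w_{\gamma+1} = \bigsqcup_\alpha\{f(w_\gamma)\} = f(w_\gamma)|_\alpha$, and at limit stages $w_\lambda = \bigsqcup_\alpha\{w_\gamma : \gamma < \lambda\}$. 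Throughout, I would maintain three invariants: each $w_\gamma$ is a slice (i.e.\ $w_\gamma = \bigsqcup_\alpha\{w_\gamma\}$), $x \sqsubseteq_\alpha w_\gamma$ (so $w_\gamma \in (x]_\alpha$ by Axiom~\ref{axiom1}, which is what makes every $\bigsqcup_\alpha$ above well defined), and crucially $w_\gamma \sqsubseteq_\alpha f(w_\gamma)$. The base case of the last invariant uses $\alpha$-monotonicity: from $w_0 =_\alpha x$ we get $f(w_0) =_\alpha f(x)$, and $x \sqsubseteq_\alpha f(x)$ together with $w_0 =_\alpha x$ then yields $w_0 \sqsubseteq_\alpha f(w_0)$.

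The key point that makes the iteration terminate is that the sequence is not merely $\sqsubseteq_\alpha$-increasing but also $\leq$-increasing. Because each $w_\gamma$ equals $\bigsqcup_\alpha\{w_\gamma\}$, the defining property of $\bigsqcup_\alpha$ in Axiom~\ref{axiom3} guarantees that whenever $w_\gamma \sqsubseteq_\alpha u$ one also has $w_\gamma \leq u$; applying this to $u = w_{\gamma+1}$ and to $u = w_\lambda$ (both $\sqsubseteq_\alpha$-above $w_\gamma$ by construction) shows $w_\gamma \leq w_{\gamma+1}$ and $w_\gamma \leq w_\lambda$. The invariant $w_{\gamma+1} \sqsubseteq_\alpha f(w_{\gamma+1})$ is preserved by applying $f$ to $w_\gamma \sqsubseteq_\alpha w_{\gamma+1}$ and using $w_{\gamma+1} =_\alpha f(w_\gamma)$; at a limit it follows since each chain $w_\gamma \sqsubseteq_\alpha f(w_\gamma) \sqsubseteq_\alpha f(w_\lambda)$ exhibits $f(w_\lambda)$ as a $\sqsubseteq_\alpha$-upper bound of all $w_\gamma$, whence $w_\lambda = \bigsqcup_\alpha\{w_\gamma\} \sqsubseteq_\alpha f(w_\lambda)$.

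Since $(w_\gamma)$ is a $\leq$-monotone, ordinal-indexed family in the \emph{set} $L$, it cannot be strictly increasing forever, so there is a least $\delta$ with $w_{\delta+1} = w_\delta$; I put $y = w_\delta$. Then $y = w_{\delta+1} =_\alpha f(w_\delta) = f(y)$, giving $y =_\alpha f(y)$, while $x \sqsubseteq_\alpha w_0 \sqsubseteq_\alpha y$ gives the first bullet. For the third bullet, $y = w_{\delta+1}$ is a slice, so by Lemma~\ref{lem-slice} it is the $\leq$-least and a $\sqsubseteq_{\alpha+1}$-least element of $[y]_\alpha$; and since $f(y) =_\alpha y$ means $f(y) \in [y]_\alpha$, the $\sqsubseteq_{\alpha+1}$-leastness yields $y \sqsubseteq_{\alpha+1} f(y)$.

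For the second bullet (minimality), suppose $x \sqsubseteq_\alpha z$ and $f(z) \sqsubseteq_\alpha z$; I would show $w_\gamma \sqsubseteq_\alpha z$ by transfinite induction. The base is $w_0 =_\alpha x \sqsubseteq_\alpha z$. At a successor, $w_\gamma \sqsubseteq_\alpha z$ gives $f(w_\gamma) \sqsubseteq_\alpha f(z) \sqsubseteq_\alpha z$ by $\alpha$-monotonicity, so $w_{\gamma+1} = \bigsqcup_\alpha\{f(w_\gamma)\} \sqsubseteq_\alpha z$ (noting $z \in (x]_\alpha$ and using the leastness of $\bigsqcup_\alpha$ from Axiom~\ref{axiom3}). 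At limits, $w_\gamma \sqsubseteq_\alpha z$ for all $\gamma < \lambda$ gives $w_\lambda \sqsubseteq_\alpha z$ by the same leastness. Hence $y = w_\delta \sqsubseteq_\alpha z$. The main obstacle is the bookkeeping that keeps every iterate a slice lying in $(x]_\alpha$ with $w_\gamma \sqsubseteq_\alpha f(w_\gamma)$: this is precisely what upgrades the $\sqsubseteq_\alpha$-ascent to a $\leq$-ascent and thereby forces termination, and it must be verified uniformly across the successor and limit stages.
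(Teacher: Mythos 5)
Your proof is correct, and its skeleton is the same as the paper's: a transfinite Kleene-style iteration of $f$ starting from $x$, carried out relative to $\sqsubseteq_\alpha$, with the three inductive claims ($x_\gamma \sqsubseteq_\alpha f(x_\gamma)$, the chain is $\sqsubseteq_\alpha$-increasing, and every $\sqsubseteq_\alpha$-post-fixed point above $x$ dominates every iterate) establishing the first two bullets. The one genuine difference is in the bookkeeping and the termination argument. The paper sets $x_{\gamma+1}=f(x_\gamma)$ and only applies $\bigsqcup_\alpha$ at limit stages; termination is then argued by eventual $=_\alpha$-stabilization of the sequence (a cardinality argument), and $y$ is taken at a limit ordinal precisely so that it has the form $\bigsqcup_\alpha\{x_\gamma:\gamma<\lambda\}$, which is what Lemma~\ref{lem-lem0} needs for the third bullet. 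You instead normalize every iterate to its slice, $w_{\gamma+1}=\bigsqcup_\alpha\{f(w_\gamma)\}$. This buys you two things: the defining property of $\bigsqcup_\alpha$ in Axiom~\ref{axiom3} upgrades the $\sqsubseteq_\alpha$-ascent to a genuine $\leq$-ascent, so termination follows from antisymmetry of $\leq$ at a successor stage rather than from an $=_\alpha$-stabilization argument; and the stopping point $y=w_{\delta+1}=\bigsqcup_\alpha\{f(y)\}$ is automatically in the form required by Lemma~\ref{lem-slice}/Lemma~\ref{lem-lem0}, so the third bullet needs no special choice of a limit ordinal. Both routes are valid; yours trades closeness to the classical Kleene iteration for a cleaner and more self-contained termination step.
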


\begin{proof}
%
Our proof is similar to a well-known proof of Tarski's least fixed point theorem
that constructs the least fixed point of a monotonic endofunction of a complete lattice
by a transfinite sequence of approximations.
Define $x_0= x$ and for all ordinals $\gamma > 0$ let
$x_\gamma = f(x_\delta)$ if $\gamma = \delta + 1$ and
$x_\gamma = \bigsqcup_\alpha \{x_\delta : \delta < \gamma\}$ if $\gamma > 0$
is a limit ordinal.
The definition makes sense since
we can prove by induction on $\gamma$ that
$x \sqsubseteq_\alpha x_\gamma$ for all $\gamma$.
Indeed, this is clear when $\gamma$ is $0$. Suppose that
$\gamma > 0$ and our claim holds for all ordinals less than $\gamma$.
If $\gamma = \delta+1$, then $x \sqsubseteq_\alpha x_\delta$
by the induction hypothesis, thus $x \sqsubseteq_\alpha f(x) \sqsubseteq_\alpha
f(x_\delta) = x_\gamma$ by the assumption that $f$ is $\alpha$-monotonic
and $x \sqsubseteq_\alpha f(x)$.
 If $\alpha$ is a limit ordinal,
then $x_\gamma = \bigsqcup_\alpha \{x_\delta : \delta < \gamma\}$,
and since $x \sqsubseteq_\alpha x_\delta$ for all
$\delta < \gamma$, also $x \sqsubseteq_\alpha x_\gamma$ by the definition of
$\bigsqcup_\alpha$.

{\it Claim:} For all $\gamma$, $x_\gamma \sqsubseteq_\alpha f(x_\gamma)$.

We prove this claim by induction on $\gamma$.
When $\gamma = 0$, this holds by assumption. Suppose that
$\gamma = \delta + 1$. By the induction hypothesis, we
have $x_\delta \sqsubseteq_\alpha f(x_\delta) = x_\gamma$.
Thus, $x_\gamma = f(x_\delta) \sqsubseteq_\alpha f(x_\gamma)$,
since $f$ is $\alpha$-monotonic.
Suppose now that $\gamma > 0$ is a limit ordinal, so that
$x_\gamma = \bigsqcup_\alpha \{x_\delta : \delta < \gamma\}$.
Thus, $x_\delta \sqsubseteq_\alpha x_\gamma$, and
$x_\delta  \sqsubseteq_\alpha f(x_\delta) \sqsubseteq_\alpha f(x_\gamma)$
for all $\delta < \gamma$,
by the induction hypothesis and since $f$
preserves the relation $\sqsubseteq_\alpha$.
 It follows that $x_\gamma \sqsubseteq_\alpha f(x_\gamma)$.

{\it Claim:} For all ordinals $\beta < \gamma$, we have $x_\beta \sqsubseteq_\alpha x_\gamma$.

Again, we prove this claim by induction on $\gamma$. When $\gamma = 0$, our claim is trivial.
Suppose that $\gamma > 0$. If $\gamma = \delta + 1$, then $\beta \leq \delta$ and
$x_\beta \sqsubseteq_\alpha x_\delta \sqsubseteq_\alpha f(x_\delta) = x_\gamma$
by the induction hypothesis and the previous claim.
Thus, $x_\beta \sqsubseteq_\alpha x_\gamma$.
If $\gamma > 0$ is a limit ordinal, then $x_\beta \sqsubseteq_\alpha x_\gamma$
by the definition of $x_\gamma$.

{\it Claim:} Suppose that $x \sqsubseteq_\alpha z$ and $f(z) \sqsubseteq_\alpha z$.
Then $x_\gamma \sqsubseteq_\alpha z$ for all $\gamma$.

We proceed by induction on $\gamma$. Since $x_0 = x$, our claim is clear
for $\gamma = 0$. Suppose that $\gamma > 0$. If $\gamma = \delta + 1$,
then since $x_\delta \sqsubseteq_\alpha z$ by the induction hypothesis,
we have $x_\gamma = f(x_\delta) \sqsubseteq_\alpha f(z) \sqsubseteq_\alpha z$
by assumption and since $f$ is $\alpha$-monotonic. Thus, $x_\gamma \sqsubseteq_\alpha z$.
If $\gamma > 0$ is a limit ordinal, then $x_\gamma =\bigsqcup_\alpha \{x_\delta : \delta < \gamma\}
\sqsubseteq _\alpha z$, since by the induction hypothesis,
$x_\delta \sqsubseteq_\alpha z$ for all $\delta < \gamma$.

Now, there is an ordinal $\lambda_0$ such that $x_\gamma=_\alpha x_\delta$ for all
$\gamma,\delta \geq \lambda_0$ (since otherwise the cardinality of the set $\{x_\gamma :
\gamma \mbox{ is an ordinal}\}$ would exceed the cardinality of $L$). Let $\lambda$ denote
the least limit ordinal with $\lambda \geq \lambda_0$.
Let $y = x_\lambda$. By the definition of $y$, we have that $f(y) =_\alpha y$ and $x\sqsubseteq_\alpha y$ (since $x = x_0$).
Suppose that $z\in L$ with $x \sqsubseteq_\alpha z$ and
$f(z) \sqsubseteq_\alpha z$. Then, as shown above,
$x_\gamma \sqsubseteq_\alpha z$ for all $\gamma$,
thus $y \sqsubseteq_\alpha z$.

To complete the proof, we still need to verify that $y$ is the
$\leq$-least element of $[y]_\alpha$ and $y  \sqsubseteq_{\alpha + 1} f(y)$.
But $y = \bigsqcup_\alpha \{x_\gamma : \gamma <\lambda\}$,
and thus, by Lemma~\ref{lem-lem0}, $y$ is the $\leq$-least element and a $\sqsubseteq_{\alpha + 1}$-least element
of $[y]_\alpha$. Now $f(y) =_\alpha y$, so $f(y) \in [y]_\alpha$, thus
$y \sqsubseteq_{\alpha + 1} f(y)$.
\end{proof}

Below for any $x\in L$
and ordinal $\alpha < \lord$ with $x \sqsubseteq_\alpha f(x)$,
we will denote the element $y$ constructed
above by $f_\alpha(x)$.  We have shown above that when
$x \sqsubseteq_\alpha f(x)$, then $f_\alpha(x)$
satisfies the three properties of
Lemma~\ref{lem-fix2}.

We now introduce the notion of {\em $\alpha$-continuity}, which
is stronger than $\alpha$-monotonicity:
\begin{definition}
Suppose that $L$ is a model and let $\alpha < \kappa$.
The function $f:L \to L$ is called \emph{$\alpha$-continuous} if it is $\alpha$-monotonic
and for all sequences $(x_n)_{n \geq 0}$ of elements of $L$ such that for all $n\geq 0$,
$x_n \sqsubseteq_\alpha x_{n+1}$, it holds that
$f(\bigsqcup_\alpha \{x_n:n \geq 0\}) =_\alpha \bigsqcup_\alpha \{f(x_n):n \geq 0\}$.
\end{definition}
%
%
%

As the following example illustrates, not every $\alpha$-monotonic function
is $\alpha$-continuous.
\begin{example}
We construct a function over the standard model (ie., the set of infinite-valued
interpretations) which is 0-monotonic but not 0-continuous. Let $Z=\{x_0,x_1,\ldots\}$
be a set of propositional atoms. Consider the following function $f:V^Z\rightarrow V^Z$:
$$f(I)(x) = \mbox{if $\,\,(\forall y\in Z(I(y)=T_0))$ then $T_0$ else $F_0$}$$
Let us denote by $\perp$ (respectively $\top$) the interpretation which assigns to all $x\in Z$ the
value $F_0$ (respectively $T_0$). Then, $f(\top)=\top$ and $f(I) = \perp$, for all $I\neq \top$.
Using this remark, it is clear that $f$ is 0-monotonic. However, $f$ is not 0-continuous. To see this,
consider the chain $I_0\sqsubseteq_0 I_1 \sqsubseteq_0\cdots$ where each $I_n$ is defined as follows:
\[
\begin{array}{ll}
I_n(x_m) = \left\{\begin{array}{ll}
                                    T_0,       &\mbox{if $m<n$}\\
                                    F_0,       &\mbox{otherwise}
                                    \end{array}
                                  \right.
\end{array}
\]
Then, $f(\bigsqcup_\alpha\{I_n:n\geq 0\}) = \top$ while $\bigsqcup_\alpha\{f(I_n):n\geq 0\} = \perp$.
\end{example}

The additional assumption of $\alpha$-continuity is quite important since it reduces in the general case the steps required
in order to obtain the element $f_\alpha(x)$ in the proof of Lemma~\ref{lem-fix2}.
\begin{remark}\label{reducing-steps}
Let $f:L \rightarrow L$ be $\alpha$-continuous for each ordinal $\alpha < \lord$.
Then for each $\alpha$, the construction of the element
$f_\alpha(x)$ in the proof of Lemma~\ref{lem-fix2} terminates at stage $\omega$, since
$f(x_\omega) = f(\bigsqcup_\alpha \{x_n : n \geq 0\}) =
f(\bigsqcup_\alpha \{f^n(x) : n \geq 0\}) =_\alpha  \bigsqcup_\alpha \{f^n(x) : n \geq 0\}
= \bigsqcup_\alpha \{x_n : n \geq 0\} = x_\omega$.
\end{remark}

Using Lemma~\ref{lem-fix2} we can now obtain the main theorem of the paper:
\begin{theorem}\label{main-fixed-point-theorem}
Let $L$ be a model. Suppose that $f: L \to L$ is $\alpha$-monotonic
for each ordinal $\alpha < \lord$. Then $f$ has a least pre-fixed point
with respect to the partial order $\sqsubseteq$, which is also the least
fixed point of $f$.
\end{theorem}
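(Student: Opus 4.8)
The plan is to construct the sought least fixed point level by level as a compatible sequence, using at each stage the operator $f_\alpha$ provided by Lemma~\ref{lem-fix2}. I would define a transfinite sequence $(y_\alpha)_{\alpha<\lord}$ by recursion, maintaining throughout the three invariants that every constructed initial segment $(y_\beta)_{\beta<\gamma}$ is a partial compatible sequence, that $y_\alpha =_\alpha f(y_\alpha)$, and that $y_\alpha \sqsubseteq_{\alpha+1} f(y_\alpha)$. At the base, since $\perp \sqsubseteq_0 f(\perp)$ by Remark~\ref{rem-remark1}, I set $y_0 = f_0(\perp)$. At a successor $\alpha=\delta+1$ the invariant $y_\delta \sqsubseteq_{\delta+1} f(y_\delta)$ is exactly the hypothesis needed to form $y_{\delta+1} = f_{\delta+1}(y_\delta)$. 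At a limit $\alpha$ I put $w_\alpha = \bigvee_{\beta<\alpha} y_\beta$ and set $y_\alpha = f_\alpha(w_\alpha)$. In each case the three conclusions of Lemma~\ref{lem-fix2} re-establish all invariants immediately, so the outcome is a genuine compatible sequence $(y_\alpha)_{\alpha<\lord}$.

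The only nonroutine point in the recursion is verifying that $f_\alpha$ is applicable at a limit $\alpha$, i.e.\ that $w_\alpha \sqsubseteq_\alpha f(w_\alpha)$. I would show $f(w_\alpha)\in (w_\alpha]_\alpha$: for each $\beta<\alpha$, Lemma~\ref{lem-sup} gives $w_\alpha =_\beta y_\beta$, whence $\beta$-monotonicity (applied in both directions) yields $f(w_\alpha) =_\beta f(y_\beta)$, and by the invariant $f(y_\beta)=_\beta y_\beta =_\beta w_\alpha$, so $f(w_\alpha) =_\beta w_\alpha$. Since by Lemma~\ref{lem-partial} $w_\alpha$ is a $\sqsubseteq_\alpha$-least element of $(w_\alpha]_\alpha$, we get $w_\alpha \sqsubseteq_\alpha f(w_\alpha)$, as required.

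Now set $M = \bigvee_{\alpha<\lord} y_\alpha$, so that $M|_\alpha = y_\alpha$ and $M =_\alpha y_\alpha$ for every $\alpha$. To prove $M$ is a fixed point I would fix $\alpha$ and apply $\alpha$-monotonicity to $M =_\alpha y_\alpha$ to obtain $f(M) =_\alpha f(y_\alpha)$, which by the invariant equals $y_\alpha =_\alpha M$. Thus $f(M)=_\alpha M$ for all $\alpha$, and Axiom~\ref{axiom2} forces $f(M)=M$.

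It remains to show $M \sqsubseteq z$ for every pre-fixed point $z$ (that is, $f(z)\sqsubseteq z$); since $M$ is itself a fixed point and every fixed point is a pre-fixed point, this makes $M$ simultaneously the least pre-fixed point and the least fixed point. I would take $\alpha^*$ to be the least ordinal, if any, with $y_{\alpha^*}\neq_{\alpha^*} z$, so that $y_\beta =_\beta z$ for all $\beta<\alpha^*$. The crux, and the step I expect to be the main obstacle, is upgrading the global hypothesis $f(z)\sqsubseteq z$ to the level-wise statement $f(z)\sqsubseteq_{\alpha^*} z$; this is precisely where the equalities below $\alpha^*$ enter. Indeed, if $f(z)\sqsubset_\gamma z$ with $\gamma<\alpha^*$, then $y_\gamma =_\gamma z$ together with $\gamma$-monotonicity and $f(y_\gamma)=_\gamma y_\gamma$ forces $f(z) =_\gamma f(y_\gamma) =_\gamma y_\gamma =_\gamma z$, contradicting $f(z)\sqsubset_\gamma z$; the remaining possibilities ($f(z)=z$, or $f(z)\sqsubset_\gamma z$ with $\gamma\ge\alpha^*$) give $f(z)\sqsubseteq_{\alpha^*} z$ outright. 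Having $f(z)\sqsubseteq_{\alpha^*} z$, and checking that the input used to build $y_{\alpha^*}$ is itself $\sqsubseteq_{\alpha^*} z$ (using $z\in(\cdot]_{\alpha^*}$ and the $\sqsubseteq_{\alpha^*}$-least, respectively $\sqsubseteq_{\delta+1}$-least, properties from Lemmas~\ref{lem-partial},~\ref{lem-slice} and~\ref{lem-fix2}), the second clause of Lemma~\ref{lem-fix2} yields $y_{\alpha^*}\sqsubseteq_{\alpha^*} z$, hence $y_{\alpha^*}\sqsubset_{\alpha^*} z$, and Corollary~\ref{cor-sqsubset-compatible} gives $M\sqsubset z$. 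If no such $\alpha^*$ exists, then $y_\alpha =_\alpha z$ for all $\alpha$, so $M=_\alpha z$ for all $\alpha$ and $M=z$ by Axiom~\ref{axiom2}. In either case $M\sqsubseteq z$, completing the argument.
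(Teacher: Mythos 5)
Your proposal is correct and follows essentially the same route as the paper: the same transfinite iteration $y_0=f_0(\perp)$, $y_{\delta+1}=f_{\delta+1}(y_\delta)$, $y_\alpha=f_\alpha(\bigvee_{\beta<\alpha}y_\beta)$ at limits, the same compatibility and fixed-point arguments, and the same level-wise leastness argument. The only (harmless) differences are cosmetic: you locate the least ordinal $\alpha^*$ with $y_{\alpha^*}\neq_{\alpha^*}z$ directly instead of running the paper's transfinite induction, and you derive $f(z)=_\gamma z$ for $\gamma<\alpha^*$ via $\gamma$-monotonicity rather than by the paper's contradiction argument.
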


\begin{proof}
Let us define for each ordinal $\alpha < \lord$,
$x_\alpha = f_\alpha(y_\alpha)$, where $y_\alpha = \bigvee_{\beta < \alpha} x_\beta$.
Notice that $x_0 = f_0(\perp)$ and when $\alpha = \beta + 1$, then $x_\alpha = f_\alpha(x_\beta)$.

We need to verify that $y_\alpha \sqsubseteq_{\alpha} f(y_\alpha)$ for all $\alpha < \lord$,
so that all the consequences of Lemma~\ref{lem-fix2} also hold.
We will also show that if $\beta < \alpha$, then $x_\beta =_\beta x_\alpha$.
It then follows that $(x_\alpha)_{\alpha < \lord}$ is a compatible sequence (since
by Lemma~\ref{lem-fix2} each $x_\alpha$ is $\leq$-least in $[x_\alpha]_\alpha$),
in particular $x_\beta \leq x_\alpha$ whenever $\beta < \alpha$.

When $\alpha = 0$, the above facts are clear, since by Remark~\ref{rem-remark1}
it holds $\perp \sqsubseteq_0 z$ for all $z$. Suppose that $\alpha > 0$ and that we have
proved our claim for all ordinals less than $\alpha$. If $\alpha = \beta+1$, then
$y_\alpha = x_\beta$. Since by the induction hypothesis $y_\beta \sqsubseteq_\beta f(y_\beta)$,
we have $x_\beta \sqsubseteq_\alpha f(x_\beta)$ by Lemma~\ref{lem-fix2}.
Thus, $y_\alpha \sqsubseteq_\alpha f(y_\alpha)$.
Also, if $\gamma < \alpha$ then $\gamma \leq \beta$,
so that $x_\gamma =_\gamma x_\beta$
by the induction hypothesis. Since $x_\beta \sqsubseteq_\alpha f_\alpha(x_\beta) =x_\alpha$,
we conclude by Lemma~\ref{some-consequences}(d) that $x_\gamma =_\gamma x_\alpha$.

Suppose now that $\alpha > 0$ is a limit ordinal.
Then $y_\alpha = \bigvee_{\beta < \alpha}x_\beta$,
and since $x_\beta =_\beta x_\gamma$ and $x_\beta \leq x_\gamma$
for all $\beta < \gamma <\alpha$, we know
by Lemma~\ref{lem-sup}
 that $y_\alpha =_\beta x_\beta$ for all
$\beta < \alpha$. Moreover, by Lemma~\ref{lem-partial} we know that $y_\alpha$
is the $\leq$-least and an $\sqsubseteq_{\alpha}$-least
element of $(y_\alpha]_\alpha$. Now since $y_\alpha =_\beta x_\beta$
for all $\beta < \alpha$, also $f(y_\alpha) =_\beta f(x_\beta) =_\beta x_\beta$
for all $\beta < \alpha$. This means that $f(y_\alpha) \in (y_\alpha]_\alpha$
and thus $y_\alpha \sqsubseteq_\alpha f(y_\alpha)$.

Thus, $x_\alpha$ has the properties implied by Lemma~\ref{lem-fix2}.
In particular, $y_\alpha \sqsubseteq_\alpha x_\alpha$ and thus
$x_\beta =_\beta y_\alpha =_\beta x_\alpha$ whenever $\beta < \alpha$
and $x_\alpha$ is the $\leq$-least element of $[x_\alpha]_\alpha$.

We have thus proved that $(x_\alpha)_{\alpha < \lord}$ is a compatible
sequence which determines the element $x_\infty = \bigvee_{\alpha < \lord} x_\alpha$
which is the unique element with $x_\infty =_\alpha x_\alpha$
for all $\alpha < \lord$.

Now since $x_\infty =_\alpha x_\alpha$ for all $\alpha < \lord$,
also $f(x_\infty) =_\alpha f(x_\alpha) =_\alpha x_\alpha$
for all $\alpha < \lord$. Thus, $f(x_\infty) = x_\infty$.
It remains to show that $x_\infty$ is the least pre-fixed
point of $f$ with respect to $\sqsubseteq$.

Suppose that $f(z) \sqsubseteq z$. We want to prove by induction that
for all $\alpha < \lord$, either $x_\gamma \sqsubset_\gamma z$
for some $\gamma < \alpha$, or $x_\gamma \sqsubseteq_\gamma z$
for all $\gamma \leq \alpha$. It then follows that $x_\infty \sqsubseteq z$.

When $\alpha = 0$ then by Lemma~\ref{inclusion-lemma} and Remark~\ref{rem-remark1} it holds
that $f(z) \sqsubseteq_0 z$ and $\perp \sqsubseteq_0 z$;
thus, by Lemma~\ref{lem-fix2}, $x_0 \sqsubseteq_0 z$.

Suppose now that $\alpha > 0$. If $x_\gamma \sqsubset_\gamma z$
for some $\gamma < \alpha$, then we are done. So suppose that
this is not the case, ie., $x_\gamma =_\gamma z$
for all $\gamma < \alpha$.

Suppose that $\alpha = \beta + 1$. Then $x_\beta =_\beta z$ and thus
$z \in [x_\beta]_\beta$. Since $x_\beta$ is $\leq$-least in $[x_\beta]_\beta$,
by Lemma~\ref{lem-lem0} $x_\beta$ is $\sqsubseteq_\alpha$-least
in $[x_\beta]_\beta$, thus $x_\beta \sqsubseteq_\alpha z$.
We conclude that $x_\beta \sqsubseteq_\alpha f(x_\beta) \sqsubseteq_\alpha f(z)$,
ie., $x_\beta \sqsubseteq_\alpha f(z)$.
If $f(z) \sqsubset_\gamma z$ for some $\gamma < \alpha$, then by
$x_\beta \sqsubseteq_\alpha f(z) \sqsubset_\gamma z$ we
have $x_\beta \sqsubset_\gamma z$, contradicting $x_\beta =_\beta z$.
Thus, since $f(z) \sqsubseteq z$ and $f(z) =_\gamma z$ for all $\gamma < \alpha$,
we must have $f(z) \sqsubseteq_\alpha z$. Since $x_\alpha = f_\alpha(x_\beta)$
and $x_\beta \sqsubseteq_\alpha z$, we conclude
by Lemma~\ref{lem-fix2} that $x_\alpha \sqsubseteq_\alpha z$.

Suppose that $\alpha > 0$ is a limit ordinal. Then,
as shown above,  $y_\alpha =_\gamma x_\gamma$
for all $\gamma < \alpha$. Since also $x_\gamma =_\gamma z$
for all $\gamma < \alpha$, we have  $z \in (y_\alpha]_\alpha$.
But $y_\alpha$ is the $\leq$-least and a $\sqsubseteq_\alpha$-least
element of $(y_\alpha]_\alpha$, so that $y_\alpha \sqsubseteq_\alpha z$
and thus $y_\alpha  \sqsubseteq_\alpha f(y_\alpha) \sqsubseteq_\alpha f(z)$;
therefore $y_\alpha \sqsubseteq_\alpha f(z)$. Suppose that $f(z) \sqsubset_\gamma z$
for some $\gamma < \alpha$. Then $y_\alpha \sqsubseteq_\alpha f(z) \sqsubset_\gamma z$
and thus $y_\alpha \sqsubset_\gamma z$, contradicting $y_\alpha \sqsubseteq_\alpha z$.
Thus, $f(z) =_\gamma z$ for all $\gamma < \alpha$.
Since $f(z) \sqsubseteq z$ and $f(z)=_\gamma z$
for all $\gamma < \alpha$,  we have $f(z) \sqsubseteq_\alpha z$.
Since $y_\alpha \sqsubseteq_\alpha z$ and $f(z) \sqsubseteq_\alpha z$,
by Lemma~\ref{lem-fix2} we have
that $x_\alpha \sqsubseteq_\alpha z$.
\end{proof}

The above theorem has as a special case the well-known Knaster-Tarski fixed point theorem~\cite{Tar55}.
This can be seen as follows. As remarked at the end of Subsection~\ref{axioms-subsection},
Theorem~\ref{main-fixed-point-theorem} continues to hold even if $\lord$ is assumed
to be a successor ordinal. Consider now the case $\lord = 2$ and take $\sqsubseteq_0$ to be
equal to the $\leq$ relation; notice that $\sqsubseteq_1$ is the equality relation on $L$.
Then, the statement of the theorem reduces to the Knaster-Tarski theorem when $f$ is assumed
to be monotonic with respect to $\leq$. Moreover, if $f$ is continuous with respect to $\leq$,
then by Remark~\ref{reducing-steps}, Theorem~\ref{main-fixed-point-theorem} reduces to Kleene's
fixed point theorem.

\section{A Class of $\alpha$-Continuous Functions}\label{a-continuous-functions}
In this section we investigate conditions which guarantee that a function
is $\alpha$-monotonic (respectively $\alpha$-continuous). In our exposition
we will need a slight generalization of the definitions of $\alpha$-monotonicity and
$\alpha$-continuity in order to cover functions of the form $f:L\rightarrow L'$
(and not just $f:L\rightarrow L$):
\begin{definition}
Suppose that $L$ and $L'$ are models and $\alpha < \kappa$.
A function $f: L \to L'$ is called $\alpha$-monotonic
if for all $x,y \in L$, if $x \sqsubseteq_\alpha y$ then
$f(x) \sqsubseteq_\alpha f(y)$.
\end{definition}
\begin{definition}
Suppose that $L$ and $L'$ are models and $\alpha < \kappa$.
A function $f:L \to L'$ is called \emph{$\alpha$-continuous} if it is
$\alpha$-monotonic and for all
sequences $(x_n)_{n \geq 0}$ of elements of $L$ such that for all $n\geq 0$,
$x_n \sqsubseteq_\alpha x_{n+1}$, it holds that
$f(\bigsqcup_\alpha \{x_n:n \geq 0\}) =_\alpha \bigsqcup_\alpha \{f(x_n):n \geq 0\}$.
\end{definition}

The following lemma characterizes certain properties of $\alpha$-monotonic and
$\alpha$-continuous functions:
\begin{lemma}\label{composition-projection-lemma}
Suppose that $L,L',L''$ and $L_i, \ i \in I$, are models.
\begin{itemize}
\item
If $f: L \to L'$ and $g: L'\to L''$ are $\alpha$-monotonic ($\alpha$-continuous),
then so is their composition $g\circ f: L \to L''$.
\item
Each projection function $\mathit{pr}_j:\prod_{i \in I} L_i \to L_j$
for $j \in I$ is $\alpha$-continuous and $\alpha$-monotonic.
\item A function $f :L \to \prod_{i\in I}L_i$ is $\alpha$-monotonic
($\alpha$-continuous) iff each
function $\mathit{pr}_i  \circ f: L \to L_i$
is.
\end{itemize}
\end{lemma}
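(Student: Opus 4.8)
The plan is to prove the three claims of Lemma~\ref{composition-projection-lemma} separately, since each reduces to unwinding the relevant definition. For the composition claim, I would take $f:L\to L'$ and $g:L'\to L''$ to be $\alpha$-monotonic and let $x\sqsubseteq_\alpha y$ in $L$; then $f(x)\sqsubseteq_\alpha f(y)$ by $\alpha$-monotonicity of $f$, and applying $\alpha$-monotonicity of $g$ to this gives $g(f(x))\sqsubseteq_\alpha g(f(y))$, which is exactly $\alpha$-monotonicity of $g\circ f$. For the $\alpha$-continuous case, after establishing $\alpha$-monotonicity as above, I would take an $\sqsubseteq_\alpha$-increasing sequence $(x_n)$ and abbreviate $s=\bigsqcup_\alpha\{x_n:n\ge 0\}$. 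The aim is $g(f(s))=_\alpha\bigsqcup_\alpha\{g(f(x_n))\}$. Since $f$ is $\alpha$-continuous, $f(s)=_\alpha\bigsqcup_\alpha\{f(x_n)\}$; applying $g$ and using that $g$ preserves $=_\alpha$ (a consequence of $\alpha$-monotonicity, since $=_\alpha$ is the symmetric part of $\sqsubseteq_\alpha$) yields $g(f(s))=_\alpha g(\bigsqcup_\alpha\{f(x_n)\})$, and $\alpha$-continuity of $g$ applied to the $\sqsubseteq_\alpha$-increasing sequence $(f(x_n))$ rewrites the right-hand side as $\bigsqcup_\alpha\{g(f(x_n))\}$, up to $=_\alpha$.

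For the projection claim, I would show directly that each $\pr_j:\prod_{i\in I}L_i\to L_j$ is $\alpha$-continuous, which subsumes $\alpha$-monotonicity. Monotonicity is immediate: in the product model $f\sqsubseteq_\alpha g$ means $f(i)\sqsubseteq_\alpha g(i)$ for every $i$, so in particular $\pr_j(f)=f(j)\sqsubseteq_\alpha g(j)=\pr_j(g)$. For continuity, the key fact to invoke is the formula for the $\bigsqcup_\alpha$ operation in the product model, namely that $(\bigsqcup_\alpha X)(i)=\bigsqcup_\alpha X(i)$ where $X(i)=\{h(i):h\in X\}$, which was established in Subsection~\ref{subsection-further}. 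Thus $\pr_j(\bigsqcup_\alpha\{x_n\})=(\bigsqcup_\alpha\{x_n\})(j)=\bigsqcup_\alpha\{x_n(j)\}=\bigsqcup_\alpha\{\pr_j(x_n)\}$, giving even equality (not just $=_\alpha$).

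For the third claim, the forward direction follows from the first two: if $f:L\to\prod_i L_i$ is $\alpha$-monotonic ($\alpha$-continuous), then each $\pr_i\circ f$ is a composition of $\alpha$-monotonic ($\alpha$-continuous) maps, hence $\alpha$-monotonic ($\alpha$-continuous). For the reverse direction, suppose each $\pr_i\circ f$ has the property. For monotonicity, let $x\sqsubseteq_\alpha y$; then $(\pr_i\circ f)(x)\sqsubseteq_\alpha(\pr_i\circ f)(y)$, i.e.\ $f(x)(i)\sqsubseteq_\alpha f(y)(i)$, for every $i$, which by the definition of $\sqsubseteq_\alpha$ in the product is exactly $f(x)\sqsubseteq_\alpha f(y)$. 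For continuity, one checks $f(\bigsqcup_\alpha\{x_n\})=_\alpha\bigsqcup_\alpha\{f(x_n)\}$ componentwise: applying $\pr_i$ to both sides and using $\alpha$-continuity of $\pr_i\circ f$ together with the product formula for $\bigsqcup_\alpha$ reduces the coordinate equation to $=_\alpha$ in $L_i$, and a componentwise $=_\alpha$ in the product is $=_\alpha$ in $L$.

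The main point requiring care — rather than a genuine obstacle — is ensuring that $=_\alpha$ really is preserved by $\alpha$-monotonic maps and that the product formula $(\bigsqcup_\alpha X)(i)=\bigsqcup_\alpha X(i)$ is available; both are needed to pass cleanly between the $\sqsubseteq_\alpha$-statements in the definitions and the $=_\alpha$-equations in the continuity condition, and to move the $\bigsqcup_\alpha$ operation through projections. Since both facts are already recorded in the excerpt, the proof is essentially a careful bookkeeping exercise with no transfinite induction or structural subtlety.
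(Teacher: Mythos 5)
Your proposal is correct and follows essentially the same route as the paper's proof: unwind the definitions for composition, use the coordinatewise formula $(\bigsqcup_\alpha X)(i)=\bigsqcup_\alpha X(i)$ for the projections, and derive the third claim componentwise, with the forward direction obtained from the first two. The one point you flag explicitly --- that $\alpha$-monotonic maps preserve $=_\alpha$ since $=_\alpha$ is the symmetric part of the preorder $\sqsubseteq_\alpha$ --- is left implicit in the paper's chain of $=_\alpha$'s, so your version is if anything slightly more careful.
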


\begin{proof}
To prove the first claim, suppose that $x \sqsubseteq_\alpha y$ in $L$.
If $f,g$ are $\alpha$-monotonic, then $f(x) \sqsubseteq_\alpha f(y)$ and
$g(f(x))\sqsubseteq_\alpha g(f(y))$. Suppose that $(x_n)_{n\geq 0}$ is an $\omega$-chain
in $L$ with $x_n \sqsubseteq_\alpha x_{n+1}$ for all $n \geq 0$. If
$f$ and $g$ are $\alpha$-continuous they are also $\alpha$-monotonic and therefore
$(f(x_n))_{n\geq 0}$ is an $\omega$-chain in $L'$ with $f(x_n) \sqsubseteq_\alpha f(x_{n+1})$
for all $n \geq 0$. Moreover:
\begin{eqnarray*}
g(f(\bigsqcup_\alpha \{x_n:n\geq 0\})) &=_\alpha& g(\bigsqcup_\alpha \{f(x_n):n\geq 0\})\\
&=_\alpha& \bigsqcup_\alpha \{g(f(x_n)):n\geq 0\}.
\end{eqnarray*}

To prove the second claim, suppose that $(x_n)_{n\geq 0}$ is an $\omega$-chain in
$\prod_{i \in I} L_i$ with $x_n \sqsubseteq_\alpha x_{n+1}$ for all $n \geq 0$.
Then for each $j \in I$, $(x_n(j))_{n\geq 0}$ is an $\omega$-chain in $L_j$ with
$x_n(j) \sqsubseteq_\alpha x_{n+1}(j)$ for all $n \geq 0$ and
$(\bigsqcup_\alpha \{x_n:n\geq 0\})(j) = \bigsqcup_\alpha \{x_n(j):n\geq 0\}$. Thus,
\begin{eqnarray*}
\pr_j(\bigsqcup_\alpha \{x_n:n\geq 0\}) &=& (\bigsqcup_\alpha \{x_n:n\geq 0\})(j)\\
&=& \bigsqcup_\alpha \{x_n(j):n\geq 0\}\\
&=& \bigsqcup_\alpha \{\pr_j(x_n):n\geq 0\}.
\end{eqnarray*}

To prove the last claim, suppose that $f: L \to \prod_{i\in I} L_i$. If $f$
is $\alpha$-monotonic ($\alpha$-continuous),
then so is each $\pr_i \circ f$ for $i \in I$ by the first two claims. Suppose now that
each $\pr_i \circ f$ for $i \in I$ is $\alpha$-monotonic and $x \sqsubseteq_\alpha y$.
Then $(f(x))(i) \sqsubseteq_\alpha (f(y))(i)$ holds for all $i \in I$
and thus $f(x) \sqsubseteq_\alpha f(y)$. Suppose now that each $\pr_i \circ f$ is
$\alpha$-continuous for $i \in I$ and let $(x_n)_{n\geq 0}$ be an $\omega$-chain in $L$
with $x_n \sqsubseteq_\alpha x_{n+1}$ for all $n\geq 0$.
Then for each $i\in I$,  $(f(x_n)(i))_{n\geq 0}$ is an $\omega$-chain in $L_i$
with $f(x_n)(i) \sqsubseteq_\alpha f(x_{n+1})(i)$ for all $n \geq 0$
and $f(\bigsqcup_\alpha \{x_n:n\geq 0\})(i) =_\alpha \bigsqcup_\alpha \{f(x_n)(i):n\geq 0\}$.
Thus, $f(\bigsqcup_\alpha \{x_n:n\geq 0\}) =_\alpha \bigsqcup_\alpha \{f(x_n):n\geq 0\}$, proving that
$f$ is $\alpha$-continuous.
\end{proof}

We now proceed to investigate conditions that guarantee that functions are $\alpha$-monotonic
($\alpha$-continuous). As a first step we will impose two new axioms
on our models.

\begin{framed}
\begin{axiom}\label{axiom6}
Suppose that $L$ is a model. We say that $L$ satisfies Axiom 6
if for every $\alpha < \lord$, for every index set $J$ and for all $x_j,y_j \in L$ with $j \in J$,
if $x_j \sqsubseteq_\alpha y_j$ for all $j\in J$, then $\bigvee_{j\in J}x_j \sqsubseteq_\alpha \bigvee_{j \in J} y_j$.
\end{axiom}
\end{framed}

Note that Axiom 6 is obvious when $J$ is empty or a singleton set.
Clearly, Axiom 6 implies Axiom 4. Indeed, if Axiom 6 holds and
$x_j =_\alpha y$ for all $j \in J$, where $J$ is not empty,
then $\bigvee_{j\in J} x_j =_\alpha \bigvee_{j\in J}y = y$.

We have the following lemma and proposition:

 \begin{lemma}\label{product-axiom-6}
 Suppose that $L_i,\ i\in I$, is a family of models satisfying Axiom 6.
 Then $L = \prod_{i\in I} L_i$ also satisfies Axiom 6.
 \end{lemma}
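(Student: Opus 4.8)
The plan is to exploit the fact that in the product model $L = \prod_{i \in I} L_i$ both the relation $\sqsubseteq_\alpha$ and the least upper bound operation $\bigvee$ are defined coordinatewise, so that the claim for $L$ reduces to the corresponding claim in each factor $L_i$. Concretely, I would fix $\alpha < \lord$, an index set $J$, and families $(x_j)_{j \in J}$ and $(y_j)_{j \in J}$ of elements of $L$ with $x_j \sqsubseteq_\alpha y_j$ for all $j \in J$. By the definition of $\sqsubseteq_\alpha$ on the product (Subsection~\ref{product-model}), this hypothesis unfolds to $x_j(i) \sqsubseteq_\alpha y_j(i)$ in $L_i$ for every $i \in I$ and every $j \in J$.

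Next I would fix an arbitrary coordinate $i \in I$ and apply Axiom~\ref{axiom6} in the factor $L_i$ to the two families $(x_j(i))_{j \in J}$ and $(y_j(i))_{j \in J}$, obtaining $\bigvee_{j \in J} x_j(i) \sqsubseteq_\alpha \bigvee_{j \in J} y_j(i)$ in $L_i$. Since suprema in the product are computed pointwise, i.e.\ $(\bigvee_{j \in J} x_j)(i) = \bigvee_{j \in J} x_j(i)$ and likewise for the $y_j$, this says precisely that $(\bigvee_{j \in J} x_j)(i) \sqsubseteq_\alpha (\bigvee_{j \in J} y_j)(i)$. As $i \in I$ was arbitrary, the coordinatewise definition of $\sqsubseteq_\alpha$ on $L$ then yields $\bigvee_{j \in J} x_j \sqsubseteq_\alpha \bigvee_{j \in J} y_j$, which is exactly Axiom~\ref{axiom6} for $L$.

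There is essentially no obstacle here: the proof is a pure coordinatewise verification. The only point that warrants attention is that the supremum in the product lattice is genuinely pointwise, which is built into the construction of the product model (and is precisely the identity $(\bigvee X)(i) = \bigvee\{f(i) : f \in X\}$ already used there to verify Axiom~\ref{axiom4}). The degenerate cases $J = \emptyset$ and $J$ a singleton are trivial, as noted immediately after the statement of Axiom~\ref{axiom6}.
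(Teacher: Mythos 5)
Your proposal is correct and is essentially identical to the paper's own proof: both unfold the hypothesis coordinatewise, apply Axiom~\ref{axiom6} in each factor $L_i$, and then use the pointwise definitions of $\bigvee$ and $\sqsubseteq_\alpha$ on the product to conclude. No gaps.
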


 \begin{proof}
 Let $x_j,y_j\in L$ for all $j \in J$. Suppose that $x_j \sqsubseteq_\alpha y_j$
 for all $j \in J$. Then $x_j(i) \sqsubseteq_\alpha y_j(i)$ for all $i\in I$ and $j \in J$.
 By assumption, Axiom 6 holds in each $L_i$. Thus, we have $\bigvee_{j \in J} x_j(i) \sqsubseteq_\alpha
 \bigvee_{j \in J} y_j(i)$
 for each $i\in I$. We conclude that $\bigvee_{j \in J} x_j \sqsubseteq_\alpha \bigvee_{j \in J}y_j$.
 \end{proof}

 \begin{proposition}
 Suppose that $L,L'$ are models such that $L'$ satisfies Axiom 6 and $\alpha < \kappa$.
 If $f_j: L\to L'$
 is an  $\alpha$-monotonic function for  each $j \in J$, then so is
 $f = \bigvee_{j\in J}f_j: L \to L'$ defined by $f(x) = \bigvee_{j \in J}f_j(x)$.
 \end{proposition}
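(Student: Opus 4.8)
The plan is to verify the defining property of $\alpha$-monotonicity directly, using the $\alpha$-monotonicity of the individual functions $f_j$ together with Axiom 6 in the codomain $L'$. So I would fix arbitrary $x,y \in L$ with $x \sqsubseteq_\alpha y$, and aim to show $f(x) \sqsubseteq_\alpha f(y)$, i.e.\ $\bigvee_{j\in J} f_j(x) \sqsubseteq_\alpha \bigvee_{j\in J} f_j(y)$.

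First I would apply the hypothesis that each $f_j$ is $\alpha$-monotonic: from $x \sqsubseteq_\alpha y$ it follows that $f_j(x) \sqsubseteq_\alpha f_j(y)$ for every $j \in J$. This produces two families $(f_j(x))_{j\in J}$ and $(f_j(y))_{j\in J}$ of elements of $L'$ that are related $\sqsubseteq_\alpha$-wise index by index. Next I would invoke Axiom 6 in $L'$, taking the index set to be $J$ itself and the elements $x_j := f_j(x)$ and $y_j := f_j(y)$; since $x_j \sqsubseteq_\alpha y_j$ for all $j \in J$, Axiom 6 yields $\bigvee_{j\in J} f_j(x) \sqsubseteq_\alpha \bigvee_{j\in J} f_j(y)$. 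By the definition of $f$ this is precisely $f(x) \sqsubseteq_\alpha f(y)$, completing the argument.

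There is essentially no genuine obstacle here, since Axiom 6 is formulated exactly so as to let the least upper bound operation $\bigvee$ be pushed through the preorder $\sqsubseteq_\alpha$, which is the one manipulation the proof requires. The only point worth flagging is \emph{why} Axiom 6 rather than Axiom 4 is needed: Axiom 4 only controls how $\bigvee$ interacts with the equivalence $=_\alpha$ (it says the supremum of a family that is constant up to $=_\alpha$ stays in the same $=_\alpha$-class), whereas here we must preserve the strictly stronger, genuinely order-theoretic relation $\sqsubseteq_\alpha$ between two different families. The edge case $J = \emptyset$ is harmless, as $f$ is then the constant map with value $\bot$, which is trivially $\alpha$-monotonic; Axiom 6 is also noted to hold vacuously in that case.
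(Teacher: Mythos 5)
Your proof is correct and follows exactly the paper's own argument: apply the $\alpha$-monotonicity of each $f_j$ to get $f_j(x) \sqsubseteq_\alpha f_j(y)$ for all $j \in J$, then invoke Axiom 6 in $L'$ to pass to the suprema. The additional remarks on why Axiom 4 would not suffice and on the case $J = \emptyset$ are accurate but not needed.
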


 \begin{proof}
 Suppose that $x \sqsubseteq_\alpha y$ in $L$. Then $f_j(x) \sqsubseteq_\alpha
 f_j(y)$ for all $j \in J$. Thus, by Axiom 6, $f(x) \sqsubseteq_\alpha f(y)$.
 \end{proof}

\begin{framed}
\begin{axiom}\label{axiom7}
Suppose that $L$ is a model. We say that $L$ satisfies Axiom 7
if it satisfies Axiom 6 and for every $\alpha < \lord$, for every
index set $J$ and for all  $x_{j,n}\in L$ with $j \in J$ and $n \geq 0$,
if $x_{j,n}\sqsubseteq_\alpha x_{j,n+1}$
for all $j \in J$ and $n \geq 0$, then:
\begin{eqnarray*}\label{eq-ax7-1}
 \bigsqcup_\alpha \{\bigvee_{j \in J} x_{j,n}:n\geq 0\}
&=_\alpha& \bigvee_{j \in J} \bigsqcup_{\alpha} \{x_{j,n}:n\geq 0\}.
\end{eqnarray*}
\end{axiom}
\end{framed}

\begin{lemma}\label{product-axiom-7}
Suppose that $L_i,\ i\in I$ is a family of models satisfying Axiom 7.
Then $L = \prod_{i\in I} L_i$ also satisfies Axiom 7.
\end{lemma}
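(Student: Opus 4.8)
The plan is to mimic the proof of Lemma~\ref{product-axiom-6}, reducing the verification of Axiom~\ref{axiom7} to a componentwise computation and invoking Axiom 7 in each factor $L_i$. The whole argument rests on the fact, established in the construction of the product model in Subsection~\ref{product-model}, that all three operations occurring in Axiom 7 are computed componentwise in $L$: for any family, $(\bigvee X)(i) = \bigvee\{f(i): f\in X\}$ and $(\bigsqcup_\alpha X)(i) = \bigsqcup_\alpha\{f(i): f \in X\}$, while $x =_\alpha y$ holds in $L$ iff $x(i) =_\alpha y(i)$ for all $i \in I$ (the latter because $\sqsubseteq_\alpha$, and hence its equivalence $=_\alpha$, is componentwise).

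First I would note that Axiom 6 holds in $L$ by Lemma~\ref{product-axiom-6}, so it only remains to verify the displayed equation of Axiom 7. Fix $\alpha < \lord$, an index set $J$, and elements $x_{j,n} \in L$ with $x_{j,n} \sqsubseteq_\alpha x_{j,n+1}$ for all $j \in J$ and $n \geq 0$. Before computing, I would check that the least upper bounds appearing are well-defined: projecting the hypothesis to component $i$ gives $x_{j,n}(i) \sqsubseteq_\alpha x_{j,n+1}(i)$, and by Axiom~\ref{axiom1} each such chain is constant modulo $=_\beta$ for every $\beta < \alpha$, so each $\bigsqcup_\alpha\{x_{j,n}(i) : n\geq 0\}$ exists in $L_i$; similarly, Axiom 6 in $L$ gives $\bigvee_{j\in J} x_{j,n} \sqsubseteq_\alpha \bigvee_{j\in J} x_{j,n+1}$, so the outer $\bigsqcup_\alpha$ on the left-hand side is also defined.

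The core step is to evaluate each side of the equation at an arbitrary component $i \in I$. Using the componentwise formulas, the $i$-th component of the left-hand side is $\bigsqcup_\alpha\{\bigvee_{j\in J} x_{j,n}(i): n\geq 0\}$, and the $i$-th component of the right-hand side is $\bigvee_{j\in J}\bigsqcup_\alpha\{x_{j,n}(i): n\geq 0\}$. These are exactly the two sides of the Axiom 7 equation for the family $\big(x_{j,n}(i)\big)$ in $L_i$, which are $=_\alpha$-equivalent since $L_i$ satisfies Axiom 7. As this holds for every $i \in I$, and since $=_\alpha$ in $L$ is the componentwise relation, the two sides are $=_\alpha$-equivalent in $L$, completing the proof.

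I do not expect a genuine obstacle here; the only point requiring care is the bookkeeping of well-definedness of the various $\bigsqcup_\alpha$ expressions, which is handled by Axiom~\ref{axiom1} (to see that the relevant families lie in a common $(x]_\alpha$) together with Lemma~\ref{product-axiom-6} for the outermost least upper bound on the left-hand side.
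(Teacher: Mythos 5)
Your proposal is correct and follows essentially the same route as the paper's proof: reduce everything to components via the componentwise formulas for $\bigvee$, $\bigsqcup_\alpha$ and $=_\alpha$, apply Axiom 7 in each factor $L_i$, and reassemble. The extra well-definedness bookkeeping you include (via Axiom~1 and Lemma~\ref{product-axiom-6}) is a harmless refinement that the paper leaves implicit.
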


\begin{proof}
Suppose that $L_i,\ i\in I$ is a family of models satisfying Axiom 7.
By Lemma~\ref{product-axiom-6} we know that $L = \prod_{i\in I} L_i$
satisfies Axiom 6.
Let $x_{j,n}\in L$ for all $j \in J$ and $n\geq 0$, with
$x_{j,n}\sqsubseteq_\alpha x_{j,n+1}$.
Then $x_{j,n}(i) \sqsubseteq_\alpha x_{j,n}(i)$ for all $j \in J, n \geq 0$
and $i \in I$. Since by our assumption each $L_i$ satisfies Axiom 7,
we have
\begin{eqnarray*}
\left( \bigsqcup_\alpha \{\bigvee_{j \in J} x_{j,n}:n\geq 0\} \right)(i)
&=&
\bigsqcup_\alpha \{\bigvee_{j \in J} x_{j,n}(i):n\geq 0\}\\
&=_\alpha&
 \bigvee_{j \in J} \bigsqcup_{\alpha}\{ x_{j,n}(i):n\geq 0\}\\
 &=&
\left( \bigvee_{j \in J} \bigsqcup_{\alpha} \{x_{j,n}:n\geq 0\} \right)(i)
\end{eqnarray*}
for each $i \in I$. Thus, $ \bigsqcup_\alpha \{\bigvee_{j \in J} x_{j,n}:n\geq 0\}
=_\alpha \bigvee_{j \in J} \bigsqcup_{\alpha} \{x_{j,n}:n\geq 0\}$, completing the proof of the
fact that $L$ satisfies Axiom 7.
\end{proof}

\begin{proposition}\label{proposition-axiom-7}
Suppose that $L,L'$ are models such that $L'$ satisfies Axiom 7 and $\alpha < \kappa$.
If $f_j: L\to L'$ is an $\alpha$-continuous function for  each $j \in J$, then so is
$f = \bigvee_{j\in J}f_j: L \to L'$.
\end{proposition}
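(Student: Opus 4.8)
The plan is to establish the two defining conditions of $\alpha$-continuity for $f$ in turn, exploiting the fact that Axiom~\ref{axiom7} subsumes Axiom~\ref{axiom6}.

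First I would dispose of $\alpha$-monotonicity. Since $L'$ satisfies Axiom~\ref{axiom7} it a fortiori satisfies Axiom~\ref{axiom6}, and each $f_j$ is $\alpha$-continuous hence $\alpha$-monotonic; the Proposition stated immediately before Axiom~\ref{axiom7} then gives at once that $f=\bigvee_{j\in J}f_j$ is $\alpha$-monotonic. A useful by-product: for any sequence $(x_n)_{n\geq 0}$ with $x_n\sqsubseteq_\alpha x_{n+1}$ we then have $f(x_n)\sqsubseteq_\alpha f(x_{n+1})$ and $f_j(x_n)\sqsubseteq_\alpha f_j(x_{n+1})$ for every $j$, so (by Axiom~\ref{axiom1}) the arguments of every $\bigsqcup_\alpha$ used below lie in a common set $(\,\cdot\,]_\alpha$ and all these suprema are defined.

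For the continuity equation, fix such a sequence and set $x=\bigsqcup_\alpha\{x_n:n\geq 0\}$. I would chain three $=_\alpha$-steps. (i) The $\alpha$-continuity of each $f_j$ gives $f_j(x)=_\alpha\bigsqcup_\alpha\{f_j(x_n):n\geq 0\}$; since $=_\alpha$ passes through arbitrary joins (apply Axiom~\ref{axiom6} in both directions to hypotheses of the form $u_j=_\alpha v_j$), joining over $j\in J$ yields
\[
f(x)=\bigvee_{j\in J}f_j(x)=_\alpha\bigvee_{j\in J}\bigsqcup_\alpha\{f_j(x_n):n\geq 0\}.
\]
(ii) Applying Axiom~\ref{axiom7} with $x_{j,n}=f_j(x_n)$, whose hypothesis $f_j(x_n)\sqsubseteq_\alpha f_j(x_{n+1})$ was checked above, interchanges the two operators:
\[
\bigvee_{j\in J}\bigsqcup_\alpha\{f_j(x_n):n\geq 0\}=_\alpha\bigsqcup_\alpha\Big\{\bigvee_{j\in J}f_j(x_n):n\geq 0\Big\}.
\]
(iii) By the definition of $f$ the right-hand side is exactly $\bigsqcup_\alpha\{f(x_n):n\geq 0\}$. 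Transitivity of $=_\alpha$ then gives $f(x)=_\alpha\bigsqcup_\alpha\{f(x_n):n\geq 0\}$, which is the required equation.

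The whole argument is bookkeeping: the real content is carried by Axiom~\ref{axiom7}, which was tailored precisely to swap $\bigvee_j$ past $\bigsqcup_\alpha$. The only points demanding any care are the well-definedness of the nested suprema and the passage of $=_\alpha$ through joins, both immediate from Axioms~\ref{axiom1} and~\ref{axiom6}; so I anticipate no genuine obstacle and expect the proof to be very short.
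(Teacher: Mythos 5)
Your proof is correct and follows essentially the same route as the paper's: handle $\alpha$-monotonicity via the earlier proposition, then chain $f(\bigsqcup_\alpha\{x_n\}) =_\alpha \bigvee_j \bigsqcup_\alpha\{f_j(x_n)\} =_\alpha \bigsqcup_\alpha\{\bigvee_j f_j(x_n)\}$ using the $\alpha$-continuity of each $f_j$, the fact that $=_\alpha$ passes through joins (Axiom~\ref{axiom6}), and the interchange supplied by Axiom~\ref{axiom7}. Your remarks on well-definedness of the nested $\bigsqcup_\alpha$'s are a slightly more careful spelling-out of what the paper states in passing, but the argument is the same.
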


\begin{proof}
Let $(x_n)_{n\geq 0}$ be an $\omega$-chain in $L$ with $x_n \sqsubseteq_\alpha x_{n+1}$
for each $n\geq 0$. Then for each fixed $j$, $(f_j(x_n))_{n\geq 0}$ is an $\omega$-chain
in $L'$ with $f_j(x_n) \sqsubseteq_\alpha f_j(x_{n+1})$ for all $n \geq 0$.
Since Axiom 7 holds in $L'$, the $\omega$-chain $(\bigvee_{j \in J} f_j(x_n))_{n\geq 0}$
satisfies $\bigvee_{j \in J} f_j(x_n) \sqsubseteq_\alpha \bigvee_{j \in J} f_j(x_{n+1})$
for all $n \geq 0$. Moreover:
\begin{eqnarray*}
\bigsqcup_\alpha \{\bigvee_{j \in J} f_j(x_n):n\geq 0\}
&=_\alpha&
 \bigvee_{j \in J} \bigsqcup_\alpha \{f_j(x_n):n\geq 0\}\\
 &=_\alpha&
 \bigvee_{j \in J} f_j(\bigsqcup_\alpha \{x_n:n\geq 0\})
\end{eqnarray*}
using Axiom 7 and the assumption that each $f_j$ is $\alpha$-continuous.
Thus, $f(\bigsqcup_\alpha \{x_n:n\geq 0\}) =_\alpha \bigsqcup_\alpha \{f(x_n):n\geq 0\}$, proving that $f$ is
$\alpha$-continuous.
\end{proof}

Recall now that by Lemma~\ref{product-axiom-7}, if $L$ is a model satisfying Axiom 7, then for
each set $Z$, $L^Z$ is also a model satisfying this axiom. In particular, for each $n \geq 0$,
$L^n$ is a model satisfying this axiom. Then, the following corollary summarizes the results
obtained thus far in this section:
\begin{corollary}\label{a-continuous-corollary}
Let $L$ be a model satisfying Axiom 7 and let $Z$ be a set.
Suppose that $f_j : L^{n_j} \to L$ is an $\alpha$-monotonic
($\alpha$-continuous) function for each $j \in J$.
Consider a function $g: L^Z \to L^Z$ such that each component
function\footnote{The component function $g_z$ of $g$ for $z \in Z$ is the function defined
by $g_z(x) = \pr_z(g(x)) = (g(x))(z)$, for all $x \in L^Z$.} $g_z: L^Z \to L$ of $g$ for $z\in Z$  can be constructed from
the projections $\mathit{pr}_{z'}: L^Z \to L$ for $z'\in Z$, the functions $f_j$, $j \in J$
and the constants in $L$ by function composition and the supremum
operation $\bigvee$. Then $g$ is also $\alpha$-monotonic ($\alpha$-continuous).
\end{corollary}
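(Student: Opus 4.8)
The plan is to reduce the statement to its one-component form and then argue by structural induction on the term that builds up each component.

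First I would invoke the third claim of Lemma~\ref{composition-projection-lemma}, applied to the product $L^Z = \prod_{z \in Z} L$: the map $g: L^Z \to L^Z$ is $\alpha$-monotonic (respectively $\alpha$-continuous) if and only if every component $g_z = \mathit{pr}_z \circ g : L^Z \to L$ is. Hence it suffices to show that each $g_z$ has the desired property, and this I would establish by induction on the structure of the expression defining $g_z$ from the projections $\mathit{pr}_{z'}$, the functions $f_j$, the constants in $L$, composition, and $\bigvee$.

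For the base cases, each projection $\mathit{pr}_{z'}: L^Z \to L$ is both $\alpha$-monotonic and $\alpha$-continuous by the second claim of Lemma~\ref{composition-projection-lemma}, and each constant function $x \mapsto c$ is $\alpha$-monotonic by reflexivity of $\sqsubseteq_\alpha$ and $\alpha$-continuous because $c = c(\bigsqcup_\alpha\{x_n\})$ while $\bigsqcup_\alpha\{c\} = c|_\alpha =_\alpha c$ by Lemma~\ref{lem-slice}. For the inductive step involving a function $f_j: L^{n_j} \to L$ applied to already-constructed subterms $h_1,\ldots,h_{n_j}: L^Z \to L$, I would rewrite the result as the composite $f_j \circ \langle h_1,\ldots,h_{n_j}\rangle$, where the tupling $\langle h_1,\ldots,h_{n_j}\rangle : L^Z \to L^{n_j}$ carries the relevant property exactly when each $h_k$ does (again the third claim of Lemma~\ref{composition-projection-lemma}); since $f_j$ has the property by hypothesis, the first claim of that lemma yields the composite. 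For the inductive step involving $\bigvee$, if $g_z = \bigvee_j h_j$ with each $h_j$ satisfying the property, then $g_z$ is $\alpha$-monotonic by the proposition on suprema of $\alpha$-monotonic functions stated immediately after Axiom~\ref{axiom6} (applicable because Axiom~\ref{axiom7} implies Axiom~\ref{axiom6} in $L$), and $\alpha$-continuous by Proposition~\ref{proposition-axiom-7} (which uses Axiom~\ref{axiom7} in $L$).

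This is essentially a routine structural induction that merely threads together the previously established closure properties. The only point requiring genuine care is the correct bookkeeping of the arities of the $f_j$: feeding several subterms into a multi-ary $f_j$ must be handled uniformly by recasting each such application as composition with a tupling map and then appealing to the first and third claims of Lemma~\ref{composition-projection-lemma}. I do not expect a deeper obstacle, since every closure operation used in the construction of $g_z$ has already been shown to preserve $\alpha$-monotonicity and $\alpha$-continuity.
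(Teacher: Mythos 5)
Your proposal is correct and follows essentially the same route as the paper, which simply declares the corollary a direct consequence of Lemma~\ref{composition-projection-lemma}, Lemma~\ref{product-axiom-7} and Proposition~\ref{proposition-axiom-7}; you merely make explicit the structural induction (including the constant and tupling cases) that the paper leaves implicit.
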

\begin{proof}
A direct consequence of Lemma~\ref{composition-projection-lemma}, Lemma~\ref{product-axiom-7} and
Proposition~\ref{proposition-axiom-7}.
\end{proof}

We can now obtain an application of the results of this section. In particular, we concentrate
on the model $V$ of truth values. Our aim is to obtain the main result of~\cite{RW05} as a special
case of the general framework that has been developed in this paper.

\begin{lemma}
The model $V$ of truth values satisfies Axioms 6 and 7.
\end{lemma}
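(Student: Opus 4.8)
The plan is to reduce both axioms to a single computation performed ``at level $\alpha$''. Fix $\alpha<\Omega$ and set $V_{\ge\alpha}=\{v\in V:\order(v)\ge\alpha\}$. I first record the local structure of the relations. From the definitions, $v=_\alpha v'$ iff $v=v'$ or $\order(v),\order(v')>\alpha$, and on $V_{\ge\alpha}$ the preorder $\sqsubseteq_\alpha$ collapses to the three-element chain $F_\alpha\sqsubset_\alpha w\sqsubset_\alpha T_\alpha$, where $w$ is any value of order $>\alpha$ and all such $w$ are mutually $=_\alpha$-equivalent. I therefore introduce a \emph{level map} $\ell\colon V_{\ge\alpha}\to\{b<m<t\}$ onto a three-element chain, sending $F_\alpha\mapsto b$, every value of order $>\alpha$ to $m$, and $T_\alpha\mapsto t$; then for $v,v'\in V_{\ge\alpha}$ one has $v=_\alpha v'$ iff $\ell(v)=\ell(v')$, and $v\sqsubseteq_\alpha v'$ iff $\ell(v)\le\ell(v')$.

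The one genuine order-theoretic fact I would establish is that both suprema compute the supremum of levels: for every nonempty $S\subseteq V_{\ge\alpha}$,
\[
\ell\!\left(\bigvee S\right)=\ell\!\left(\bigsqcup_\alpha S\right)=\sup_{v\in S}\ell(v),
\]
the sup on the right taken in $b<m<t$. For $\bigsqcup_\alpha$ this is immediate from the case definition of $\bigsqcup_\alpha X$ on $V$ (it equals $T_\alpha$ exactly when some member is $T_\alpha$, equals $F_\alpha$ exactly when all members are $F_\alpha$, and has order $\alpha+1$ otherwise). For $\bigvee$ it rests on the observations that $F_\alpha$ and $T_\alpha$ are respectively the $\le$-least and $\le$-greatest elements of $V_{\ge\alpha}$, and that $T_\alpha$ covers $T_{\alpha+1}$, so that $\bigvee S=T_\alpha$ iff $T_\alpha\in S$ and $\bigvee S=F_\alpha$ iff $S=\{F_\alpha\}$; in every other case $\bigvee S$ has order $>\alpha$. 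I would also record the two ``domination'' facts about values of order $<\alpha$: each $T_\beta$ with $\beta<\alpha$ exceeds every element of $V_{\ge\alpha}$, while each $F_\beta$ with $\beta<\alpha$ lies below every element of $V_{\ge\alpha}$.

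For Axiom~6, given $x_j\sqsubseteq_\alpha y_j$ I split $J$ into $A=\{j:\order(x_j)<\alpha\}$ and $B=J\setminus A$. On $A$ the relation forces $x_j=y_j$, so the $A$-contributions to the two suprema coincide; on $B$ one gets $\order(x_j),\order(y_j)\ge\alpha$. If some $x_j$ (equivalently $y_j$) with $j\in A$ is a $T$-value, it dominates everything of order $\ge\alpha$ and forces $\bigvee_j x_j=\bigvee_j y_j$, whence the claim. Otherwise the $A$-part is $\le F_\alpha$ and is absorbed, so $\bigvee_j x_j=\bigvee_{B}x_j$ and $\bigvee_j y_j=\bigvee_{B}y_j$ lie in $V_{\ge\alpha}$; applying the supremum-of-levels identity together with $\ell(x_j)\le\ell(y_j)$ gives $\ell(\bigvee_{B} x_j)\le\ell(\bigvee_{B} y_j)$, i.e.\ $\bigvee_j x_j\sqsubseteq_\alpha\bigvee_j y_j$.

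For Axiom~7 (which presupposes Axiom~6, already in hand) I would again peel off the order-$<\alpha$ indices: for fixed $j$ the chain $(x_{j,n})_n$ is $=_\beta$-constant for all $\beta<\alpha$, so it is either a constant value of order $<\alpha$ or entirely contained in $V_{\ge\alpha}$. If some such constant value is a $T$-value it dominates both sides and they are literally equal; otherwise the order-$<\alpha$ part is absorbed and everything reduces to families in $V_{\ge\alpha}$. There the identity $\ell(\bigvee\,\cdot)=\ell(\bigsqcup_\alpha\,\cdot)=\sup\ell(\cdot)$ turns both sides of the Axiom~7 equation into the same iterated supremum $\sup_{j,n}\ell(x_{j,n})$ in the chain $b<m<t$, since suprema in a chain commute; hence the two sides are $=_\alpha$-equal. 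The main obstacle is purely bookkeeping: correctly handling the order-$<\alpha$ components, which under $\bigvee$ either dominate (the $T$-values) or are negligible (the $F$-values), and checking that a supremum cannot ``jump'' unexpectedly up to $T_\alpha$ — this is exactly the covering fact for $T_\alpha$ noted above, and it is the only place where the specific order type of $V$ is used.
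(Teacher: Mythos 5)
Your argument is correct, but it is organized quite differently from the paper's. The paper proves Axioms 6 and 7 separately by a direct case analysis on the value of the relevant supremum ($x<F_\alpha$, $x=F_\alpha$, $x=T_\beta\geq T_\alpha$, $F_\alpha<x<T_\alpha$ for Axiom 6, and a four-way split on $\order(z)$ for Axiom 7), each case verified by hand. You instead factor the restriction of $\sqsubseteq_\alpha$ and $=_\alpha$ to $V_{\geq\alpha}$ through a monotone quotient onto the three-element chain $b<m<t$ and prove the single identity $\ell(\bigvee S)=\ell(\bigsqcup_\alpha S)=\sup_{v\in S}\ell(v)$, after which both axioms reduce to the commutation of suprema in a chain plus bookkeeping for the order-$<\alpha$ indices. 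This is more structural and arguably more illuminating: it isolates exactly where the order type of $V$ enters (namely that $V_{\geq\alpha}\setminus\{T_\alpha\}$ has greatest element $T_{\alpha+1}$, so a supremum cannot jump to $T_\alpha$ unless $T_\alpha$ is already present), and it makes the $=_\alpha$-equality in Axiom 7 transparent, whereas the paper re-derives the relevant facts in each branch. The price is the peeling-off of the low-order components, where two small points deserve an explicit word: the degenerate case $B=\emptyset$ (where the claim holds because all $x_j=y_j$, respectively all chains are constant), and the fact that the supremum of the $F$-values of order $<\alpha$ may itself be $F_\alpha$ rather than of order $<\alpha$ --- absorption into the $B$-part still works since that supremum is $\leq F_\alpha$, but the phrase ``constant value of order $<\alpha$'' glosses over it. Neither point is a genuine gap; the proof goes through as sketched.
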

\begin{proof}
Let $x_j,y_j \in V$
with $x_j \sqsubseteq_\alpha y_j$ for all $j \in J$. Let $x = \bigvee_{j \in J}x_j$
and $y = \bigvee_{j \in J} y_j$. We want to prove that $x \sqsubseteq_\alpha y$.
This is clear when $J = \emptyset$, so below we assume that $J$ is not empty.
If $x < F_\alpha$ then $x_j = y_j$ for all $j \in J$, so that $x = y$.
Suppose now that $x = T_\beta  \geq T_\alpha$. Then there exists some $j \in J$
with $x_j = y_j = T_\beta$, and  $y_j \leq T_\beta$ for all $j\in J$.
Thus, $y = T_\beta$ and  $x = y$ again. Suppose next that $x = F_\alpha$. Then for all
$j \in J$, $x_j = y_j \leq  F_\alpha$, or $x_j = F_\alpha$ and $\order(y_j) \geq \alpha$.
At any rate, $\order(y)\geq \alpha$ so that $x = F_\alpha \sqsubseteq_\alpha y$.
Last, suppose that $F_\alpha < x <T_\alpha$, ie. $\order(x) > \alpha$.
Then $x_j < T_\alpha$ for all $j \in J$ and thus $y_j \leq T_\alpha$ for all $j\in J$.
Moreover, there is at least one $j$ with $F_\alpha < x_j$ so that also $F_\alpha < y_j$.
It follows that $y = T_\alpha$ or $\order(y) > \alpha$. Thus, $x \sqsubseteq_\alpha y$
again.


Next we prove that $V$ satisfies Axiom 7. To this end, let $x_{j,n}\in V$
with $x_{j,n}\sqsubseteq_\alpha x_{j,n+1}$ for all $j \in J$ and $n\geq 0$,
and we again assume that $J \neq \emptyset$. We already know that
$\bigvee_{j \in J} x_{j,n} \sqsubseteq_\alpha \bigvee_{j \in J}x_{j,n+1}$
holds for all $n \geq 0$. Define
\begin{eqnarray*}
y_j &=&\bigsqcup_\alpha \{x_{j,n}: n \geq 0\},\quad j \in J\\
y  &=&  \bigvee_{j \in J} y_j\\
z_n &=& \bigvee_{j \in J} x_{j,n},\quad n \geq 0\\
z &=& \bigsqcup_\alpha \{z_n: n \geq 0\}.
\end{eqnarray*}
Our aim is to prove that $y = z$.

Suppose first that $\order(z) < \alpha$. Then $z_n = z$ for all $n \geq 0$,
and either $z = F_\beta$ or $z = T_\beta$ for some $\beta < \alpha$.
If $z = F_\beta$ for some $\beta < \alpha$ then $x_{j,n} \leq F_\beta$
for all $j\in J$ and $n \geq 0$, and $z = \bigvee_{j\in J,n\geq 0} x_{j,n} (= F_\beta)$.
Moreover, $y_j = x_{j,n} \leq F_\beta$ for all $j \in J$ and $n \geq 0$,
so that $y = \bigvee_{j \in J} y_j = \bigvee_{j \in J,n\geq 0} x_{j,n} = F_\beta$.
Thus, $y = z$. Suppose now that $z = T_\beta$, where $\beta <\alpha$.
Then $z_n = T_\beta$
for all $n\geq 0$ and thus $z = \bigvee_{j \in J,n \geq 0} x_{j,n} =T_\beta$.
Since $z_n = T_\beta$ for all $n \geq 0$, $x_{j,n} \leq T_\beta$
for all $j\in J$ and $n \geq 0$, and for each $n$ there is some $j_0$
with $x_{j_0,n} = T_\beta$. However, if $x_{j_0,n} = T_{\beta}$ for some $n$,
then $x_{j_0,m} = T_\beta$ for all $m\geq 0$ and $y_{j_0} = T_\beta$.
We observe that $y_j\leq T_\beta$ for all $j \in J$, and there is
at least one $j_0\in J$ with $y_{j_0} = T_\beta$.
Thus, $y = \bigvee_{j \in J} y_j = T_\beta$ and $y = z$.

Suppose next that $\order(z) > \alpha$. In this case $z = F_{\alpha + 1}$.
We have $F_\alpha \leq z_n< T_\alpha$ for all $n\geq 0$, moreover, there exists
some $n$ with $F_\alpha < z_n < T_\alpha$. It follows that
$x_{j,n} <T_\alpha$ for all $j\in J$ and $n \geq 0$, moreover,
there exists some $j$ and $n$ with $F_\alpha < x_{j,n}< T_\alpha$.
We conclude that $y_j \leq F_{\alpha+1}$ for all $j\in J$,
and that there is some $j$ with $y_j = F_{\alpha+1}$.
Thus, $y = F_{\alpha+1} = z$.


Suppose last that $\order(z) = \alpha$, so that $z = F_\alpha$ or $z = T_\alpha$.
The case when $z = F_\alpha$ is similar to the case when $z = F_\beta$ for
some $\beta < \alpha$ and we have that $y = F_\alpha$. So suppose that
$z = T_\alpha$. Then there exists $n_0$ such that $z_{n_0}=T_\alpha$ and $j_0$ such
that $x_{j_0,n_0}=T_\alpha$. This implies that $y_{j_0}=T_\alpha$. Moreover, by the
definition of $z$ and by the fact that $z=T_\alpha$ we have that $\order(z_n)\geq \alpha$
for all $n\geq 0$; therefore, by the definition of $z_n$, $x_{j,n} \leq T_\alpha$, for all
$n\geq 0$ and $j\in J$. It follows that $y_j \leq T_\alpha$. Therefore, since $y_{j_0}=T_\alpha$, we get
that $y=T_\alpha = z$.
\end{proof}

As the following two lemmas demonstrate, the negation operation $\mysim \, : V \rightarrow V$ and the
conjunction operation $\wedge : V\times V \rightarrow V$ are both  $\alpha$-continuous. Recall
the definition of $\mysim$ (Definition~\ref{interpretation}); the definition of $\wedge$
(also implicitly given in Definition~\ref{interpretation}) is as follows: for $x,y\in V$,
$x\wedge y = \mathit{min}\{x,y\}$. For simplicity, in the following we will use $\wedge$
both as an infix as-well-as prefix operation.
\begin{lemma}
The conjunction operation $\wedge: V\times V \rightarrow V$ is $\alpha$-continuous,
for all $\alpha<\Omega$.
\end{lemma}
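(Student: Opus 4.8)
The plan is to verify the two defining requirements of $\alpha$-continuity separately: first that $\wedge = \min$ is $\alpha$-monotonic, and then that it commutes with $\bigsqcup_\alpha$ up to $=_\alpha$ along $\sqsubseteq_\alpha$-increasing $\omega$-chains. Throughout I fix $\alpha < \Omega$ and exploit the structure of $V$: the set $V_{\geq\alpha} = \{v \in V : \order(v) \geq \alpha\}$ is exactly the closed interval $[F_\alpha, T_\alpha]$ of the chain $V$, hence is closed under $\min$; on this interval $\sqsubseteq_\alpha$ collapses to the three-class linear order $F_\alpha \sqsubset_\alpha m \sqsubset_\alpha T_\alpha$, where $m$ ranges over the order-$>\alpha$ values (all mutually $=_\alpha$); and any $v$ with $\order(v) < \alpha$ is \emph{frozen}, in the sense that $v \sqsubseteq_\alpha w$ forces $v = w$. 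Since the domain $V \times V$ is the product model, $(a,b) \sqsubseteq_\alpha (a',b')$ means $a \sqsubseteq_\alpha a'$ and $b \sqsubseteq_\alpha b'$, and $\bigsqcup_\alpha$ is computed coordinatewise by the explicit formula for $V$.

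For $\alpha$-monotonicity I would take $a \sqsubseteq_\alpha a'$ and $b \sqsubseteq_\alpha b'$ and argue $\min(a,b) \sqsubseteq_\alpha \min(a',b')$. If $a$ has order $< \alpha$ then $a = a'$, and one checks directly that $\min(a,\cdot)$ is $\sqsubseteq_\alpha$-monotone: when $a = F_\gamma$ with $\gamma<\alpha$ it lies below the whole interval, so $\min(a,b)=\min(a,b')=a$; when $a = T_\gamma$ with $\gamma<\alpha$ it lies above the interval, so $\min(a,b)=b \sqsubseteq_\alpha b'=\min(a,b')$; symmetrically for $b$. In the remaining case all four values lie in $[F_\alpha,T_\alpha]$, where $\sqsubseteq_\alpha$ is the three-class order and monotonicity of $\min$ is a short check on which class each of $a',b'$ occupies.

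The main work is the continuity identity. Given $\omega$-chains with $a_n \sqsubseteq_\alpha a_{n+1}$ and $b_n \sqsubseteq_\alpha b_{n+1}$ (so that $\bigsqcup_\alpha\{a_n\}$ and $\bigsqcup_\alpha\{b_n\}$ are defined, since Axiom~\ref{axiom1} forces $a_n =_\beta a_m$ for all $\beta<\alpha$), $\alpha$-monotonicity ensures $\min(a_n,b_n)$ is itself a $\sqsubseteq_\alpha$-chain, so $\bigsqcup_\alpha\{\min(a_n,b_n):n\geq 0\}$ is defined; writing $a = \bigsqcup_\alpha\{a_n\}$ and $b = \bigsqcup_\alpha\{b_n\}$, the goal is $\min(a,b) =_\alpha \bigsqcup_\alpha\{\min(a_n,b_n)\}$. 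After discharging the frozen coordinates as above, I may assume all $a_n, b_n \in [F_\alpha, T_\alpha]$, so each $\min(a_n,b_n)$ stays in $[F_\alpha,T_\alpha]$ and the explicit $\bigsqcup_\alpha$-formula applies to both sides: the value is $T_\alpha$ precisely when the underlying chain attains $T_\alpha$, is $F_\alpha$ precisely when the chain is constantly $F_\alpha$, and is $F_{\alpha+1}$ (order $>\alpha$) otherwise. The key structural observation is that, because each chain is $\sqsubseteq_\alpha$-increasing, its sequence of $=_\alpha$-classes is nondecreasing in the order $F_\alpha < m < T_\alpha$; hence there is a threshold beyond which $a_n$ (resp. $b_n$) has left the class $F_\alpha$ iff $a$ (resp. $b$) is not $F_\alpha$. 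I would then finish by a short case analysis on the pair of limit classes of $a$ and $b$: if both are $T_\alpha$ the chains jointly attain $T_\alpha$ and both sides equal $T_\alpha$; if either limit is $F_\alpha$ then $\min(a_n,b_n)=F_\alpha$ for all $n$ and both sides equal $F_\alpha$; and in every remaining case $\min(a,b)$ has order $>\alpha$ while, by the threshold observation, $\min(a_n,b_n)$ has order $>\alpha$ for all large $n$, so $\bigsqcup_\alpha\{\min(a_n,b_n)\} = F_{\alpha+1}$ and the two sides are $=_\alpha$.

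The hard part is exactly this last family of cases. The subtlety is that $\bigsqcup_\alpha$ is not the ordinary least upper bound $\bigvee$: it rounds every order-$>\alpha$ supremum up to the default value $F_{\alpha+1}$. One must check that $\min$ commutes with this rounding, and the only way the identity could fail is a spurious mismatch between an $F_\alpha$ outcome on one side and an order-$>\alpha$ outcome on the other. The threshold observation — that a $\sqsubseteq_\alpha$-increasing chain eventually escapes the class $F_\alpha$ unless its $\bigsqcup_\alpha$ equals $F_\alpha$ — is precisely what rules this out, and the coarseness of $=_\alpha$ (which identifies all order-$>\alpha$ values) then absorbs the rounding.
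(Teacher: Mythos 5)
Your proposal is correct and follows essentially the same route as the paper's proof: an elementary case analysis on the limit values of the two chains, using the explicit description of $\bigsqcup_\alpha$ in $V$ (attains $T_\alpha$ / constantly $F_\alpha$ / default $F_{\alpha+1}$) together with the observation that a $\sqsubseteq_\alpha$-increasing chain eventually stabilizes its $=_\alpha$-class. Your reorganization around the three-class structure of $\sqsubseteq_\alpha$ on $[F_\alpha,T_\alpha]$ and the "threshold" remark is just a cleaner packaging of the thresholds $n_0,n_1$ that the paper picks case by case.
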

\begin{proof}
We first show that $\wedge$ is $\alpha$-monotonic. Consider $(x_1,y_1) \sqsubseteq_\alpha (x_2,y_2)$.
It suffices to show that $x_1\wedge y_1 \sqsubseteq_\alpha x_2 \wedge y_2$. We perform
a case analysis on the value of $v = \mathit{min}\{x_1,y_1\}$. If $v < F_\alpha$
or $v > T_\alpha$ then, by the
definition of $\sqsubseteq_\alpha$, $\mathit{min}\{x_1,y_1\} = \mathit{min}\{x_2,y_2\}$ and
therefore $x_1 \wedge y_1 = x_2 \wedge y_2$. If $v = F_\alpha$ then
$F_\alpha \leq \mathit{min}\{x_2,y_2\} \leq T_\alpha$ and therefore
$x_1 \wedge y_1 \sqsubseteq_\alpha x_2 \wedge y_2$. If $v = T_\alpha$ then
$\mathit{min}\{x_2,y_2\}=T_\alpha$ and therefore $x_1 \wedge y_1 = x_2 \wedge y_2$. Finally,
if $F_\alpha < v < T_\alpha$ then $F_\alpha < \mathit{min}\{x_2,y_2\} \leq T_\alpha$
and therefore $x_1 \wedge y_1 \sqsubseteq_\alpha x_2 \wedge y_2$.

It remains to show that $\wedge$ is $\alpha$-continuous. Let $((x_n,y_n))_{n\geq 0}$ be a sequence such that
$(x_n,y_n) \sqsubseteq_\alpha (x_{n+1},y_{n+1})$ for all $n\geq 0$. We show that:
$$\wedge (\bigsqcup_\alpha\{(x_n,y_n): n\geq 0\}) =_\alpha \bigsqcup_\alpha \{x_n \wedge y_n: n\geq 0\}$$
or equivalently that:
$$(\bigsqcup_\alpha\{x_n:n \geq 0\}) \wedge (\bigsqcup_\alpha\{y_n:n \geq 0\}) =_\alpha \bigsqcup_\alpha \{x_n \wedge y_n: n\geq 0\}$$
Notice that the right hand side of the above $\alpha$-equality is well-defined since, by the
$\alpha$-monotonicity of $\wedge$, the sequence $((x_n \wedge y_n))_{n\geq 0}$ is an increasing
chain with respect to $\sqsubseteq_\alpha$.

Let $x = \bigsqcup_\alpha\{x_n:n \geq 0\}$ and $y = \bigsqcup_\alpha\{y_n:n \geq 0\}$.
We proceed by a case analysis on $v = \mathit{min}\{x,y\}$. Assume first that $v<F_\alpha$
and, without loss of generality, assume that $x = v$. By the definition of $\sqsubseteq_\alpha$
we get that for all $n \geq 0$, $x_n = v$ and therefore $x_n \wedge y_n = v$. Consequently,
$\bigsqcup_\alpha \{x_n \wedge y_n: n\geq 0\}= v$. The case $v> T_\alpha$ is similar.
Consider now the case $v = F_\alpha$ and, without loss of generality, assume that $x = F_\alpha$.
This implies that for all $n \geq 0$, $x_n = F_\alpha$.
Consequently, $x_n \wedge y_n = F_\alpha$ and therefore
$\bigsqcup_\alpha \{x_n \wedge y_n: n\geq 0\} = F_\alpha$.
Consider now the case $v = T_\alpha$ and, without loss of generality, assume that $x = T_\alpha$.
Then, $y \geq T_\alpha$.  Since $x = T_\alpha$, there exists some $n_0$ such that for all $n \geq n_0$,
$x_n = T_\alpha$. Moreover, since $y \geq T_\alpha$, there exists some $n_1$ such that for all $n \geq n_1$,
$y_n \geq T_\alpha$. Consequently, for all $n \geq \mathit{max}\{n_0,n_1\}$, $x_n \wedge y_n = T_\alpha$ and
therefore $\bigsqcup_\alpha \{x_n \wedge y_n: n\geq 0\} = T_\alpha$.
Finally, consider the case where $F_\alpha < v <T_\alpha$, and without
loss of generality assume that $x=v$. Since $x = \bigsqcup_\alpha\{x_n:n \geq 0\}$, we
have by the definition of $\bigsqcup_\alpha$ that $v=F_{\alpha+1}$.
Moreover, for all $n\geq 0$, $F_\alpha \leq x_n < T_\alpha$,
and there is some $n_0$ such that for all $n\geq n_0$, $F_\alpha < x_n < T_\alpha$.
Since $v=\mathit{min}\{x,y\}$ we have that $y > F_\alpha$, and therefore there exists
some $n_1$ such that for all $n \geq n_1$, $y_n > F_\alpha$.
Thus, for all $n \geq \mathit{max}\{n_0,n_1\}$ it is $F_\alpha < x_n \wedge y_n < T_\alpha$. Consequently,
$\bigsqcup_\alpha \{x_n \wedge y_n: n\geq 0\} = F_{\alpha+1}$.
\end{proof}
\begin{lemma}\label{mysim-is-a-continuous}
The negation operation $\mysim \, : V \rightarrow V$ is $\alpha$-continuous,
for all $\alpha<\Omega$.
\end{lemma}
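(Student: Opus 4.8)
The plan is to base the entire argument on a single structural fact about negation on $V$: it shifts the order up by exactly one. From Definition~\ref{interpretation} one reads off that $\order(\mysim v) = \order(v) + 1$ for every $v \in V$, under the convention $+\infty + 1 = +\infty$ (so that $\mysim 0 = 0$ is covered). In particular, $\order(v) \geq \alpha$ forces $\order(\mysim v) > \alpha$. I would record this observation first and appeal to it repeatedly, together with the explicit descriptions of $\sqsubseteq_\alpha$, $=_\alpha$ and $\bigsqcup_\alpha$ on $V$ from Subsection~\ref{model-of-truth-values}.

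First I would verify $\alpha$-monotonicity. Given $x \sqsubseteq_\alpha y$, either $x = y$, in which case $\mysim x = \mysim y$, or $\order(x), \order(y) \geq \alpha$; in the latter case the order-shift fact gives $\order(\mysim x), \order(\mysim y) > \alpha$, and since on $V$ one has $u =_\alpha w$ whenever both orders exceed $\alpha$, we obtain $\mysim x =_\alpha \mysim y$ and hence $\mysim x \sqsubseteq_\alpha \mysim y$. This also guarantees that for a $\sqsubseteq_\alpha$-chain $(x_n)$ the image $(\mysim x_n)$ is again a $\sqsubseteq_\alpha$-chain, so the supremum on the right-hand side of the continuity equation is well defined.

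For $\alpha$-continuity, fix a chain $x_n \sqsubseteq_\alpha x_{n+1}$ and set $x = \bigsqcup_\alpha\{x_n : n \geq 0\}$ and $w = \bigsqcup_\alpha\{\mysim x_n : n \geq 0\}$; the target is $\mysim x =_\alpha w$. The $x_n$ all lie in one set $(x_0]_\alpha$, and I would split on its shape. If $\order(x_0) < \alpha$ then $(x_0]_\alpha = \{x_0\}$, the chain is constant, and a direct check from the definition of $\bigsqcup_\alpha$ (using $\order(\mysim x_0) = \order(x_0) + 1 \leq \alpha$) gives $w = \bigsqcup_\alpha\{\mysim x_0\} = \mysim x_0 = \mysim x$. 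If instead $\order(x_0) \geq \alpha$, then every $\mysim x_n$ has order $> \alpha$, i.e.\ lies strictly between $F_\alpha$ and $T_\alpha$; I would then argue that their join $\bigvee\{\mysim x_n\}$ is still strictly between $F_\alpha$ and $T_\alpha$, so it is neither $F_\alpha$ nor $T_\alpha$, and the definition of $\bigsqcup_\alpha$ on $V$ collapses it to $w = F_{\alpha+1}$. Since $x$ itself has order $\geq \alpha$ (it equals $T_\alpha$, $F_\alpha$, or $F_{\alpha+1}$), the shift fact gives $\order(\mysim x) > \alpha$; as both $\mysim x$ and $w$ then have order $> \alpha$, we conclude $\mysim x =_\alpha w$.

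The one step that needs care is the claim in the second case that $\bigvee\{\mysim x_n\}$ has order $> \alpha$, i.e.\ that the supremum cannot slip up to $T_\alpha$ (nor down to $F_\alpha$). This I would settle by the order structure of the chain $V$: every element of order $> \alpha$ satisfies $F_\alpha < \cdot \leq T_{\alpha+1} < T_\alpha$, so any supremum of such elements again lies strictly between $F_\alpha$ and $T_\alpha$ and therefore has order $> \alpha$, triggering the ``otherwise'' clause of the definition of $\bigsqcup_\alpha$. Once this is in place, everything else is a routine appeal to the order-shift identity and to the explicit formulas for $\sqsubseteq_\alpha$, $=_\alpha$ and $\bigsqcup_\alpha$ on $V$.
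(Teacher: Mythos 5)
Your proof is correct, and it follows the same basic strategy as the paper's (a direct verification against the explicit descriptions of $\sqsubseteq_\alpha$, $=_\alpha$ and $\bigsqcup_\alpha$ on $V$, driven by a case split on $\order(x_0)$), but your execution is genuinely tighter. The paper's proof, after disposing of $\order(x_0)<\alpha$, splits $\order(x_0)=\alpha$ and $\order(x_0)>\alpha$ into further subcases according to the tail behaviour of the chain (whether it stabilises at $F_\alpha$, eventually jumps to $T_\alpha$, or eventually has order greater than $\alpha$), and checks the continuity equation in each. Your organizing observation --- that $\order(\mysim v)=\order(v)+1$, so $\order(v)\geq\alpha$ forces $\order(\mysim v)>\alpha$ --- makes all of that unnecessary: once $\order(x_0)\geq\alpha$, every $\mysim x_n$ has order exceeding $\alpha$, their join stays strictly between $F_\alpha$ and $T_\alpha$ (so $\bigsqcup_\alpha\{\mysim x_n : n\geq 0\}=F_{\alpha+1}$), and $\mysim x$ also has order exceeding $\alpha$ because $x\in\{T_\alpha,F_\alpha,F_{\alpha+1}\}$; the two sides are then $=_\alpha$-equivalent without any inspection of where the chain eventually settles. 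You correctly identified the one point needing care (that the supremum of elements of order greater than $\alpha$ cannot climb to $T_\alpha$, since $T_{\alpha+1}$ is already an upper bound), and your handling of the $\order(x_0)<\alpha$ case, including the possibility $\order(\mysim x_0)=\alpha$, matches the definition of $\bigsqcup_\alpha$ on $V$. What your route buys is brevity and robustness; what the paper's buys is an explicit picture of how such chains in $V$ can actually behave, which it reuses informally elsewhere. Both establish only the required $=_\alpha$-equality (not equality), which is essential: e.g.\ for the constant chain at $F_\alpha$ the two sides are $T_{\alpha+1}$ and $F_{\alpha+1}$.
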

\begin{proof}
We recall the definition of the negation function $\mysim \,: V \rightarrow V$ (see Definition~\ref{interpretation}):
\[
             \mysim (v) = \left\{
                             \begin{array}{ll}
                             T_{\alpha + 1} & \mbox{if $v = F_\alpha$}\\
                             F_{\alpha + 1} & \mbox{if $v = T_\alpha$}\\
                             0              & \mbox{if $v = 0$}
                             \end{array}
                      \right.
\]
It is straightforward to show that $\mysim$ is $\alpha$-monotonic. We show that
it is also $\alpha$-continuous. To see this, let $(x_n)_{n\geq 0}$ be an $\omega$-chain
of truth values such that $x_n \sqsubseteq_\alpha x_{n+1}$, for all $n$. We show that:
$$\mysim(\bigsqcup_\alpha\{x_n:n\geq 0\}) =_\alpha \bigsqcup_\alpha\{\mysim x_n : n\geq 0\}$$
We distinguish cases based on the value of $x_0$. If $\order(x_0) < \alpha$ then $x_n = x_0$ for
all $n$; the statement then obviously holds since both of its sides are equal to $\mysim x_0$.
If $\order(x_0) = \alpha$ then we distinguish two subcases: if $x_0 = T_\alpha$ then $x_n = T_\alpha$
for all $n$ and the result holds; if $x_0=F_\alpha$ then either $x_n=F_\alpha$ for all $n$, or the
chain elements become $T_\alpha$ from a point on, or after a point of the chain all elements have
order greater than $\alpha$; in all subcases, the above statement holds. Finally, if $\order(x_0) > \alpha$,
then we again distinguish two subcases: either all elements of the chain have order greater than $\alpha$
or after a point in the chain all elements become equal to $T_\alpha$; in both cases the statement holds.
\end{proof}

The above discussion leads to the following lemma:
\begin{lemma}\label{tp-new-proof}
Let $P$ be a program. Then, for all countable ordinals $\alpha < \Omega$, $T_P$ is $\alpha$-continuous.
\end{lemma}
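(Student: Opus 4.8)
The plan is to realize $T_P$ as a function of exactly the shape handled by Corollary~\ref{a-continuous-corollary}, taking the base lattice to be the model $V$ of truth values and the index set to be $Z = B_P$, so that the standard model is $V^{B_P} = L^Z$ with $L = V$. Since the preceding lemma shows that $V$ satisfies Axioms 6 and 7, Lemma~\ref{product-axiom-7} guarantees that $V^{B_P}$ does as well, which is precisely the hypothesis required by the corollary.

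Next I would identify the admissible building blocks. By the two lemmas just proved, the negation map $\mysim : V \to V$ (that is, $V^1 \to V$) and the conjunction map $\wedge : V^2 \to V$ are $\alpha$-continuous for every $\alpha < \Omega$; these play the role of the functions $f_j$ in the corollary. The values $T_0 = I(\mbox{\tt true})$ and $F_0 = I(\mbox{\tt false})$ serve as the constants in $L$, and the projections $\mathit{pr}_q : V^{B_P} \to V$ are $\alpha$-continuous by Lemma~\ref{composition-projection-lemma}.

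The core step is to write each component function $(T_P)_p$ in the permitted form. For a ground rule $p \leftarrow l_1,\ldots,l_n$ the body evaluation $I \mapsto \mathit{min}\{I(l_1),\ldots,I(l_n)\}$ is obtained by applying the binary conjunction $\wedge$ successively $n-1$ times to the literal evaluations $I \mapsto I(l_i)$; since each body is finite this is a finite composition. Each literal evaluation is itself admissible: a positive literal $q$ gives the projection $\mathit{pr}_q$, a negative literal $\mysim q$ gives $\mysim \circ\, \mathit{pr}_q$, and the constants $\mbox{\tt true}, \mbox{\tt false}$ give the constant maps $T_0, F_0$. Hence, by the composition clause of Lemma~\ref{composition-projection-lemma}, each body evaluation is $\alpha$-continuous. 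Finally, $(T_P)_p(I) = \bigvee\{\,I(l_1,\ldots,l_n) : (p \leftarrow l_1,\ldots,l_n)\in P\,\}$ is the supremum of these body evaluations over the rules with head $p$, which is exactly the $\bigvee$ operation allowed in Corollary~\ref{a-continuous-corollary}. Applying the corollary then yields that $T_P$ is $\alpha$-continuous.

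The one point that needs care — and the only place where the argument is not purely formal — is that the set of rules with head $p$ may be (countably) infinite, so the outer $\bigvee$ is an infinitary supremum. This is legitimate precisely because Proposition~\ref{proposition-axiom-7}, and through it Corollary~\ref{a-continuous-corollary}, permits suprema over an arbitrary index set $J$, which in turn relies on $V^{B_P}$ satisfying Axiom 7. I would therefore invoke Axiom 7 explicitly at this step rather than treating the supremum as a mere finite join.
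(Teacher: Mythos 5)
Your proof is correct and follows essentially the same route as the paper: both express each component function $(T_P)_p$ as a supremum of body evaluations built from projections, constants, $\wedge$ and $\mysim$ via composition, and then invoke Corollary~\ref{a-continuous-corollary} together with the fact that $V$ (hence $V^{B_P}$) satisfies Axioms 6 and 7. Your explicit remark that the outer supremum may be countably infinite and is covered by the arbitrary index set $J$ in Proposition~\ref{proposition-axiom-7} is a useful elaboration of a point the paper leaves implicit.
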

\begin{proof}
Consider the immediate consequence operator $T_P$ as defined in Section~\ref{infinite-valued-section}.
Then, for every propositional atom $p$ of program $P$, the component function of $T_P$ corresponding to $p$
is constructed as the supremum of $\alpha$-continuous functions (since conjunction and negation are
$\alpha$-continuous and since composition preserves $\alpha$-continuity). The result is therefore a
direct consequence of Corollary~\ref{a-continuous-corollary}.
\end{proof}

As we have seen earlier in the paper, the set of infinite-valued interpretations
of a logic program satisfies the axioms of Subsection~\ref{axioms-subsection}. By the above lemma
and Theorem~\ref{main-fixed-point-theorem} we immediately get that $T_P$ has a least fixed point.
In this way we prove Theorem~\ref{least} (Corollary 7.5, page 460 of~\cite{RW05}) in a much more
structured and less ad-hoc way. Actually, the proof of Lemma~\ref{tp-new-proof} suggests that the
least fixed point result also holds if we generalize the class of logic programs we consider by allowing
the bodies of program rules to be arbitrary formulas involving negation, conjunction and any other
$\alpha$-continuous function. In this way we actually obtain a much
more general result than the one established in~\cite{RW05}.

\section{Related Work}\label{related-work-section}
The results reported in this paper are connected to previous work on the
development of an abstract fixed point theory for non-monotonic operators.
Pioneering in this respect is the work of Fitting~\cite{Fit02} who used the
abstract framework of lattices and operators on lattices in order to characterize
all major semantic approaches of logic programming. Despite its abstract nature,
Fitting's work is centered around the theory of logic programming.

The next step in this line of research is reported in~\cite{DMT00,DMT04} where
the authors proposed an abstract fixed point theory whose purpose is to be more
widely applicable than just in logic programming. In order for this to be achieved,
the work in~\cite{DMT00,DMT04} considers an arbitrary complete lattice $L$ and
also arbitrary (ie. not necessarily monotonic) operators $f:L\rightarrow L$.
Instead of studying $L$ directly, one can study the product lattice $L^2$. The
intuition here is that elements of $L^2$ can be considered as approximations
to the elements of $L$. More specifically, one can study the fixed points of $f$ by investigating
the fixed points of its so-called {\em approximation operators}: roughly speaking,
an approximating operator of $f$ is a function $A_f:L^2\rightarrow L^2$
whose fixed points approximate the fixed points of $f$. One characteristic of the
approach developed in~\cite{DMT00} is that in order to study the fixed points of
the operator $f$, one must first {\em choose} in some way an appropriate
approximating operator for $f$ (out of possibly many available). This last point
leads to a main difference between approximation theory and our work. In our
setting, given a lattice $(L,\sqsubseteq)$ that obeys the axioms of
Subsection~\ref{axioms-subsection} and an operator $f$ that preserves the relations
$\sqsubseteq_\alpha$, Theorem~\ref{main-fixed-point-theorem} {\em guarantees} that $f$
has a least fixed point. A second important difference between the present work and the
one developed in~\cite{DMT00,DMT04} is that we are seeking the unique {\em least}
fixed point of $f$ with respect to the ordering relation $\sqsubseteq$. On the other hand,
the work in~\cite{DMT00,DMT04} focuses attention on fixed points that are {\em minimal}
with respect to the corresponding ordering relation (see for example Section 4 of~\cite{DMT00}
and in particular Proposition 24 of the aforementioned article). It would be interesting,
but certainly non-trivial, to find underlying relationships between the present work
and the one reported in~\cite{DMT00,DMT04}.

A more recent work that is also connected to our approach is reported in~\cite{VGD06}.
In that paper the authors consider the case of {\em product lattices} as-well-as
{\em stratifiable} operators on such lattices. In our terminology, an operator
$f:L\rightarrow L$ is stratifiable iff for all $x,y\in L$ and for all $\alpha<\lord$,
if $x=_\alpha y$ then $f(x)=_\alpha f(y)$. The authors demonstrate~\cite{VGD06}[Theorem 3.5]
that for every stratifiable operator $f$ it holds that every fixed point of $f$ can be
constructed using the fixed points of a family of operators called the {\em components} of $f$
(intuitively, to every sublattice $L_i$ of the product lattice $L$ there corresponds
a subfamily of the components). However, the least fixed point of~\cite{VGD06}[Theorem 3.5]
is with respect to the pointwise partial order that is defined on the product lattice $L$
while in our case $\sqsubseteq$ is not necessarily pointwise. Moreover, our construction
does not only apply to product lattices but to all lattices that satisfy the axioms
of Subsection~\ref{axioms-subsection}. For example, the non-standard product model
of Subsection~\ref{non-standard-product-model} does not fall within the scope of the
results developed in~\cite{VGD06}.

In general, we feel that abstract fixed point theory for non-monotonic functions is
an evolving and fruitful area of research that still has a lot to offer.

\section{Conclusions}\label{conclusions-section}
We have presented a novel fixed point theorem (Theorem~\ref{main-fixed-point-theorem})
for a class of non-monotonic functions. The aforementioned theorem gives a direct and
elegant proof of the least fixed point result that was obtained in~\cite{RW05} for
the case of normal logic programs. Actually, as noted at the end of Section~\ref{a-continuous-functions},
the proof we obtain applies to a significantly broader class of logic programs than
the one considered in~\cite{RW05}. Moreover, we believe that Theorem~\ref{main-fixed-point-theorem}
may have applications in other classes of logic programs. One such case is extensional higher-order
logic programming~\cite{CHRW13}, which enhances classical logic programming with higher-order
predicates. We have recently used the main results of the present paper in order to obtain a minimum
model semantics for extensional higher-order logic programming extended with negation~\cite{CER14}.
Other possible areas of logic programming that can benefit from the proposed fixed point theorem are
disjunctive logic programming with negation~\cite{CPRW07} and logic programming
with preferences~\cite{RT13}.

Apart from logic programming, it would be interesting to investigate other
applications of the derived theorem. One natural candidate is the theory of weighted
automata and weighted languages. Indeed, the behavior of a weighted automaton is given
by a function mapping words into a weight structure which is
often a complete lattice (see~\cite{DKV09} for a comprehensive
treatment of weighted automata).  When the weighted automaton is a
``boolean automaton'', then it becomes natural to use complete lattices
enriched with preorderings $\sqsubseteq_\alpha$ where $\alpha$ ranges over
all ordinals less than a given nonzero ordinal $\kappa$.

\vspace{0.5cm}
\noindent
{\bf Acknowledgments}: We would like to thank Angelos Charalambidis and Christos Nomikos for
their comments on previous versions of this paper. We would also like to thank an anonymous reviewer
for providing insightful comments on our original submission.

\vspace{-0.1cm}

\bibliographystyle{alpha}


\end{document}
